\documentclass[11pt]{amsart}
\usepackage{fullpage}
\usepackage[foot]{amsaddr}
\usepackage{microtype}
\usepackage[OT1]{fontenc}
\usepackage{eulervm}
\usepackage[tt=false]{libertine} 
\usepackage{bbold}
\usepackage{amsmath}
\usepackage{amssymb}
\usepackage{amsthm}
\usepackage{thmtools} 
\usepackage[linesnumbered,boxed,ruled,vlined]{algorithm2e}
\usepackage{algpseudocode}
\usepackage{enumitem}
\usepackage{multirow}
\usepackage{bm}
\usepackage{xifthen}
\usepackage{xspace}
\usepackage{tikz}
\usetikzlibrary{arrows.meta,positioning,calc,fit,backgrounds}

\usepackage[margin=1cm]{caption} 
\usepackage{subfig}

\usepackage[thinlines]{easytable}

\usepackage[bookmarks=true,hypertexnames=false,pagebackref]{hyperref}
\hypersetup{colorlinks=true, citecolor=blue, linkcolor=red, urlcolor=blue}

\usepackage{pgfplots}
\pgfplotsset{compat=1.16}
\usepackage{tikz}
\usetikzlibrary{arrows,arrows.meta,backgrounds,calc,fit,decorations.pathreplacing,decorations.markings,shapes.geometric}

\tikzstyle{internal} = [draw, fill, shape=circle]
\tikzstyle{external} = [shape=circle]
\tikzstyle{square}   = [draw, fill, rectangle]
\tikzstyle{triangle} = [draw, fill, regular polygon, regular polygon sides=3, inner sep=3pt]
\tikzstyle{pentagon} = [draw, fill, regular polygon, regular polygon sides=5, inner sep=2pt, minimum size=14pt]
\tikzset{every fit/.append style=text badly centered}

\usetikzlibrary{positioning,chains,fit,shapes,calc}
\usetikzlibrary{trees}
\usetikzlibrary{decorations.pathreplacing}
\usetikzlibrary{decorations.pathmorphing}
\usetikzlibrary{decorations.markings}
\tikzset{>=latex} 

\usepackage{ifthen}

\usepackage{cleveref}

\usepackage[textsize=tiny]{todonotes}

\usepackage[normalem]{ulem}

\usepackage{mleftright}

\usepackage{cool}
\Style{DSymb={\mathrm d},DShorten=true,IntegrateDifferentialDSymb=\mathrm{d}}

\renewcommand{\Pr}{\mathop{\mathrm{Pr}}\nolimits}

\def\*#1{\mathbf{#1}}
\def\+#1{\mathcal{#1}}
\def\-#1{\mathrm{#1}}
\def\=#1{\mathbb{#1}}
\def\^#1{\mathbb{#1}}

\newcommand{\poly}[1]{\ensuremath{\mathop{\mathrm{poly}}\inp{#1}}}

\newcommand{\norm}[1]{\ensuremath{\left\lVert #1\right\rVert}}

\newcommand{\eps}{\varepsilon}

\newcommand{\dist}{\operatorname{dist}}
\newcommand{\diag}{\operatorname{diag}}

\newtheorem{theorem}{Theorem}

\newtheorem{lemma}[theorem]{Lemma}

\newtheorem{proposition}[theorem]{Proposition}

\theoremstyle{definition}

\newtheorem{definition}[theorem]{Definition}

\theoremstyle{remark}

\crefname{theorem}{Theorem}{Theorems}
\crefname{observation}{Observation}{Observations}
\crefname{claim}{Claim}{Claims}
\crefname{condition}{Condition}{Conditions}
\crefname{algorithm}{Algorithm}{Algorithms}
\crefname{property}{Property}{Properties}
\crefname{example}{Example}{Examples}
\crefname{fact}{Fact}{Facts}
\crefname{lemma}{Lemma}{Lemmas}
\crefname{corollary}{Corollary}{Corollaries}
\crefname{definition}{Definition}{Definitions}
\crefname{remark}{Remark}{Remarks}
\crefname{proposition}{Proposition}{Propositions}
\crefname{equation}{equation}{equations}
\crefname{enumi}{Case}{Case}
\creflabelformat{enumi}{(#2#1#3)}

\definecolor{HGcolor}{RGB}{255,50,50}

\makeatletter
\def\prob#1#2#3{\goodbreak\begin{list}{}{\labelwidth\z@ \itemindent-\leftmargin
      \itemsep\z@  \topsep6\p@\@plus6\p@
      \let\makelabel\descriptionlabel}
  \item[\textbf{Name}]#1
  \item[\textbf{Instance}]#2
  \item[\textbf{Output}]#3
  \end{list}}
\makeatother

\makeatletter
\providecommand\@dotsep{5}
\def\listtodoname{Todo list}
\def\listoftodos{\@starttoc{tdo}\listtodoname}
\makeatother

\newcommand{\TV}{d_{\mathrm{TV}}}
\newcommand{\R}{\mathbb{R}}
\newcommand{\F}{\mathbb{F}}

\newcommand{\ket}[1]{\lvert #1\rangle}
\newcommand{\bra}[1]{\langle #1\rvert}
\newcommand{\one}{\mathbf{1}}
\newcommand{\wt}{\operatorname{wt}}
\newcommand{\supp}{\operatorname{supp}}
\newcommand{\Cay}{\operatorname{Cay}}
\newcommand{\Prb}{\mathop{\mathrm{Pr}}}

\usepackage{nicefrac,comment}

\usepackage{silence}
\newboolean{doubleblind}
\setboolean{doubleblind}{false}

\begin{document}

\title{Classical and quantum spectral density estimation under local graph access}

\author{Rong-Hua Li}
\author{Meihao Liao}
\author{Yichun Yang}
\address[Rong-Hua Li]{School of Computer Science, Beijing Institute of Technology, Beijing, China}
\address[Meihao Liao]{School of Computer Science, Beijing Institute of Technology, Beijing, China}
\address[Yichun Yang]{School of Computer Science, Beijing Institute of Technology, Beijing, China}

\maketitle

\begin{abstract}
We study spectral density estimation for the normalized adjacency matrix of
an unweighted graph under local access model. Previously, Cohen-Steiner et al. [KDD 2018] proposed an algorithm for $\varepsilon$-approximate spectral density estimation in the Wasserstein-1 distance, using $2^{O(1/\varepsilon)}$ local queries to the graph.  In this paper, we prove that every constant-success estimator with
Wasserstein--$1$ error at most $\eps$ requires
$2^{\Omega(1/\eps)}$ queries, showing that the Cohen-Steiner algorithm is optimal up to constant in the exponent. This resolves the open problem left by previous researches Jin et al. [COLT 2023] and Peng et al. [COLT 2026].

We then turn to quantum local access model. We give
an $\widetilde O(\eps^{-3})$-query algorithm estimating the spectral density
with Wasserstein-1 error at most $\eps$. Finally, we prove a
$\widetilde\Omega(\eps^{-4/3})$ quantum lower bound when the graph is
sufficiently large.
As a result, quantum local access model changes the dependence on $\eps$ from exponential to polynomial.
\end{abstract}

\section{Introduction}

The empirical spectral distribution of a large graph encodes global
information about expansion, mixing, geometry, and the prevalence of closed
walks.  Computing it by diagonalizing the normalized adjacency matrix is
prohibitively expensive when the graph is too large to read.  This motivates
\emph{spectral density estimation}: output a succinct probability measure
that approximates the distribution on the eigenvalues while using
only local graph queries.  Spectral densities have found applications in various fields such as characterizing the structure of complex networks \cite{FarkasEtAl2001,EikmeierGleich2017,VillegasEtAl2023}, numerical linear algebra \cite{LinSaadYang2016,UbaruChenSaad2017} and (graph) machine learning
\cite{DefferrardEtAl2016,KipfWelling2017,GhorbaniEtAl2019,MahoneyMartin2019}.

Let $G=(V,E)$ be a finite simple unweighted graph, let $n=|V|$ and $m=|E|$, $A_G$ be the adjacency matrix and $D_G$ be the diagonal degree matrix, we write the normalized adjacency matrix as
\[
  N_G=D_G^{-1/2}A_GD_G^{-1/2}.
\]
Given the eigenvalues $\lambda_1 \ge \lambda_2\ge \ldots\ge \lambda_n$ of $N_G$, the spectral density is defined as
\[
  \rho_G=\frac1n\sum_{j=1}^n\delta_{\lambda_j}.
\]
We denote $\delta_{\lambda}$ to be the Dirac delta function centered at $\lambda$. We measure the computation error by the Wasserstein--$1$ distance $W_1$. That is, we want to output a probability density function $\tilde{\rho}$ such that $W_1(\tilde{\rho}, \rho_G) \leq \eps$. Equivalently \cite{CohenSteinerEtAl2018}, the goal is to compute (succinct representations of) estimates $\tilde{\lambda}_1 \ge \tilde{\lambda}_2\ge \ldots\ge \tilde{\lambda}_n$ such that 
\[\frac{1}{n}\sum_{i=1}^n |\lambda_i - \tilde{\lambda}_i| \leq \eps.\]

We note that the $W_1$ metric is weak enough to permit a succinct $O(1/\eps)$-point output \cite{CohenSteinerEtAl2018}, but strong enough to
control the average error of each eigenvalue estimation.

\subsection{Classical local access.}
Under classical local graph access model \cite{bender2002testing}, the algorithm is allowed to the following operations in $O(1)$ time: (i) sample a uniform vertex, (ii) query the degree $d_v$ of a vertex $v$, (iii) query the $i$-th neighbor of vertex $v$ with $i\in [d_v]$. In this direction,
Cohen-Steiner, Kong, Sohler, and Valiant~\cite{CohenSteinerEtAl2018}
gave an algorithm with
$2^{O(1/\eps)}$ query complexity via sampling random walks from uniformly random nodes. Subsequent lines of researches developed sublinear and
near-linear tradeoffs in this direction
\cite{BravermanKrishnanMusco2022,JinEtAl2023,JinEtAl2024,bhattacharjee2025improved,musco2025sharper,PengEtAl2026}. Specifically, \cite{BravermanKrishnanMusco2022} gave a $O(n\cdot \eps^{-7})$ sublinear time algorithm that $\eps$-approximates the spectral density. Their algorithm was later improved to $O(n\cdot \eps^{-2})$ query complexity and $O(n\cdot \eps^{-3})$ time by \cite{JinEtAl2024}. Comparison with Cohen-Steiner et al. \cite{CohenSteinerEtAl2018}, these algorithms denpends only polynomial on $\eps^{-1}$, but at the cost of linearly depends on $n$. Besides, there are also algorithms that use $O(\eps^{-1})$ queries to the matrix-vector multiplication to approximate the spectral density \cite{BravermanKrishnanMusco2022,bhattacharjee2025improved,musco2025sharper}. However, the query complexity of matrix-vector multiplication is $O(m)$ in the local access model.

From the lower bound perspective, the key question is to ask whether the exponential dependence on $\eps^{-1}$ is necessary. In this direction, Jin, Musco, Sidford, and Singh~\cite{JinEtAl2023} showed that no algorithm can compute an $\eps$-approximation of the spectral density under $W_1$ with constant success probability even when given $2^{\Omega(1/\eps)}$ random walks of length $2^{\Omega(1/\eps)}$ started from uniformly random nodes. However, their lower bound is only constraint to random walk sampling from uniformly random nodes, and is constraint to weighted graphs. Later, Peng, Wang, Yang, and Yang~\cite{PengEtAl2026} extend to a $2^{\Omega(1/\eps^{1/6})}$ lower bound for unweighted graphs when sampling random walks from uniformly random nodes. It was left open whether one can extend their lower bounds to the general local access model \cite{bender2002testing}.

Our first result resolves this question. So the algorithm proposed by Cohen-Steiner et al. \cite{CohenSteinerEtAl2018} is already optimal up to constant in the exponent under local access model when $\eps$ is small constant. Table~\ref{tab:classical-related} summarizes the main classical upper and lower bounds, including our result.

\begin{theorem}
\label{thm:intro-classical}
There are universal constants $c,\eps_0>0$ such that, for every
$0<\eps<\eps_0$, any randomized algorithm under local access model 
that outputs an $\eps$-approximation of spectral density with probability at least
$2/3$ on simple unweighted graph requires at least $2^{c/\eps}$
queries in the worst case. 
\end{theorem}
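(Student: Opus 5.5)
We prove the bound by constructing two distributions over simple unweighted graphs, $\mathcal{D}_1$ and $\mathcal{D}_2$, and applying Yao's principle. The two distributions will satisfy three properties. First, the spectral densities of graphs drawn from them are separated in $W_1$ by more than $2\eps$ with high probability. Second, any local-access algorithm making fewer than $2^{c/\eps}$ queries sees statistically indistinguishable transcripts under the two distributions. Third, as a consequence, no algorithm can output an $\eps$-approximation with probability $2/3$ on both.

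The natural template comes from the moment-matching phenomenon behind \cite{JinEtAl2023}. There exist two probability measures $\mu_1,\mu_2$ on $[-1,1]$ whose first $k=\Theta(1/\eps)$ moments agree but with $W_1(\mu_1,\mu_2)\ge C/k$. This follows from Jackson-type duality: the Lipschitz functions that separate them cannot be approximated by degree-$k$ polynomials to accuracy better than $1/k$. We realize these measures by graphs whose local structure up to radius $k$ is identical.

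Concretely, we take graphs that are locally tree-like, or disjoint unions of cycles and paths, or symmetric spider/path gadgets. The cleanest realization uses disjoint unions of cycles of different lengths. The spectrum of a cycle $C_L$ under $N_G$ is $\{\cos(2\pi j/L)\}$, and every cycle of length $L>2k$ has the same radius-$k$ neighborhood, namely a path, at every vertex. So the two distributions are:
\begin{itemize}
\item $\mathcal{D}_1$: a union of cycles of lengths $L_1$ in $(2k, 4k]$;
\item $\mathcal{D}_2$: a union of cycles of lengths $L_2 \gg k$.
\end{itemize}
Cycles of length $L$ have an empirical spectral measure that is a discretization of the arcsine law at scale $1/L$. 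Its $W_1$ distance from the arcsine law is $\Theta(1/L)$, and with $L=\Theta(k)$ versus $L\to\infty$ this gives separation $\Theta(1/k)=\Theta(\eps)$ for a suitable constant. This must be checked carefully: the discrete measure of equally spaced angles, pushed through $\cos$, differs from the arcsine law by order $1/L$ in $W_1$. We then choose constants so this exceeds $2\eps$.

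The indistinguishability step is where general local access, as opposed to random-walk-only access, is the main obstacle. We use the following argument. Vertex labels and neighbor orderings are randomly permuted, and the graph has $n \gg 2^{2c/\eps}$ vertices. An algorithm making $q$ queries explores a set of connected components, each explored from uniformly sampled seeds. Within a component it discovers a connected subgraph of at most $q$ vertices, and it can learn that a component is a cycle only by closing it, which requires exploring all $L$ of its vertices.

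With the simple cycle construction this kills the bound: $q\approx L=O(1/\eps)$ queries suffice to close a cycle. Cycles alone are therefore insufficient, and we need components whose global closure requires $2^{\Omega(1/\eps)}$ exploration. The fix is to replace cycles by graphs of high girth and exponential size relative to $k$ with the same moment-matching effect. We use random $d$-regular-like gadgets or lifts: two families of graphs whose radius-$r$ neighborhoods are identical trees for $r=\Theta(1/\eps)$ in the unnormalized sense, yet whose spectral measures differ by $\Omega(\eps)$. A natural choice is to take the cycle-length difference in a tree-like expanding structure. For instance, attach to each cycle vertex a binary tree of depth $k$, or use the paths-with-pendant-gadgets construction of \cite{PengEtAl2026} adapted so that the distinguishing cycle is hidden among $2^{\Theta(k)}$ tree vertices.

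The normalized-adjacency weighting on trees with $\approx 2^{k}$ vertices per cycle vertex means a uniformly sampled vertex lands near the cycle with probability $2^{-\Omega(k)}$. Walking from a tree vertex to the cycle and around it also requires identifying a root among exponentially many branches. The hardest step is then to prove simultaneously two things. First, the mass of the spectral difference, coming from the cycle part, remains $\Omega(\eps)$ in $W_1$ after being diluted by the tree mass; this likely needs the tree eigenvectors to concentrate the cycle spectrum via a Schur-complement or transfer-matrix argument. Second, the transcript distance is at most $q\cdot 2^{-\Omega(k)}$, which we prove by coupling the explorations until the algorithm first touches a cycle vertex and bounding that probability by a union bound.
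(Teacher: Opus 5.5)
There is a genuine gap. You never fix a hard instance, and the concrete repair you propose for the cycle construction fails. Your two ``hardest steps'' are incompatible rather than merely difficult. Take $G$ to be a cycle on roots $r_1,\dots,r_L$ with a depth-$k$ binary tree hung at each root, and let $F$ be the same graph with the cycle edges removed. Only entries of $N$ in rows or columns indexed by roots change, because root degrees change and cycle edges appear. Hence $\operatorname{rank}(N_G-N_F)\le 2L$. By eigenvalue interlacing the counting functions then differ by at most $2L$, so $\sup_x|F_{\rho_G}(x)-F_{\rho_F}(x)|\le 2L/n$, and \Cref{prop:kr-duality} gives $W_1(\rho_G,\rho_F)\le 4L/n=O(2^{-k})$. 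Since $F$ is a disjoint union of identical trees, $\rho_F$ does not depend on the cycle lengths. So your two families are $O(2^{-k})$-close in $W_1$, far below $\eps=\Theta(1/k)$. The same rank argument shows that whenever the two normalized adjacency matrices differ only in rows and columns indexed by a $2^{-\Omega(k)}$ fraction of the vertices, the $W_1$ gap is $2^{-\Omega(k)}$. Rarity of the distinguishing region therefore cannot be the hiding mechanism. The hiding also fails on its own terms. At a tree vertex of height $h$, a non-backtracking walk into either child hits a leaf after exactly $h-1$ steps, while one into the parent needs at least $h+1$. So the parent is found with $O(k)$ queries per level, a root (degree $4$) with $O(k^2)$, and the cycle closes in $O(L)$ more.

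The missing idea is to make the hidden structure involve every vertex while staying locally invisible at every vertex. The paper does this with vertex-transitive graphs of growing degree $d=\Theta(1/\eps)$. It takes $G_C=\Cay(\F_2^d/C,\{e_i+C\})$ for a uniformly random self-dual code $C$, Type~II versus Type~I, the latter conditioned on having no weight-two words. Characters give the spectrum exactly as $\{1-2\wt(y)/d:y\in C\}$ on $n=2^{d/2}$ vertices. Half of the Type~I codewords have weight $\equiv 2\pmod 4$, so the $1$-Lipschitz test $\dist(\cdot,\Lambda_d)$ separates the two densities by $2/d$. That gap is carried by the whole spectrum, not by a diluted piece. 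Indistinguishability comes from coupling an arbitrary adaptive execution to the code-independent shadow graph $\Cay(\F_2^d/\langle\one\rangle,\{e_i+\langle\one\rangle\})$. The transcripts can differ only if two visited shadow vertices differ by an element of the random Lagrangian $C/\langle\one\rangle$. By the Lagrangian counts, this has probability at most $2^{2-h}$ for each fixed nonzero difference, where $h=d/2-1$. A union bound over pairs gives $\TV\le 3(Q+1)^2 2^{-h}$, hence $Q=2^{\Omega(1/\eps)}$. Your mention of ``random $d$-regular-like gadgets or lifts'' points in this direction. But without a specific family whose spectral gap is computable and whose hidden relations are hit with probability $2^{-\Omega(d)}$ per explored pair, the argument does not close.
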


\begin{table}[ht]
    \centering
    \caption{Classical algorithms and lower bounds for $\eps$-approximate spectral density estimation under $W_1$. The algorithms are assumed with $\eps=\tilde{\Omega}(1/\sqrt{n})$. We denote matvec to be the matrix--vector multiplication, $\sigma_l$ to be the $l$-th largest singular value of $N_G$. Non-adaptive RW refers to random walks started from uniformly random nodes.}
    \label{tab:classical-related}
    \renewcommand{\arraystretch}{1.3}
    \small
    \begin{tabular}{@{}l l p{2.2 cm} p{2.5cm} c p{3.8cm}@{}}
    \hline
     & Algorithm & Complexity & Query model & Error & Graph Type \\
    \hline
    Local
      & \cite{CohenSteinerEtAl2018}
      & $2^{O(1/\eps)}$
      & local access
      & $W_1\le \eps$
      & Unweighted \\
    \hline
    \multirow{2}{*}{Sublinear}
      & \cite{BravermanKrishnanMusco2022}
      & $O(n\cdot\eps^{-7})$
      & local access
      & $W_1\le \eps$
      & Unweighted \\
      & \cite{JinEtAl2024}
      & $O(n\cdot\eps^{-3})$
      & local access
      & $W_1\le \eps$
      & Weighted or Unweighted\\
    \hline
    \multirow{2}{*}{Linear}
      & \cite{BravermanKrishnanMusco2022,musco2025sharper}
      & $O(m\cdot \eps^{-1})$
      & matvec
      & $W_1\le \eps$
      & Weighted or Unweighted \\
      & \cite{bhattacharjee2025improved}
      & $\widetilde{O}(m\cdot (\eps^{-1}+l))$
      & matvec
      & $W_1\le \eps\sigma_{l+1}$
      & Weighted or Unweighted \\
    \hline
    \multirow{3}{*}{Lower bound}
      & \cite{JinEtAl2023}
      & $2^{\Omega(1/\eps)}$
      & non-adaptive RW
      & $W_1\le \eps$
      & Weighted \\
      & \cite{PengEtAl2026}
      & $2^{\Omega(1/\eps^{1/6})}$
      & non-adaptive RW
      & $W_1\le \eps$
      & Unweighted \\
      & Ours
      & $2^{\Omega(1/\eps)}$
      & local access
      & $W_1\le \eps$
      & Unweighted \\
    \hline
    \end{tabular}
    \end{table}

\subsection{Quantum local access.}
We next extend the classical local access model to the quantum setting.
In the quantum model, vertices are represented by $\lceil\log n\rceil$ bits. The following classical graph functions are exposed as reversible XOR oracles and we assume they can be implemented in $O(1)$ time:
\begin{align*}
 O_{\rm deg}\ket{v,z}&=\ket{v,z\oplus d_v},\\
 O_{\rm nbr}\ket{v,i,z}&=\ket{v,i,z\oplus {\rm nbr}(v,i)},\\
 O_{\rm pair}\ket{u,v,z}&=\ket{u,v,z\oplus A_{uv}}.
\end{align*}
Here ${\rm nbr}(v,i)$ is the neighbor occupying port $i\in[d_v]$ in an
arbitrary but fixed adjacency-list ordering.  This model is typical in previous researches ~\cite{apers2022quantum,JiangPeng2026}; see \Cref{sec:prelim} for further
details.  We obtain the following result, which shows the spectral density can be estimated with query complexity only $\poly{1/\eps}$.
\begin{theorem}
\label{thm:intro-quantum-upper}
In the quantum local access model, there is an algorithm with $\widetilde O\!\left(\eps^{-3}\right)$ quantum query complexity and additional $\widetilde O\!\left(\eps^{-2}\right)$ classical runtime that $\eps$-approximates the spectral density under $W_1$ with probability at least $2/3$, where the $\widetilde O(\cdot )$ only hides the $\poly{\log \frac{1}{\eps}}$ factor.
\end{theorem}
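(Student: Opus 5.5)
The plan follows the Cohen-Steiner et al.\ moment-matching framework, but replaces classical random-walk sampling with quantum amplitude estimation. The spectral density $\rho_G$ is determined, up to $W_1$ error $O(\eps)$, by its first $k=\Theta(1/\eps)$ Chebyshev moments $\mu_\ell=\frac1n\operatorname{tr} T_\ell(N_G)$. This is the Jackson-kernel argument of \cite{CohenSteinerEtAl2018,BravermanKrishnanMusco2022}: estimates $\tilde\mu_\ell$ with $\sum_{\ell\le k}|\tilde\mu_\ell-\mu_\ell|/\ell\lesssim\eps$ suffice, and moment-to-density recovery takes $\widetilde O(\eps^{-2})$ classical time (for instance, a linear program over an $O(1/\eps)$-point grid, or the explicit damped Chebyshev reconstruction). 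It is therefore enough to estimate each $\mu_\ell$, for $\ell\le k$, to additive accuracy roughly $\eps$ (or, using the $1/\ell$ weighting, $\eps\ell/\log k$) with failure probability $O(1/k)$. Over all $\ell$ this should cost $\widetilde O(\eps^{-2})$ queries per moment, giving $\widetilde O(\eps^{-3})$ in total.

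To estimate a single moment, I will write $\frac1n\operatorname{tr}(N_G^\ell)=\mathbb E_{v}\langle v|N_G^\ell|v\rangle$ for a uniform vertex $v$. A block-encoding of $N_G$ (more conveniently, of the walk operator $P=D^{-1}A$, which is similar to $N_G$ and hence has the same traces) comes from the Szegedy/quantum-walk construction. The states $|v\rangle\sum_{i\le d_v}|{\rm nbr}(v,i)\rangle/\sqrt{d_v}$ are prepared with $O(1)$ calls to $O_{\rm deg}$ and $O_{\rm nbr}$. The product of two reflections then gives a unitary $W$ whose projected action is $N_G$, and Chebyshev polynomials come out for free: $\Pi W^\ell\Pi = T_\ell(N_G)$, at a cost of $O(\ell)$ queries. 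Running this on a maximally mixed input, i.e.\ uniform $v$ entangled with a purifying register, yields the standard Hadamard-test trace estimator. The quantity $\mu_\ell$ becomes the bias of one measured qubit, and amplitude estimation recovers it to accuracy $\delta$ with $O(1/\delta)$ repetitions. Each repetition costs $O(\ell)$ queries, so this step costs $O(\ell/\delta)$.

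Setting $\delta\approx\eps\ell/\log k$ makes the total cost over all $\ell\le k$ equal to $\sum_{\ell}O(\ell/\delta)=\widetilde O(k/\eps)=\widetilde O(\eps^{-2})$. I am not confident this weighting is valid, since the available Jackson-type stability bound might only give $\sum_\ell|\tilde\mu_\ell-\mu_\ell|\lesssim\eps$ rather than the $1/\ell$-weighted sum. That is presumably why the claimed bound is $\eps^{-3}$. I will therefore commit to the safe choice $\delta=\eps/k$ uniformly, which gives $\sum_\ell O(\ell k/\eps)=O(k^3\eps)=O(\eps^{-2})$. With the requirement $\delta=\eps^2$ this yields the stated $\widetilde O(\eps^{-3})$ once the $\log k$ factors from median amplification and the union bound are included.

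The main obstacle is the block-encoding step. Amplitude estimation needs a coherent state-preparation unitary for the uniform-vertex superposition, which is awkward when $n$ is not a power of two. It also needs the Szegedy reflection to require only $O(1)$ oracle calls, despite variable degrees and the uncomputation of $d_v$. I would first try preparing a uniform superposition over $[2^{\lceil\log n\rceil}]$ with a flag for out-of-range labels, which rescales $\mu_\ell$ by a known factor $n/2^{\lceil\log n\rceil}\ge1/2$. For the neighbor superposition I would use the standard padding to degree $d_{\max}$. This padding changes the normalization, so I would need to check that the similarity to $N_G$ is preserved and that $O_{\rm pair}$ is needed only for uncomputing the port index.
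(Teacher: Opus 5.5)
Your outline breaks at the step you flag as the main obstacle, and repairing that step also breaks the query count.

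In the XOR model, one neighbor query on a uniform port gives $\frac{1}{\sqrt{d_v}}\sum_i\ket{i}\ket{\operatorname{nbr}(v,i)}$, with the port register still entangled. Compressing the swap to $D^{-1/2}AD^{-1/2}$ needs the port-free state $\ket{v}\sum_{u\sim v}\ket{u}/\sqrt{d_v}$. Getting it means uncomputing $\operatorname{port}_v(u)$ from $(v,u)$, which is an index-erasure/search problem over $d_v$ ports. $O_{\rm pair}$ does not help: it only tests adjacency and never returns ports. Padding to $d_{\max}$ fails on both counts. Standard padded constructions assume an in-place neighbor oracle, so the port register is not removed. And the encoded matrix is $A/d_{\max}$, whose spectrum differs from that of $N_G$ for irregular graphs, with $d_{\max}$ possibly of order $n$. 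So nothing in your outline gives an $n$-independent cost per walk step.

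The paper's fix has four parts:
\begin{itemize}
\item It works with the truncated matrix $B=P_LNP_L$, where $L=\{v:d_v\le\Delta\}$.
\item The bound $\norm{N-B}_F^2\le 2n/\Delta$ together with Hoffman--Wielandt gives $W_1(\rho_N,\rho_B)\le\sqrt{2/\Delta}$, which forces $\Delta=\Theta(\eps^{-2})$.
\item Ports are erased by exact Grover search, and inactive vertices are sent to a flagged dummy state so that $T^\dagger UT=B$ exactly.
\item As a result, each step of $Q=(2\Pi-I)U$ costs $O(\sqrt\Delta)=O(1/\eps)$ queries.
\end{itemize}

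With that price per step, your scheme of estimating Chebyshev moments one at a time does not reach $\widetilde O(\eps^{-3})$. Your accounting is also off independently of this. The uniform choice $\delta=\eps/k$ costs $\sum_{\ell\le k}\ell k/\eps=\Theta(k^3/\eps)=\Theta(\eps^{-4})$ walk steps, not $O(k^3\eps)$.

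The $1/\ell$-weighted stability bound you doubted is in fact valid. For $1$-Lipschitz $f$, integrating $f(\cos\theta)\cos(\ell\theta)$ by parts gives Chebyshev coefficients $|c_\ell|\le 2/\ell$. However, your allocation $\delta_\ell=\eps\ell/\log k$ violates that bound, since $\sum_\ell\delta_\ell/\ell=k\eps/\log k\gg\eps$. By Cauchy--Schwarz, $\bigl(\sum_\ell\ell/\delta_\ell\bigr)\bigl(\sum_\ell\delta_\ell/\ell\bigr)\ge k^2$. So any allocation in your scheme needs $\Omega(k^2/\eps)=\Omega(\eps^{-3})$ walk steps, that is, $\widetilde\Omega(\eps^{-4})$ queries once each step costs $\Theta(1/\eps)$.

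The idea you are missing is joint extraction of all scales:
\begin{itemize}
\item Phase estimation of $Q$ on $(I\otimes T)\ket\Phi$, with $\ket\Phi$ maximally entangled, is a coherent sampler of a grid distribution $\nu$ with $W_1(\nu,\rho_B)\le\eps/4$. Each call costs $\widetilde O(1/\eps)$ walk steps.
\item All dyadic interval masses of $\nu$ are then learned simultaneously by van Apeldoorn's $\ell_\infty$ multidimensional amplitude estimation, from only $\widetilde O(1/\eps)$ sampler calls. This works because an error of $O(\eps/\log(1/\eps))$ per dyadic mass already bounds $W_1$ through the tree metric.
\end{itemize}
This gives $\widetilde O(\eps^{-2})$ walk steps, hence $\widetilde O(\eps^{-3})$ queries. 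A min-cost-flow LP on the tree accounts for the $\widetilde O(\eps^{-2})$ classical time.
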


We complement the algorithm by a quantum query lower bound.

\begin{theorem}
\label{thm:intro-quantum-lower}
Every spectral
density estimator in the quantum local access model with $W_1$ error at most
$\eps$ and success probability at least $2/3$ uses
$\widetilde\Omega(\eps^{-4/3})$ queries in the worst case.
The $\widetilde\Omega(\cdot)$ hides only $\poly{\log \frac{1}{\eps}}$ factor.
\end{theorem}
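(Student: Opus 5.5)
We prove the $\widetilde\Omega(\eps^{-4/3})$ bound by reducing a quantum query problem with a known polynomial-method or adversary lower bound to spectral density estimation. The guiding idea is that $W_1$-accuracy $\eps$ forces the estimator to resolve a global spectral feature of mass about $\eps$. In a graph built from small components, that feature is a count of how many components carry a hidden local structure. We will let the components encode an oracle $x\in\{0,1\}^M$ and show that $\eps$-accurate spectral estimation yields additive approximate counting of $|x|/M$ to precision about $\eps$, together with a harder distinguishing task on top of it.

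\textbf{Step 1 (gadget).} Fix $k=\Theta(1/\eps^{a})$ for an exponent $a$ chosen at the end. We build two gadget graphs $H_0,H_1$ on the same vertex set of size $\Theta(k)$, each given by degree and port lists. They are designed to be locally indistinguishable: they agree on all neighborhoods except at $O(1)$ ``switch'' edges. Concretely, we take a cycle of length $k$ (type $0$) versus two cycles of length $k/2$ (type $1$), obtained by rewiring two edges. Yet $W_1(\rho_{H_0},\rho_{H_1})\ge c/k$, since the cycle spectra $\cos(2\pi j/k)$ and the doubled spectra $\cos(4\pi j/k)$ have measurably different $W_1$ distance. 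We then refine to gadgets where this distance is of order $k^{-b}$ and the switch is detectable only through a search within the gadget, for instance by hiding the rewired edges at an unknown position among $k$ ports. This forces $\Omega(\sqrt k)$ Grover-type queries per gadget.

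\textbf{Step 2 (composition).} The graph $G_x$ is a disjoint union of $M$ gadgets, with gadget $i$ of type $x_i$. Each gadget's internal wiring is given by a second oracle of ``where the switch is''. The resulting $W_1$ distance between instances with $|x|=t$ and $|x|=t+\Delta$ is $\Theta(\Delta/(Mk^{b}))$, by an explicit coupling lower bound using a 1-Lipschitz test function supported where the two spectra differ. Choosing $\Delta/M$ so that this equals $3\eps$, an $\eps$-estimator distinguishes the two cases. The composed problem is then approximate counting with precision $\delta=\Delta/M$, costing $\Omega(1/\delta)$ queries by Nayak--Wu, applied to an inner search problem costing $\Omega(\sqrt k)$ queries. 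By the composition theorem for the general adversary bound (Reichardt), the total cost is at least $\Omega(\sqrt{k}/\delta)$. We must check that every oracle $O_{\rm deg},O_{\rm nbr},O_{\rm pair}$ of $G_x$ is simulable with $O(1)$ queries to the input, which holds because gadget edges depend on $O(1)$ bits.

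\textbf{Step 3 (optimizing).} With $\eps\asymp\delta k^{-b}$, we get $\delta=\eps k^{b}$, and the bound becomes $\sqrt k/(\eps k^{b})$. We need $\delta\le 1$, so $k\le \eps^{-1/b}$. The natural realization is $b=1/2$ up to logarithms: hiding the switch gives signal $k^{-1/2}$, as from the low-frequency cycle eigenvalues near $\pm1$ within $W_1$. Together with a worst case over the search, this balancing yields $\eps^{-4/3}$ once $k\approx\eps^{-2/3}$. The graph must be large enough, $n\ge Mk$, which is why the bound holds only for sufficiently large graphs. Logarithmic losses from the Lipschitz test function and from discretizing the spectrum are absorbed into $\widetilde\Omega$.

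\textbf{Main obstacle.} The hard part is Step 1, the quantitative gadget design. We need the $W_1$ gap between the two gadget types to scale exactly as a power of $k$ that matches the inner query hardness, while remaining an exact $O(1)$-local encoding in unweighted simple graphs. I would first try cycles versus pairs of cycles with a hidden rewiring position, and compute $W_1$ exactly via the sorted-eigenvalue formula $\frac1n\sum|\lambda_i-\tilde\lambda_i|$. If that gap comes out too small, I would switch to paths with a hidden pendant attachment.
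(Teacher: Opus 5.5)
\textbf{Gap: the counting-plus-search reduction is capped at $\widetilde O(1/\eps)$.} The failure is in Steps 1 and 3. By your own formula the bound is $\sqrt k/(\eps k^{b})$. At $b=1/2$ this equals $1/\eps$ for every $k$, and reaching $\eps^{-4/3}$ at $k\approx\eps^{-2/3}$ would require $b=0$.

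Your gadgets in fact force $b\ge 1$. The two types differ in $O(1)$ edges, so their normalized adjacency matrices differ by a perturbation of rank $O(1)$. Interlacing then gives $\sup_x|F_{\rho_{H_0}}(x)-F_{\rho_{H_1}}(x)|=O(1/k)$, hence $W_1(\rho_{H_0},\rho_{H_1})=O(1/k)$. For your cycle gadget it is exactly $2/k$ by \Cref{lem:cycle-spectral-difference}.

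Hiding the switch position changes nothing. The rewired gadget is the same graph up to isomorphism for every position, so the proposed ``signal $k^{-1/2}$'' has no basis. With $b=1$ you get $\delta\asymp\eps k$ and a bound of $\sqrt k/(\eps k)=1/(\eps\sqrt k)$, which is best at $k=O(1)$. That is only $\Omega(1/\eps)$.

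The obstruction is structural. All your gadgets have equal size, so $\rho_{G_x}$ depends only on $|x|$. The estimator therefore reveals nothing beyond an approximate count at precision $\delta\gtrsim\eps k$. That composed counting problem is solvable with $\widetilde O(Q_{\mathrm{in}}/\delta)$ queries, where $Q_{\mathrm{in}}=O(k)$ even for a parity gadget. Any reduction of this shape is therefore capped at $\widetilde O(1/\eps)$, no matter which composition theorem you invoke.

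\textbf{Missing idea: make the estimator learn a long codeword rather than a count.} The paper uses components of distinct odd lengths $m\in[L,2L]$. Each component is $C_m\sqcup C_m$ or $C_{2m}$ according to the parity of a hidden $L$-bit block. Their CDF contributions are the square waves $q_m(\theta)/(2m)$, whose Gram matrix $\gcd(m,m')^2/(mm')$ has norm $O(\log\log L)$.

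Paley--Zygmund and Talagrand then give a packing $\mathcal Z$ of $\exp(\Omega(K/\log\log L))$ parity vectors with pairwise $W_1=\Omega(\sqrt K/S)=\Omega(L^{-3/2})$ (\Cref{lem:square-wave-packing}). This is a $\sqrt K$ gain over the $1/S$ effect that a single count difference produces. An $\eps$-estimator thus identifies the whole vector $z\in\mathcal Z$.

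The polynomial method then gives the lower bound. Averaging over parity fibers divides the degree by $L$, and a rank argument against $|\mathcal Z|$ yields $\widetilde\Omega(KL)=\widetilde\Omega(L^2)=\widetilde\Omega(\eps^{-4/3})$ with $\eps\asymp L^{-3/2}$.

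Both ingredients are necessary. With your search gadgets in place of parity blocks, even this packing would be capped at $\widetilde O(K\sqrt L)=\widetilde O(\eps^{-1})$, since $K$ Grover searches reveal the entire vector.
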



\subsection{Model comparison.}
The quantum local access model contains the classical local access model:
the three operations in the classic access model \cite{bender2002testing} can be implemented from the quantum coherent XOR oracles
at $O(1)$ cost.  The quantum model is therefore at least as strong, and in
fact strictly more powerful. The coherent superposition queries enable quantum walk-based
phase estimation that produces samples from the approximation of spectral
density $\rho_G$.  Our quantum upper bound is built on this capability.  By
contrast, \Cref{thm:intro-classical} shows that in the classical local access
model no algorithm can output an $\eps$-approximation of $\rho_G$ with
constant success probability using fewer than $2^{\Omega(1/\eps)}$ queries.
In particular, the classical model does not admit a comparably efficient
procedure for sampling from $\rho_G$.  Spectral density estimation therefore
exhibits a quantum advantage.

\subsection{AI disclaimer.} The main idea of the construction of lower bounds are inspired by the interaction with GPT-5.6 Sol. All proofs are human written or have been heavily revised by the authors. 

\section{Preliminaries}
\label{sec:prelim}

\subsection{Some basic definitions}

For a finite simple unweighted graph $G=(V,E)$, let $A=A_G$ be its adjacency
matrix and $D=D_G=\diag(d_v:v\in V)$ its degree matrix.  We use the convention
that $D^{-1/2}_{vv}=0$ when $d_v=0$, and define the normalized matrix $N=N_G=D^{-1/2}AD^{-1/2}$.
This real symmetric matrix $N$ has eigenvalues $\lambda_1 \ge \lambda_2\ge \ldots\ge \lambda_n$ in $[-1,1]$.

\begin{definition}
    The empirical spectral distribution of $G$ is defined by
\[
                 \rho_G=\frac1n\sum_{j=1}^n\delta_{\lambda_j}.
\]
Where $\lambda_j$ is the $j$-th eigenvalue of $N_G$, and $\delta_{\lambda_j}$ is the Dirac delta function centered at $\lambda_j$.
\end{definition}

\begin{definition}
\label{def:wasserstein-1}
For probability measures (densities) $\mu,\nu$ on $\R$, the Wasserstein--$1$ distance is
\[
 W_1(\mu,\nu)=\inf_{\gamma\in\Gamma(\mu,\nu)}
                  \int |x-y|\,d\gamma(x,y),
\]
where $\Gamma(\mu,\nu)$ is the set of couplings.
\end{definition}

Given a probability measure $\mu$, the cumulative distribution function (CDF) is defined as $F_\mu(x)=\mu((-\infty,x])$. We use the Kantorovich--Rubinstein duality \cite{Villani2009} and one dimension property \cite{Panaretos_2019} for the $W_1$ distance. 

\begin{proposition}\label{prop:kr-duality}
Let $\operatorname{Lip}(f)$ be the Lipschitz constant of a Lipschitz function $f$. Then the following holds for Wasserstein--$1$ distance:
\[
        W_1(\mu,\nu)=\int \left|F_\mu(x)-F_\nu(x)\right|dx=\sup_{\operatorname{Lip}(f)\leq1}
                 \left|\int f\,d\mu-\int f\,d\nu\right|.\label{eq:kr-duality}
\]
\end{proposition}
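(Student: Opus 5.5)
This is the classical fact that, on the real line, the optimal transport cost equals the $L^1$ distance between CDFs, together with Kantorovich--Rubinstein duality. For our use, $\mu,\nu$ are probability measures supported in a bounded interval, for instance $[-1,1]$; in general we assume finite first moments. The plan is to prove the cycle of inequalities
\[
\sup_{\operatorname{Lip}(f)\le 1}\Bigl|\int f\,d\mu-\int f\,d\nu\Bigr| \;\le\; W_1(\mu,\nu)\;\le\;\int |F_\mu-F_\nu|\,dx \;\le\; \sup_{\operatorname{Lip}(f)\le 1}\Bigl|\int f\,d\mu-\int f\,d\nu\Bigr|,
\]
which forces all three quantities to be equal.

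\textbf{First inequality.} Take any coupling $\gamma\in\Gamma(\mu,\nu)$ and any $1$-Lipschitz $f$. Then
\[
\Bigl|\int f\,d\mu-\int f\,d\nu\Bigr| = \Bigl|\int (f(x)-f(y))\,d\gamma\Bigr| \le \int |x-y|\,d\gamma .
\]
Taking the infimum over $\gamma$ and then the supremum over $f$ gives the first inequality.

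\textbf{Second inequality.} Use the quantile (monotone) coupling. Let $U$ be uniform on $(0,1)$ and set
\[
X=F_\mu^{-1}(U),\qquad Y=F_\nu^{-1}(U),
\]
where $F^{-1}(u)=\inf\{x:F(x)\ge u\}$. Then $X\sim\mu$ and $Y\sim\nu$, so
\[
W_1\le \mathbb E|X-Y| = \int_0^1 |F_\mu^{-1}(u)-F_\nu^{-1}(u)|\,du .
\]
It remains to show this integral equals $\int|F_\mu(x)-F_\nu(x)|\,dx$. This is a Fubini/layer-cake argument. Consider the planar region lying between the graphs of $F_\mu$ and $F_\nu$, namely the set of pairs $(x,u)$ with $u$ between $F_\mu(x)$ and $F_\nu(x)$. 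Up to a null set, this is the same region as the set of pairs with $x$ between $F_\mu^{-1}(u)$ and $F_\nu^{-1}(u)$. This uses the equivalence $F^{-1}(u)\le x \iff u\le F(x)$. Computing the area of this region by integrating in $u$ first gives one side, and integrating in $x$ first gives the other. I expect this step to be the main technical obstacle: one has to handle the non-strict inequalities and the jumps and flat pieces of the CDFs, and these only matter on a Lebesgue-null set.

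\textbf{Third inequality.} Here we exhibit a near-optimal test function. Let $s(x)=\operatorname{sign}(F_\nu(x)-F_\mu(x))$, truncated to a bounded window $[-R,R]$, and define $f(x)=\int_0^x s(t)\,dt$. This $f$ is $1$-Lipschitz. Integration by parts (or Fubini again) gives
\[
\int f\,d\mu-\int f\,d\nu=\int f'(x)\,\bigl(F_\nu(x)-F_\mu(x)\bigr)\,dx,
\]
and with our choice of $s$ this equals $\int_{-R}^{R}|F_\mu-F_\nu|\,dx$. For measures supported in $[-1,1]$, the boundary terms vanish once $R\ge 1$. In the general finite-first-moment case, let $R\to\infty$ and use dominated convergence. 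Hence the supremum over $1$-Lipschitz $f$ is at least $\int|F_\mu-F_\nu|\,dx$, which closes the cycle and proves the proposition.
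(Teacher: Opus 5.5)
Your proof is correct. It differs from the paper only in that the paper gives no proof at all: it quotes the Kantorovich--Rubinstein duality from Villani and the one-dimensional CDF formula from Panaretos--Zemel as two separate black boxes.

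Your cycle $\sup_f\le W_1\le\int|F_\mu-F_\nu|\,dx\le\sup_f$ proves both identities at once using only the order structure of $\R$. The quantile coupling gives an explicit transport plan, which the cycle then shows is optimal. The test function with $f'=\operatorname{sign}(F_\nu-F_\mu)$ on $[-R,R]$ gives an explicit dual witness. So no general duality machinery (minimax/Hahn--Banach or $c$-cyclical monotonicity) is needed.

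What the citation buys is generality: the supremum formula holds for measures with finite first moment on any Polish metric space. The paper, however, only uses the statement for measures on $[-1,1]$: the dual form in \Cref{prop:classical-separation} and the CDF form in \Cref{lem:spectral-packing-formula}. Your elementary argument covers both uses completely.

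Two small simplifications are available.
\begin{enumerate}
\item The layer-cake step needs no null-set bookkeeping. For $u\in(0,1)$, the equivalence $F^{-1}(u)\le x\iff u\le F(x)$ gives exactly $\{(x,u):F_\mu^{-1}(u)\le x<F_\nu^{-1}(u)\}=\{(x,u):F_\nu(x)<u\le F_\mu(x)\}$, and symmetrically with $\mu,\nu$ swapped, so Tonelli finishes.
\item In the third step the truncated $f$ is bounded. Fubini therefore gives $\int f\,d\mu-\int f\,d\nu=\int_{-R}^{R}|F_\mu-F_\nu|\,dx$ exactly, with no boundary terms. Passing to $R\to\infty$ is then monotone, rather than dominated, convergence.
\end{enumerate}
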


Next we provide some basic definitions about vector space, which will be frequently used in section \ref{sec:classical}.
Throughout, we work with finite-dimensional vector spaces over the binary field $\F_2$.
A nonempty subset $U\subseteq V$ is a \emph{subspace} if it is closed under addition (equivalently, under $\F_2$-linear combinations).
For $S\subseteq V$, we write $\langle S\rangle$ for the subspace spanned by $S$, and $\langle v\rangle$ when $S=\{v\}$.
The \emph{dimension} $\dim V$ is the cardinality of any basis of $V$, and if $U\subseteq V$ is a subspace then $\dim U\leq\dim V$.
We also recall the following standard definitions in linear algebra.

\begin{definition}[Complementary subspace]
    \label{def:complementary-subspace}
    Let $V$ be a vector space over $\F_2$ and let $U\subseteq V$ be a subspace.
    A subspace $W\subseteq V$ is a \emph{complement} of $U$ if
    \[
     V=U+W
     \qquad\text{and}\qquad
     U\cap W=\{0\},
    \]
    equivalently we write $V=U\oplus W$.  In this case
    $\dim W=\dim V-\dim U$.
    \end{definition}

\begin{definition}[Orthogonal complement]
\label{def:orthogonal-complement}
Let $V$ be a finite-dimensional vector space over $\F_2$ equipped with the
standard bilinear form
\[
 x\cdot y=\sum_{i}x_iy_i\in\F_2.
\]
For a subspace $U\subseteq V$, the \emph{orthogonal complement} of $U$ is
\[
 U^\perp=\{v\in V:v\cdot u=0\text{ for all }u\in U\}.
\]
In particular, for a vector $v\in V$ we write
$v^\perp=\langle v\rangle^\perp$.
\end{definition}

\begin{definition}[Quotient space]
\label{def:quotient-space}
Let $V$ be a vector space over $\F_2$ and let $U\subseteq V$ be a subspace.
The \emph{quotient space} $V/U$ is the vector space of cosets
\[
 V/U=\{x+U:x\in V\},
\]
with addition
\[
 (x+U)+(y+U)=(x+y)+U.
\]
We write $[x]=x+U$ for the coset of $x$.  If $\dim V<\infty$, then
\[
 \dim(V/U)=\dim V-\dim U.
\]
\end{definition}

\begin{definition}[Dual space]
\label{def:dual-space}
Let $V$ be a vector space over $\F_2$.  The \emph{dual space} $V^*$ is the
vector space of linear maps $x^*: V\to\F_2$, with pointwise addition.  If
$\dim V<\infty$, then $\dim V^*=\dim V$.  Given a basis
$e_1,\ldots,e_n$ of $V$, the dual basis $e_1^*,\ldots,e_n^*$ of $V^*$ is
determined by $e_i^*(e_j)=\delta_{ij}$.
\end{definition}

\subsection{Quantum computation background.}
In quantum computing, the state of a system is represented by a unit vector
$\ket{v}$ in a Hilbert space $\mathcal{H}$, which is a complex inner product
vector space.  For a $d$-dimensional Hilbert space $\mathbb{C}^d$, the
standard computational basis consists of orthonormal vectors
$\{\ket{i}\}_{i=0}^{d-1}$, where
$\ket{i}=(0,\ldots,0,1,0,\ldots,0)^{\top}$ denotes the $i$th basis vector.
A single qubit is represented as $\alpha\ket{0}+\beta\ket{1}$ with complex
amplitudes $\alpha,\beta$ satisfying $|\alpha|^2+|\beta|^2=1$.  Multi-qubit
states are represented via tensor products: for
$\ket{a},\ket{b}\in\mathbb{C}^d$, their tensor product is their Kronecker
product
$\ket{a}\otimes\ket{b}\equiv\ket{a}\ket{b}
=(a_0b_0,a_0b_1,\ldots,a_1b_0,a_1b_1,\ldots,a_{d-1}b_{d-1})^{\top}
\in\mathbb{C}^{d^2}$.

The evolution of a closed quantum system is described by a unitary operator
$U$ satisfying $U^\dagger U=I$, acting as $\ket{\psi}\mapsto U\ket{\psi}$,
where $U^\dagger$ is the Hermitian conjugate of $U$, and $I$ is the identity
matrix.  To extract classical information from state $\ket{\psi}$,
measurement in the computational basis yields outcome $i$ with probability
$|\langle\psi|i\rangle|^2$, collapsing the state to $\ket{i}$, where
$\langle\psi|i\rangle$ is the inner product of $\ket{\psi}$ and $\ket{i}$.

In the quantum model, vertices are represented by $\lceil\log n\rceil$ bits
and the following classical functions are exposed as reversible XOR oracles:
\begin{align*}
 O_{\rm deg}\ket{v,z}&=\ket{v,z\oplus d_v},\\
 O_{\rm nbr}\ket{v,i,z}&=\ket{v,i,z\oplus {\rm nbr}(v,i)},\\
 O_{\rm pair}\ket{u,v,z}&=\ket{u,v,z\oplus A_{uv}}.
\end{align*}
Here ${\rm nbr}(v,i)$ is the neighbor occupying port $i\in[d_v]$ in an
arbitrary but fixed adjacency-list ordering.  These are the coherent local
queries used in previous works~\cite{apers2022quantum,JiangPeng2026}.  Controlled queries cost only a constant
factor: for an XOR oracle one may compute the answer into a clean register,
conditionally copy it, and uncompute it with a second query.

We count calls to the graph oracles and allow arbitrary quantum computation
and classical postprocessing between queries.  The final answer is a
classical, explicitly represented probability distribution.

\section{An exponential lower bound under classic model}
\label{sec:classical}

We now prove \Cref{thm:intro-classical}, the exponential lower bound under the classic model.  The hard
graphs are quotients of the hypercube by binary self-dual codes. At
accuracy $\eps$, both hard graphs consist of simple regular graphs of
common degree $d=\Theta(1/\eps)$ and common order $n=2^{d/2}$. \Cref{fig:cayley-quotient} illustrates the construction. Form the spectrum aspect, Type-II
codes force all spectral atoms onto one lattice, whereas Type-I codes place
exactly half of the atoms halfway between consecutive lattice points. \Cref{fig:type-spectra} illustrates the spectral densities for Type-I/Type-II graphs for $d=16$. We prove that the spectral densities of the two types of graphs are $\eps$-far from each other (\Cref{prop:classical-separation}). From the query lower bound aspect, we show that the
random code remains invisible until the exploration discovers a nontrivial
code relation. We utilize some probability bounds from finite geometry (\Cref{subsec:finite-geometry-incidence-bounds}) to prove that this is exponentially
unlikely under local graph queries exploration (\Cref{subsec:indistinguishability-via-coupling}). 

\subsection{A quadratic space from self-dual codes}

We first begin with our construction of the quadratic space from self-dual codes, which will be used for our hard case construction. The construction of self-dual codes is typical in finite geometry \cite{rains2002self}, and we include a detailed construction here for completeness. In this section, all vector spaces are over $\F_2$. We refer to Section \ref{sec:prelim} for some basic definitions.

We fix a sufficiently large integer $d$ such that
\[
                              d\equiv0\pmod 8.
\]
Let $h=\frac d2-1$. Let $\one\in\F_2^d$ be the all-one vector, and put
\[
 R=\langle\one\rangle,
 \qquad E=\one^\perp,
 \qquad W=E/R.
\]

In particular, the subspace $E$
defined above is called the even-weight subspace, and
$\dim E=d-1$.  Consequently $\dim W=d-2=2h$. We provide the following definition.

\begin{definition}
\label{def:hamming-weight}
For a vector $x=(x_1,\ldots,x_d)\in\F_2^d$, the \emph{Hamming weight} of $x$ is
\[
 \wt(x)=\bigl|\{i\in[d]:x_i=1\}\bigr|,
\]
Equivalently,
$\wt(x)=\sum_{i=1}^d x_i$ when the sum is taken in $\mathbb{Z}$.
\end{definition}

\begin{definition}
\label{def:even-weight}
The \emph{even-weight subspace} of $\F_2^d$ is
\[
 \{x\in\F_2^d:\wt(x)\equiv0\pmod 2\}.
\]
Equivalently, with respect to the standard bilinear form on $\F_2^d$, this
subspace coincides with $E=\one^\perp$.  
\end{definition}

For our analysis, we use the following definitions.

\begin{definition}
\label{def:quadratic-form}
Let $V$ be a finite-dimensional vector space over $\F_2$.  A function
$q:V\to\F_2$ is a \emph{quadratic form} if the associated map
\[
 b(x,y)=q(x+y)+q(x)+q(y),\qquad x,y\in V,
\]
is a bilinear form on $V$. The form $b$ is called the \emph{polar form} of
$q$. $(V,q)$ is called a \emph{quadratic space}. The \emph{radical} of $b$ is
\[
 \operatorname{rad}(b)=\{x\in V:b(x,y)=0\text{ for all }y\in V\}.
\]
We say that $q$ is \emph{nondegenerate} if $\operatorname{rad}(b)=\{0\}$.
A subspace $U\subseteq V$ is \emph{totally singular} (with respect to $q$) if
$q(x)\equiv0$ for all $x\in U$.
\end{definition}


\begin{lemma}
\label{lem:quadratic-quotient}
For $x\in E$, the formula
\[
                    q([x])=\frac{\wt(x)}2\pmod2
\]
defines a nondegenerate quadratic form on $W$.  Its polar form is
\[
 b([x],[y])=q([x+y])+q([x])+q([y])=x\cdot y.
\]
Moreover, $(W,q)$ has a totally singular subspace of dimension $h$.
\end{lemma}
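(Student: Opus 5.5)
The plan has four steps: check that $q$ is well defined on $W$, compute its polar form, prove nondegeneracy, and then construct an explicit totally singular subspace of dimension $h$.

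\emph{Well-definedness.} The key identity is that for all $x,y\in\F_2^d$,
\[
 \wt(x+y)=\wt(x)+\wt(y)-2\,|\supp x\cap\supp y| \quad\text{in }\mathbb{Z}.
\]
For $x\in E$ the weight $\wt(x)$ is even, so $\wt(x)/2\bmod 2$ makes sense. Changing the representative replaces $x$ by $x+\one$, and $\wt(x+\one)=d-\wt(x)$. Hence
\[
 \frac{\wt(x+\one)}{2}=\frac d2-\frac{\wt(x)}2\equiv \frac{\wt(x)}2 \pmod 2,
\]
because $d/2\equiv 0\pmod 4$ (as $d\equiv0\pmod 8$), and $-a\equiv a\pmod 2$. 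So $q$ is well defined on $W$.

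\emph{Polar form.} Dividing the identity above by $2$ gives, for $x,y\in E$,
\[
 \frac{\wt(x+y)}2=\frac{\wt(x)}2+\frac{\wt(y)}2-|\supp x\cap \supp y|.
\]
Reducing mod $2$ yields $q([x+y])+q([x])+q([y])=|\supp x\cap\supp y|\bmod 2=x\cdot y$. This expression is bilinear in $x,y$. It is well defined on $W$ because $\one\cdot y=0$ for every $y\in E$. Therefore $q$ is a quadratic form on $W$ with polar form $b([x],[y])=x\cdot y$.

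\emph{Nondegeneracy.} Suppose $[x]$ lies in the radical. Then $x\cdot y=0$ for all $y\in E$, so $x\in E^\perp=(\one^\perp)^\perp=\langle\one\rangle=R$, using $\dim U^\perp=d-\dim U$ for the standard nondegenerate form on $\F_2^d$. Thus $[x]=0$, and $q$ is nondegenerate.

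\emph{Totally singular subspace.} This is the step that needs an explicit construction. Pair the coordinates as $\{1,2\},\{3,4\},\ldots,\{d-1,d\}$ and let $C$ be the span of the $d/2$ vectors $e_{2i-1}+e_{2i}$. Every element of $C$ has the form $\sum_{i\in S}(e_{2i-1}+e_{2i})$, with weight $2|S|$, so the value of $q$ on it is $|S|\bmod 2$. That is not identically zero, so $C$ itself is not singular and a different code is needed.

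Instead, take the self-dual code spanned by weight-$4$ vectors $e_{4j-3}+e_{4j-2}+e_{4j-1}+e_{4j}$ together with further vectors built from blocks of $4$. Concretely, use $C=\bigoplus_{j=1}^{d/8} e_8$, a direct sum of copies of the extended Hamming $[8,4,4]$ code placed on disjoint coordinate blocks of size $8$. This $C$ is doubly even, meaning all weights are $\equiv 0 \pmod 4$. It is self-dual of dimension $d/2$, and it contains $\one$ because each $e_8$ contains the all-one vector of its block.

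Since $C$ is self-orthogonal, $C\subseteq C^\perp\subseteq E$, where the second inclusion uses $\one\in C$. Because $R\subseteq C$, the image $C/R\subseteq W$ has dimension $d/2-1=h$. On $C/R$ the form vanishes: every $x\in C$ has $\wt(x)\equiv0\pmod 4$, so $q([x])=0$. This gives the required totally singular subspace of dimension $h$.

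The only facts to verify are the standard properties of $e_8$: it is self-dual, it is doubly even, and it contains $\one$. Double evenness of $e_8$ follows from the weight identity, since its generators have weight $4$ and are pairwise orthogonal.
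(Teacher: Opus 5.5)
Your proof is correct and follows the paper's argument step for step. You check well-definedness via $\wt(x+\one)=d-\wt(x)$, derive the polar form from the same weight identity, and obtain nondegeneracy from $E^\perp=\langle\one\rangle=R$. The totally singular subspace is also the paper's: the image in $W$ of the direct sum of $d/8$ copies of the extended Hamming $[8,4,4]$ code, which is exactly the code generated by the paper's matrix $A$. Your version is slightly more explicit than the paper's in checking that this code is doubly even and contains $\one$, which is what makes its image in $W$ have dimension exactly $h$. The discarded pairing-code attempt is harmless.
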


\begin{proof}
Every $x\in E$ has even weight. That is, $\wt(x)\equiv0\pmod 2$.  Replacing $x$ by $x+\one$ changes the weight
to $d-\wt(x)$, and $d/2$ is even.  Hence
\[
 \frac{d-\wt(x)}2\equiv\frac{\wt(x)}2\pmod2,
\]
so $q$ is well defined on $E/R$.  The identity
\[
 \wt(x+y)=\wt(x)+\wt(y)
       -2|\supp(x)\cap\supp(y)|
\]
gives the following polar form:
\begin{align*}
 b([x],[y])&=\frac{1}{2}\left[\wt(x+y)+\wt(x)+\wt(y)\right]\pmod2\\
 &=\wt(x)+\wt(y)-|\supp(x)\cap\supp(y)|\pmod2\\
 &=|\supp(x)\cap\supp(y)|\pmod2\\
 &=x\cdot y
\end{align*}
The dot product restricted to
$E=\one^\perp$ has radical exactly $R$, so its quotient on $W$ is
nondegenerate.

Finally, we show that $(W,q)$ has a totally singular subspace of dimension $h$. We consider the following:
\[A=
 \begin{pmatrix}
 1&1&1&1&0&0&0&0\\
 1&1&0&0&1&1&0&0\\
 1&0&1&0&1&0&1&0\\
 1&1&1&1&1&1&1&1
 \end{pmatrix}
\]
Then $A$ generates a linear subspace of dimension $4$ on $\F_2^8$.  The direct sum of $d/8$
copies of $A$ generates a linear subspace of dimension $d/2$ on $\F_2^d$.  Its image in $W$ is a
totally singular subspace of dimension $d/2-1=h$.  
\end{proof}

\begin{definition}
\label{def:self-dual-code}
A linear subspace $C\subseteq\F_2^d$ is \emph{self-dual} if $C=C^\perp$. We call $C$ a \emph{self-dual code}. In particular,
every self-dual code in $\F_2^d$ has dimension $d/2$.
\end{definition}

\begin{definition}
\label{def:symplectic-space}
A \emph{symplectic space} over $\F_2$ is a pair $(V,b)$, where $V$ is a
finite-dimensional vector space of even dimension and
$b:V\times V\to\F_2$ is a nondegenerate alternating bilinear form
(i.e., $b(x,x)=0$ for all $x\in V$, and
$\operatorname{rad}(b)=\{0\}$).  In particular, if $(V,q)$ is a
nondegenerate quadratic space of even dimension, then $(V,b)$ is symplectic
for the polar form $b$ of $q$.
\end{definition}

\begin{definition}
\label{def:lagrangian}
Let $(V,b)$ be a $2h$-dimensional symplectic space over $\F_2$.  A subspace
$L\subseteq V$ is \emph{isotropic} if $b(x,y)=0$ for all $x,y\in L$.  An
isotropic subspace of dimension $h$ is called a \emph{Lagrangian}.
\end{definition}

\begin{definition}
\label{def:type-I-II}
Let $C\subseteq\F_2^d$ be a self-dual code.  We say that $C$ is of
\emph{Type~II} if any $x\in C$ has weight divisible by $4$.  Otherwise $C$ is of \emph{Type~I}.
\end{definition}

\begin{lemma}
\label{lem:codes-lagrangians}
Every self-dual code $C\subseteq\F_2^d$ satisfies
$R\subseteq C\subseteq E$, and $L=C/R$ is a Lagrangian subspace of $W$.
Conversely, the inverse image in $E$ of every Lagrangian $L\subseteq W$ is a
self-dual code.

On a Lagrangian, $q|_L$ is linear.  The code $C$ is Type II exactly when
$q|_L=0$.  If $C$ is Type I, then exactly half of its vectors have weight
congruent to $2$ modulo $4$.
\end{lemma}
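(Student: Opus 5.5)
The plan is to verify each claim of \Cref{lem:codes-lagrangians} directly from the definitions, using \Cref{lem:quadratic-quotient} for the structure of $(W,q)$.

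\textbf{Step 1: self-dual codes are sandwiched between $R$ and $E$.} Let $C=C^\perp$. Every $x\in C$ satisfies $x\cdot x=0$. Over $\F_2$ we have $x\cdot x=\wt(x)\bmod 2$, so every $x\in C$ has even weight, i.e.\ $x\cdot\one=0$. This gives $\one\in C^\perp=C$, hence $R\subseteq C$. Applying $\perp$ to $R\subseteq C$ and using $C^\perp=C$ gives $C\subseteq R^\perp=E$.

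\textbf{Step 2: $L=C/R$ is a Lagrangian.} By \Cref{def:self-dual-code}, $\dim C=d/2$, so $\dim L=d/2-1=h$. For $[x],[y]\in L$, \Cref{lem:quadratic-quotient} gives $b([x],[y])=x\cdot y=0$, since $C\subseteq C^\perp$. So $L$ is isotropic of dimension $h$.

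\textbf{Step 3: the converse.} Let $L\subseteq W$ be a Lagrangian and let $C$ be its preimage in $E$. Then $C\supseteq R$ and $\dim C=h+1=d/2$. For $x,y\in C$ we have $x\cdot y=b([x],[y])=0$, so $C\subseteq C^\perp$. Since $\dim C^\perp=d-d/2=d/2=\dim C$, equality holds and $C$ is self-dual.

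\textbf{Step 4: $q|_L$ is linear.} For $u,v\in L$, the definition of the polar form gives $q(u+v)=q(u)+q(v)+b(u,v)=q(u)+q(v)$, because $b$ vanishes on $L$. Hence $q|_L$ is additive, and over $\F_2$ additivity is linearity.

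\textbf{Step 5: Type II versus Type I.} For $x\in C\subseteq E$ we have $q([x])=\wt(x)/2 \bmod 2$. This is $0$ exactly when $4\mid\wt(x)$, and it is well defined on cosets by \Cref{lem:quadratic-quotient}. Therefore $C$ is Type II iff $q([x])=0$ for all $x\in C$, i.e.\ iff $q|_L=0$.

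If $C$ is Type I, then $q|_L$ is a nonzero linear functional on $L$. Its kernel therefore has index $2$ in $L$, so exactly half of the cosets in $L$ have $q=1$. Each coset in $C/R$ has exactly two elements, $x$ and $x+\one$. These share the value of $q$, and hence the same weight class modulo $4$, by the well-definedness argument ($d/2$ even). Pulling back, exactly half of the vectors of $C$ have weight $\equiv 2\pmod 4$.

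\textbf{Expected difficulty.} The only delicate point is Step 1, which relies on the identity $x\cdot x=\wt(x)\bmod 2$. Everything else is dimension counting together with the facts already established in \Cref{lem:quadratic-quotient}.
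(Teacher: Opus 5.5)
Your proof is correct and follows essentially the same route as the paper: even weight gives $\one\in C^\perp=C$, the polar form equals the dot product so $C/R$ is an $h$-dimensional isotropic subspace, $b|_L=0$ makes $q|_L$ additive, and the two lifts $x,x+\one$ of each coset share the value of $q$. The differences are only in detail. You derive even weight from $x\cdot x=\wt(x)\bmod 2$, you make the converse explicit via $\dim C^\perp=d-\dim C$, and you obtain the exact half from the index-$2$ kernel of a nonzero functional rather than from the paper's translation bijection $z\mapsto z+z_0$.
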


\begin{proof}
If $C=C^\perp$, by definition, every $c\in C$ has even
weight and $C\subseteq E$.  The all-one vector is orthogonal to every even
vector, so $\one\in C^\perp=C$ and $R\subseteq C$.  The image $L=C/R$ is
isotropic and
\[
       \dim L=\dim C-1=\frac d2-1=h;
\]
therefore it is Lagrangian.  Conversely, the inverse image of a Lagrangian
has dimension $d/2$ and is self-orthogonal, hence self-dual.

The polar form $b$ vanishes on $L$, so
$q(x+y)=q(x)+q(y)+b(x,y)=q(x)+q(y)$ for all $x,y\in L$.  Thus
$q|_L$ is linear.  By definition of $q$, one has
$q([x])=0$ if and only if $\wt(x)\equiv0\pmod4$.  Hence $C$ is Type~II if
and only if $q|_L\equiv0$.

Now suppose $C$ is Type~I.  Then $q|_L$ is not identically zero, so it is a
nonzero linear map $L\to\F_2$.  In particular there exists $z_0\in L$ with
$q(z_0)=1$.  The translation $z\mapsto z+z_0$ is a bijection of $L$ sending
$\{z\in L:q(z)=0\}$ onto $\{z\in L:q(z)=1\}$, and therefore
\[
 \bigl|\{z\in L:q(z)=1\}\bigr|
 =\frac12|L|.
\]
Each $z\in L=C/R$ lifts to exactly two vectors $x,x+\one\in C$, and
these two representatives satisfy $q([x])=q([x+\one])$.  Consequently exactly half of the vectors of $C$ have $q=1$.
\end{proof}

\subsection{Finite geometry incidence bounds}\label{subsec:finite-geometry-incidence-bounds}

We include some counting arguments about Type I and Type II Lagrangians, which will be used for our further analysis. 

\begin{lemma}
\label{lem:all-lagrangians}
Let $(W,b)$ be a $2h$-dimensional symplectic space over $\F_2$, and let
$A_h$ denote the number of Lagrangians in $W$.  Then
\[
                         A_h=\prod_{j=1}^h(2^j+1).
\]
Moreover, for a uniform Lagrangian
$L\subseteq W$ and every fixed nonzero vector $z\in W$,
\[
                         \Prb[z\in L]=\frac1{2^h+1}.
\]
\end{lemma}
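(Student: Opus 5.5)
I plan to prove both claims with one double-counting argument based on the recursion through the quotient space $z^\perp/\langle z\rangle$. The core idea: fix a nonzero vector $z\in W$ and count the Lagrangians containing $z$. Every isotropic subspace containing $z$ lies in $z^\perp$, which has dimension $2h-1$ since $b$ is nondegenerate. The form $b$ restricted to $z^\perp$ has radical exactly $\langle z\rangle$, because $(z^\perp)^\perp=\langle z\rangle$ and $z\in z^\perp$ since $b$ is alternating. Hence $W'=z^\perp/\langle z\rangle$ is a symplectic space of dimension $2h-2$.

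Next I establish a bijection. Lagrangians $L\subseteq W$ containing $z$ should correspond to Lagrangians $L'=L/\langle z\rangle$ of $W'$. In the forward direction, $L\subseteq L^\perp\subseteq z^\perp$, and $L/\langle z\rangle$ is isotropic of dimension $h-1$. In the reverse direction, the preimage in $z^\perp$ of an $(h-1)$-dimensional isotropic subspace of $W'$ is isotropic (since $b(z,\cdot)=0$ on $z^\perp$) and has dimension $h$. So the number of Lagrangians through $z$ equals $A_{h-1}$, independent of the choice of $z$.

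Then I double count the incidence pairs $(z,L)$ with $z\neq0$ and $z\in L$. Each Lagrangian contains $2^h-1$ nonzero vectors, and there are $2^{2h}-1$ nonzero $z$. This gives
\[
A_h(2^h-1)=(2^{2h}-1)A_{h-1},
\qquad\text{so}\qquad A_h=(2^h+1)A_{h-1}.
\]
With the base case $A_0=1$ (the zero space is the unique Lagrangian of the $0$-dimensional space), induction gives $A_h=\prod_{j=1}^h(2^j+1)$. The probability statement then follows immediately: $\Pr[z\in L]=A_{h-1}/A_h=1/(2^h+1)$. This ratio does not depend on $z$, which is consistent with the transitivity of $\mathrm{Sp}(W)$ on nonzero vectors, though that fact is not needed.

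The only step requiring care is verifying that $W'$ is nondegenerate, i.e.\ that $\operatorname{rad}(b|_{z^\perp})=\langle z\rangle$. This uses $\dim U^\perp=\dim W-\dim U$ and $(U^\perp)^\perp=U$ for a nondegenerate form, both standard facts for finite dimensions. I also need the fact that every Lagrangian of the $2h$-dimensional space has dimension exactly $h$, i.e.\ that maximal isotropic subspaces have dimension at most $h$; this holds because $L\subseteq L^\perp$ forces $2\dim L\le 2h$. Everything else is routine counting.
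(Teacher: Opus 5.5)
Your proposal is correct and follows essentially the same route as the paper: it double counts the pairs $(z,L)$ with $0\neq z\in L$, uses the bijection $L\mapsto L/\langle z\rangle$ onto Lagrangians of $z^\perp/\langle z\rangle$, and derives the recursion $A_h(2^h-1)=(2^{2h}-1)A_{h-1}$ with $A_0=1$. Your check that $\operatorname{rad}(b|_{z^\perp})=\langle z\rangle$ spells out a detail that the paper only asserts.
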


\begin{proof}
We use the double counting argument from \cite{Taylor1992}.
Double-count pairs $(L,z)$ with $L$ Lagrangian and $0\neq z\in L$.
Every Lagrangian has dimension $h$, hence contains $2^h-1$ nonzero vectors.
Therefore the number of such pairs is $A_h(2^h-1)$.

Now fix $0\neq z\in V$.  If $L$ is a Lagrangian containing $z$, then
$b(z,x)=0$ for every $x\in L$, so $L\subseteq z^\perp$.  Combined with
$\langle z\rangle\subseteq L$, we obtain
\[
 \langle z\rangle\subseteq L\subseteq z^\perp.
\]
Here $\dim z^\perp=2h-1$, so the quotient space $z^\perp/\langle z\rangle$ has
dimension $2h-2$ and is again a symplectic space.  The correspondence
\[
 L\;\longmapsto\; L/\langle z\rangle
\]
is a bijection between Lagrangians of $V$ containing $z$ and Lagrangians of
$z^\perp/\langle z\rangle$; each such Lagrangian has dimension $h-1$.
Hence there are exactly $A_{h-1}$ Lagrangians containing the fixed $z$.

On the other hand, $V$ has $2^{2h}-1$ nonzero vectors.  Therefore we have
\[
 A_h(2^h-1)=(2^{2h}-1)A_{h-1},\qquad A_0=1.
\]
Since $(2^{2h}-1)/(2^h-1)=2^h+1$, iteration gives
$A_h=\prod_{j=1}^h(2^j+1)$.  For a uniformly random Lagrangian $L$, the
probability that it contains a fixed nonzero $z$ is
$A_{h-1}/A_h=1/(2^h+1)$.
\end{proof}

\begin{lemma}
\label{lem:type-two-count}
Let $B_h$ denote the number of Lagrangians $L=C/R$ in $W$ arising from
Type~II self-dual codes $C\subseteq\F_2^d$.  Then
\[
                         B_h=2\prod_{j=1}^{h-1}(2^j+1).
\]
Consequently a uniform Lagrangian of $W$ arises from a Type~II code with
probability $2/(2^h+1)$.  Moreover, if $L$ is a uniformly random Lagrangian
arising from a Type~II code and $0\neq z\in W$, then
\[
 \Prb[z\in L]=
 \begin{cases}
 0,&q(z)=1,\\[1mm]
 \dfrac1{2^{h-1}+1},&q(z)=0.
 \end{cases}
\]
\end{lemma}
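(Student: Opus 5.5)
By \Cref{lem:codes-lagrangians}, Type~II codes correspond exactly to the Lagrangians $L\subseteq W$ that are totally singular for $q$, i.e.\ $q|_L\equiv 0$. So $B_h$ counts maximal totally singular subspaces of the nondegenerate quadratic space $(W,q)$ of dimension $2h$. By \Cref{lem:quadratic-quotient} this space has Witt index $h$ (it is of plus type). I would mimic the double-counting proof of \Cref{lem:all-lagrangians}, now restricted to singular vectors, and induct on $h$.

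\textbf{Step 1: count pairs.} Double-count pairs $(L,z)$ with $L$ totally singular of dimension $h$ and $0\ne z\in L$. Every nonzero vector of such an $L$ is singular, so there are $B_h(2^h-1)$ pairs.

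\textbf{Step 2: Lagrangians through a fixed vector.} Fix $z\neq 0$.
\begin{itemize}
\item If $q(z)=1$, no totally singular $L$ contains $z$. This gives the first case of the probability claim.
\item If $q(z)=0$, any totally singular $L\ni z$ satisfies $\langle z\rangle\subseteq L\subseteq z^\perp$. Since $q$ vanishes on $\langle z\rangle$ and $b(z,\cdot)=0$ on $z^\perp$, $q$ descends to a well-defined quadratic form on $z^\perp/\langle z\rangle$. This induced form is nondegenerate, of dimension $2h-2$, and of the same plus type, because a totally singular $h$-space through $z$ projects to a totally singular $(h-1)$-space. The map $L\mapsto L/\langle z\rangle$ is then a bijection onto maximal totally singular subspaces of the quotient. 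So exactly $B_{h-1}$ of them contain $z$.
\end{itemize}

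\textbf{Step 3: count singular vectors.} We need the number $s_h$ of nonzero singular vectors in a $2h$-dimensional plus-type space, namely $s_h=(2^h-1)(2^{h-1}+1)$. I would prove this by splitting $W=H\perp W'$ with $H$ a hyperbolic plane, which exists by Witt's theorem or directly from a singular vector $z$ and a partner $y$ with $b(z,y)=1$ and $q(y)=0$. Then $q(x_H+x')=q(x_H)+q(x')$. The zero count then satisfies $Z_h=3Z_{h-1}+(4^{h-1}-Z_{h-1})$, which yields $Z_h=2^{2h-1}+2^{h-1}$ and hence $s_h=Z_h-1$.

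\textbf{Step 4: recursion and probabilities.} The double count gives $B_h(2^h-1)=s_hB_{h-1}$, so $B_h=(2^{h-1}+1)B_{h-1}$. For the base case, $B_1=2$: a hyperbolic plane has exactly two singular lines. Iterating gives $B_h=2\prod_{j=1}^{h-1}(2^j+1)$. Dividing by $A_h$ from \Cref{lem:all-lagrangians} gives the probability $2/(2^h+1)$. For singular $z$, the conditional probability is $B_{h-1}/B_h=1/(2^{h-1}+1)$.

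\textbf{Main obstacle.} The delicate point is certifying that the quotient $z^\perp/\langle z\rangle$ is again nondegenerate of plus type, so that the same $B_{h-1}$ applies and the induction closes. I would handle this by choosing a hyperbolic partner $y$ of $z$ and identifying $z^\perp/\langle z\rangle$ with $\langle z,y\rangle^\perp$. This complement is nondegenerate, and its Witt index is forced to be $h-1$ by the existence of the totally singular subspace from \Cref{lem:quadratic-quotient}.
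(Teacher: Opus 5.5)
Your proposal is correct and takes essentially the same route as the paper: double count pairs $(L,z)$, biject Type~II Lagrangians through a singular $z$ with maximal totally singular subspaces of $z^\perp/\langle z\rangle$, count the nonzero singular vectors, and iterate $B_h(2^h-1)=S_hB_{h-1}$. The differences are local. You obtain $S_h$ by recursively splitting off hyperbolic planes, whereas the paper writes down a symplectic basis $e_i,f_i$ adapted to $L_0$ with $q=\sum_i a_ib_i$ and counts directly. Your ``main obstacle'' is settled as in the paper: $L_0/\langle z\rangle$, or the image of $L_0\cap z^\perp$ when $z\notin L_0$, is a totally singular $(h-1)$-space in the quotient.
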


\begin{proof}
By \Cref{lem:codes-lagrangians}, these are precisely the $h$-dimensional
totally singular subspaces of $(W,q)$. We use the double counting argument from \cite{Taylor1992}. 
We now double-count pairs $(L,z)$ where $L$ is an $h$-dimensional totally
singular subspace and $0\neq z\in L$.  Counting first by $L$ gives
$B_h(2^h-1)$.  For a fixed singular $z$, such subspaces correspond, via
$L\mapsto L/\langle z\rangle$, to the $(h-1)$-dimensional totally singular
subspaces of $z^\perp/\langle z\rangle$.  Indeed, $q(x+z)=q(x)$ for
$x\in z^\perp$, so $q$ descends to this quotient, and its polar form is
nondegenerate because the radical of $b|_{z^\perp}$ is $\langle z\rangle$.
The quotient also contains an $(h-1)$-dimensional totally singular subspace:
use $L_0/\langle z\rangle$ if $z\in L_0$, and the image of
$L_0\cap z^\perp$ otherwise.  Hence there are $B_{h-1}$ such subspaces. We let $S_h$ be the number of nonzero vectors $x$ satisfying $q(x)=0$. Double counting therefore gives, 
\[
 B_h(2^h-1)=S_hB_{h-1},\qquad B_0=1.
\]

Next we compute $S_h$. Let $L_0$ be the $h$-dimensional
totally singular subspace supplied by \Cref{lem:quadratic-quotient}, and choose
a basis $e_1,\ldots,e_h$ of $L_0$.  Since $L_0$ is Lagrangian one has
$L_0^\perp=L_0$.  Nondegeneracy of $b$ therefore induces an isomorphism
$W/L_0\cong L_0^*$ via $y+L_0\mapsto f_y$ with linear $f_y: x\mapsto b(x,y)$.  Taking
a dual basis, we may choose $f_1,\ldots,f_h\in W$ with
$b(e_i,f_j)=\delta_{ij}$.  Replacing each $f_i$ by a vector in the coset
$f_i+L_0$ we may further assume $b(f_i,f_j)=0$.  Thus
\[
 b(e_i,f_j)=\delta_{ij},\qquad b(f_i,f_j)=0.
\]
Replacing $f_i$ by $f_i+q(f_i)e_i$ makes $q(f_i)=0$ without changing these
identities.  The $2h$ vectors $\{e_1,\ldots,e_h,f_1,\ldots,f_h\}$ are linearly
independent, hence a basis of $W$.  Every $x\in W$ therefore has a unique
expression $\sum_i(a_i e_i+b_i f_i)$, and
\[
 q\!\left(\sum_i a_i e_i+b_i f_i\right)=\sum_i a_i b_i.
\]
There are $2^h$ choices of $b$ when $a=0$, and $2^{h-1}$ when $a\neq0$.
Thus the number of nonzero vectors $x$ satisfying $q(x)=0$ is
\[
 S_h=2^h+(2^h-1)2^{h-1}-1
     =(2^h-1)(2^{h-1}+1).
\]
Therefore
\[
 B_h=2\prod_{j=1}^{h-1}(2^j+1).
\]
Since $A_h=\prod_{j=1}^h(2^j+1)$, we have
$B_h/A_h=2/(2^h+1)$.  Finally, a vector with $q(z)=1$ lies in no such $L$,
whereas for $0\neq z$ with $q(z)=0$ the same correspondence gives
\[
 \Prb[z\in L]=\frac{B_{h-1}}{B_h}
             =\frac1{2^{h-1}+1}.
\]
\end{proof}

We use the following two ensembles of codes:
\begin{itemize}
 \item $\mathcal C_{\mathrm{II}}$, the uniform Type-II self-dual ensemble;
 \item $\mathcal C_{\mathrm I}^{\mathrm{good}}$, the uniform Type-I
       self-dual ensemble conditioned on having no weight-two code.
\end{itemize}

\begin{lemma}[The good Type-I ensemble]
\label{lem:good-type-one}
For a uniform Type-I Lagrangian $L$, the probability that it contains a weight-two vector is at most $\frac{\binom d2}{2^h-1}\le \frac{1}{2}$ for all sufficiently large $d$. In addition, 
for each fixed $0\neq z\in W$, the following holds:
\[
 \Prb_{C\sim\mathcal C_{\mathrm I}^{\mathrm{good}}}[z\in L=C/R]
        \leq\frac2{2^h-1}\leq2^{2-h}.
\]
\end{lemma}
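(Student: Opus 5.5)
The proof will bound each probability by counting inside the uniform Type-I ensemble, using \Cref{lem:all-lagrangians} and \Cref{lem:type-two-count}. Let $A_h$ and $B_h$ be as there, so the number of Type-I Lagrangians is $A_h-B_h$. By \Cref{lem:type-two-count},
\[
 \frac{A_h-B_h}{A_h}=1-\frac{2}{2^h+1}=\frac{2^h-1}{2^h+1}.
\]
For fixed nonzero $z\in W$, the number of Type-I Lagrangians containing $z$ is at most the number of all Lagrangians containing $z$, which is $A_h/(2^h+1)$ by \Cref{lem:all-lagrangians}. Hence for a uniform Type-I Lagrangian $L$,
\[
 \Prb[z\in L]\le \frac{A_h/(2^h+1)}{A_h(2^h-1)/(2^h+1)}=\frac{1}{2^h-1}.
\]
This bound holds for every nonzero $z$, whatever $q(z)$ is, so no case split is needed.

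Next comes the weight-two event. A weight-two vector $x=e_i+e_j$ lies in $E$, and its class $[x]$ is nonzero in $W$ because $x\ne 0,\one$ (here $d\ge 8$). There are $\binom d2$ such vectors. A code $C$ contains $x$ exactly when $[x]\in L$, since $R\subseteq C$. A union bound then gives probability at most $\binom d2/(2^h-1)$. Since $h=d/2-1$, this is at most $1/2$ for large $d$: the numerator is polynomial in $d$ and the denominator exponential.

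Finally we condition on the good event $G$ that $C$ has no weight-two vector. For fixed nonzero $z$,
\[
 \Prb[z\in L\mid G]\le \frac{\Prb[z\in L]}{\Prb[G]}\le \frac{1/(2^h-1)}{1/2}=\frac{2}{2^h-1}.
\]
The last inequality $2/(2^h-1)\le 2^{2-h}$ is equivalent to $2^h\le 2(2^h-1)$, i.e.\ $2^h\ge 2$, which holds for $h\ge1$.

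The only point needing care is checking that weight-two vectors have nonzero, pairwise distinct images in $W$ and that membership in $C$ corresponds to membership of the class in $L$. Even without distinctness the union bound over the $\binom d2$ vectors remains valid, so no real obstacle is expected.
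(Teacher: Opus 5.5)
Your proposal is correct and follows the paper's proof essentially step for step. You get the conditional bound $1/(2^h-1)$ by dividing $A_{h-1}/A_h$ by the Type-I fraction $(2^h-1)/(2^h+1)$, take a union bound over the $\binom d2$ weight-two vectors, and then condition on the good event using $\Pr[G]\ge 1/2$. You also make explicit details the paper leaves implicit: that weight-two vectors have nonzero classes in $W$, that $x\in C$ iff $[x]\in L$, the inequality $\Pr[z\in L\mid G]\le \Pr[z\in L]/\Pr[G]$, and the check $2/(2^h-1)\le 2^{2-h}$ for $h\ge 1$. All of these are correct.
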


\begin{proof}
By \Cref{lem:all-lagrangians,lem:type-two-count}, a uniform Lagrangian is
Type I with probability
\[
 1-\frac2{2^h+1}=\frac{2^h-1}{2^h+1}.
\]
Thus, for fixed $z\neq0$,
\[
 \Prb[z\in L=C/R\mid C\text{ is Type I}]
 \leq \frac{1/(2^h+1)}{(2^h-1)/(2^h+1)}
 =\frac1{2^h-1}.
\]
A union bound over the $\binom d2$ weight-two vectors
gives the probability upper bound $\frac{\binom d2}{2^h-1}$.  For large $d$, this is at most $1/2$. So $\Prb_{C\sim\mathcal C_{\mathrm I}^{\mathrm{good}}}[z\in L=C/R]
\leq\frac2{2^h-1}\leq2^{2-h}$.
\end{proof}

\subsection{Construction and property of the hard instance.}

We now construct the hard instance by Cayley graph construction.

\begin{definition}
\label{def:cayley-graph}
Let $\Gamma$ be a finite abelian group, and let
$S\subseteq\Gamma\setminus\{0\}$ satisfy $-S=S$.  The \emph{Cayley graph} $\Cay(\Gamma,S)$ has vertex set $\Gamma$, and for two vertices $x,y\in \Gamma$, $(x,y)$ is an edge whenever $y-x\in S$. We call $S$ \emph{generators} of $\Cay(\Gamma,S)$.
\end{definition}

For a self-dual code $C\subseteq\F_2^d$, define
\[
 G_C=\Cay\!\left(\F_2^d/C,
       \{e_1+C,\ldots,e_d+C\}\right).
\]

\Cref{fig:cayley-quotient} illustrates the construction on the three-dimensional cube.

\begin{figure}[ht]
\centering
\begin{tikzpicture}[font=\scriptsize, >=Stealth]
  \begin{scope}
    \coordinate (v000) at (0.00,0.00);
    \coordinate (v100) at (2.35,0.00);
    \coordinate (v010) at (0.95,0.70);
    \coordinate (v110) at (3.30,0.70);
    \coordinate (v001) at (0.00,2.05);
    \coordinate (v101) at (2.35,2.05);
    \coordinate (v011) at (0.95,2.75);
    \coordinate (v111) at (3.30,2.75);
    \draw[gray!55, dashed] (v010) -- (v000) -- (v001);
    \draw[gray!55, dashed] (v000) -- (v100);
    \draw[thick] (v100) -- (v110) -- (v010) -- (v011) -- (v001)
                 -- (v101) -- (v100);
    \draw[thick] (v101) -- (v111) -- (v011);
    \draw[thick] (v110) -- (v111);
    \draw[red!70!black, dashed, thick] (v000) -- (v111);
    \draw[red!70!black, dashed, thick] (v100) -- (v011);
    \draw[red!70!black, dashed, thick] (v010) -- (v101);
    \draw[red!70!black, dashed, thick] (v001) -- (v110);
    \fill (v000) circle (1.5pt); \node[below left]  at (v000) {\texttt{000}};
    \fill (v100) circle (1.5pt); \node[below right] at (v100) {\texttt{100}};
    \fill (v010) circle (1.5pt); \node[left]        at (v010) {\texttt{010}};
    \fill (v110) circle (1.5pt); \node[right]       at (v110) {\texttt{110}};
    \fill (v001) circle (1.5pt); \node[left]        at (v001) {\texttt{001}};
    \fill (v101) circle (1.5pt); \node[right]       at (v101) {\texttt{101}};
    \fill (v011) circle (1.5pt); \node[above]       at (v011) {\texttt{011}};
    \fill (v111) circle (1.5pt); \node[above right] at (v111) {\texttt{111}};
    \node at (1.65,-0.72) {$Q_3=\Cay(\F_2^3,\{e_i\})$};
  \end{scope}

  \draw[very thick, ->] (4.55,1.35) -- (6.35,1.35);
  \node[above] at (5.45,1.40) {$/\,C$};

  \begin{scope}[shift={(7.55,0.15)}]
    \coordinate (a) at (1.15,2.45);
    \coordinate (b) at (0.00,0.85);
    \coordinate (c) at (2.30,0.85);
    \coordinate (d) at (1.15,-0.15);
    \draw[thick] (a) -- (b) -- (c) -- (a);
    \draw[thick] (a) -- (d) -- (b);
    \draw[thick] (c) -- (d);
    \foreach \p in {a,b,c,d} {\fill (\p) circle (1.6pt);}
    \node[above] at (a) {\texttt{000}$\sim$\texttt{111}};
    \node[left]  at (b) {\texttt{100}$\sim$\texttt{011}};
    \node[right] at (c) {\texttt{010}$\sim$\texttt{101}};
    \node[below] at (d) {\texttt{001}$\sim$\texttt{110}};
    \node at (1.15,-0.95) {$G_C=\Cay(\F_2^3/C,\{e_i+C\})$};
  \end{scope}
\end{tikzpicture}
\caption{Illustration of our construction with $C=\langle\mathbf{1}\rangle$ in the three-dimensional cube.}
\label{fig:cayley-quotient}
\end{figure}

\begin{lemma}
\label{lem:quotient-graph}
For every $C\in C_{\mathrm I}^{\mathrm{good}}$ and every $C\in\mathcal C_{\mathrm{II}}$, $G_C$ is a connected
simple unweighted $d$-regular graph on $n=2^{d/2}$ vertices.
\end{lemma}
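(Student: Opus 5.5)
The plan is to verify each property directly from the definition of $G_C=\Cay(\F_2^d/C,\{e_i+C\})$, using only that $C$ is self-dual (hence $\dim C=d/2$) and that $C$ contains no vectors of weight $1$ or $2$.

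\textbf{Order.} By \Cref{def:self-dual-code}, every self-dual code has $\dim C=d/2$. Hence $|\F_2^d/C|=2^{d-d/2}=2^{d/2}$, so $n=2^{d/2}$.

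\textbf{Symmetry and undirectedness.} In characteristic $2$ we have $-(e_i+C)=e_i+C$. Thus the generating set $S=\{e_i+C\}$ satisfies $-S=S$, and adjacency $y-x\in S$ is a symmetric relation. So $G_C$ is an undirected, unweighted Cayley graph.

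\textbf{No loops.} We need $e_i+C\neq 0+C$, that is, $e_i\notin C$. This holds because every vector of a self-dual code has even weight: by \Cref{lem:codes-lagrangians}, $C\subseteq E$, while $\wt(e_i)=1$.

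\textbf{No multi-edges, and degree exactly $d$.} We need the $d$ cosets $e_i+C$ to be pairwise distinct, that is, $e_i+e_j\notin C$ for $i\neq j$. This is the one step where the ensembles matter.
\begin{itemize}
  \item For $C\in\mathcal C_{\mathrm{II}}$, every codeword has weight divisible by $4$ (\Cref{def:type-I-II}), so $C$ has no weight-two vectors.
  \item For $C\in\mathcal C_{\mathrm I}^{\mathrm{good}}$, the absence of weight-two codewords is exactly the conditioning in the definition of the ensemble.
\end{itemize}
Given distinctness, each vertex $x+C$ has exactly the $d$ distinct neighbors $x+e_i+C$, none equal to $x+C$. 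Hence $G_C$ is simple and $d$-regular.

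\textbf{Connectivity.} A Cayley graph $\Cay(\Gamma,S)$ is connected if and only if $S$ generates $\Gamma$. The vectors $e_1,\dots,e_d$ span $\F_2^d$, so their images span the quotient $\F_2^d/C$. Therefore every coset is reached by a walk from $0+C$.

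I do not expect a real obstacle. The only point needing care is the distinctness of generators, which relies on the no-weight-two condition, and that condition holds for both ensembles by definition.
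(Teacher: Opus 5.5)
Your proposal is correct and matches the paper's proof step by step: the order $2^{d/2}$ from $\dim C=d/2$, no loops because self-dual codes are even, distinct generators because neither ensemble contains weight-two codewords, and connectivity because the $e_i+C$ generate the quotient. Your explicit check that $-S=S$ in characteristic $2$ is a harmless addition that the paper leaves implicit.
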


\begin{proof}
The quotient has order $2^{d-\dim C}=2^{d/2}$.  There are no self loops because a
self-dual code is even and therefore $x+C\neq x+e_i+C$ for all $x\in \F_2^d$ and $i\in [d]$.  For $i\neq j$, the
generators $e_i+C=e_j+C$ exactly when $e_i+e_j\in C$, which can not happen for both $C\in \mathcal C_{\mathrm I}^{\mathrm{good}}$ and $C\in \mathcal C_{\mathrm{II}}$. This is because $\wt(x)\neq 2$ for all $x\in C$ whenever $C\in \mathcal C_{\mathrm I}^{\mathrm{good}}$ and $C\in \mathcal C_{\mathrm{II}}$.  Thus the generators are
nonzero and distinct, making the graph simple and $d$-regular.  They generate
the quotient, so the graph is connected.  
\end{proof}

Next we provide the exact spectral information of the constructed graph.

\begin{lemma}
\label{lem:quotient-spectrum}
For a self-dual code $C$, the normalized-adjacency eigenvalues of $G_C$ (with
multiplicity) are
\[
                   \lambda_y=1-\frac{2\wt(y)}d,
                   \qquad y\in C.
\]
\end{lemma}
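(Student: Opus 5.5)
The plan is to use Fourier analysis on the finite abelian group $\Gamma=\F_2^d/C$. Characters of $\F_2^d$ are $\chi_y(x)=(-1)^{x\cdot y}$ for $y\in\F_2^d$. Such a character descends to a well-defined character of the quotient $\Gamma$ exactly when it is trivial on $C$, i.e.\ when $y\in C^\perp$. Since $C$ is self-dual, $C^\perp=C$, so the characters of $\Gamma$ are indexed by $y\in C$. There are $|C|=2^{d/2}=|\Gamma|=n$ of them, and distinct $y$ give distinct characters, so they form a complete orthogonal basis of $\mathbb{C}^\Gamma$.

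Next I will use the standard fact that for a Cayley graph on an abelian group, every character is an eigenvector of the adjacency matrix. Concretely, for $x+C\in\Gamma$,
\[
(A\chi_y)(x+C)=\sum_{i=1}^d \chi_y(x+e_i+C)=\chi_y(x+C)\sum_{i=1}^d(-1)^{y_i}.
\]
The sum on the right equals $d-2\wt(y)$. This uses \Cref{lem:quotient-graph}: the $d$ generators $e_i+C$ are distinct and nonzero, so the neighbourhood of each vertex is exactly $\{x+e_i+C\}_{i\in[d]}$ with no multi-edges. Consequently $A_{G_C}$ has eigenvalues $d-2\wt(y)$ for $y\in C$, with multiplicity, because the characters form a basis.

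Finally, the graph is $d$-regular, so $D=dI$ and $N_{G_C}=A_{G_C}/d$. This gives $\lambda_y=1-2\wt(y)/d$.

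The only subtle point is the simplicity and regularity hypothesis. The statement is phrased for a general self-dual code $C$. For the ensembles we use, \Cref{lem:quotient-graph} guarantees that $G_C$ is simple and $d$-regular. For a general self-dual $C$, I will interpret $G_C$ as the Cayley multigraph with $d$ generators counted with multiplicity. There are still no loops, since $C$ is even. The same computation then goes through verbatim, with degree $d$ counted with multiplicity. I do not expect any real obstacle beyond this bookkeeping.
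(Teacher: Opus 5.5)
Your proposal is correct and follows essentially the same route as the paper. Both take $f_y(x+C)=(-1)^{x\cdot y}$ for $y\in C=C^\perp$, compute the adjacency eigenvalue $d-2\wt(y)$, divide by $d$, and conclude from having $|C|=n$ pairwise orthogonal eigenvectors. Two small differences: you cite standard character orthogonality where the paper verifies it directly by a bit-flip cancellation. Your remark that, for a general self-dual $C$ with weight-two codewords, $G_C$ must be read as a multigraph with the $d$ generators counted with multiplicity is a useful clarification; the paper's computation $\sum_{i=1}^d f_y(x+e_i+C)$ uses this reading only implicitly.
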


\begin{proof}
Write $V(G_C)=\F_2^d/C$ and let $A$ be the adjacency matrix of $G_C$.
For each $y\in C$ define $f_y:V(G_C)\to\R$ by
\[
 f_y(x+C)=(-1)^{x\cdot y}.
\]
This is well defined: if $x'=x+c$ with $c\in C$, then
$x'\cdot y=x\cdot y+c\cdot y=x\cdot y$ because $y\in C=C^\perp$.
Applying $A$ at the vertex $x+C$ gives
\begin{align*}
 (Af_y)(x+C)
 &=\sum_{i=1}^d f_y(x+e_i+C)
 =\sum_{i=1}^d (-1)^{(x+e_i)\cdot y}\\
 &=(-1)^{x\cdot y}\sum_{i=1}^d (-1)^{y_i}
 =\Bigl(\sum_{i=1}^d (-1)^{y_i}\Bigr)f_y(x+C).
\end{align*}
It satisfies $\sum_{i=1}^d (-1)^{y_i}=d-2\wt(y)$, so $f_y$ is an eigenvector of $A$ with
eigenvalue $d-2\wt(y)$.  The normalized adjacency is $N=A/d$, hence
\[
 Nf_y=\Bigl(1-\frac{2\wt(y)}d\Bigr)f_y.
\]
There are $|C|=2^{d/2}=n$ such vectors $\{f_y\}_{y\in C}$.  They are
pairwise orthogonal in $\R^{V(G_C)}$: if $y\neq y'$ in $C$, set
$z=y+y'\neq0$.  Then $z\in C=C^\perp$, so $(-1)^{x\cdot z}$ is constant on
each coset of $C$.  Moreover, it holds that
\[
 \sum_{x\in\F_2^d}(-1)^{x\cdot z}=0.
\]
To see this, by $z\neq0$, there is a coordinate $j$ with $z_j=1$, and pairing
each $x$ with the vector $x'$ obtained by flipping the $j$-th bit gives
$(-1)^{x'\cdot z}=(-1)^{x\cdot z+1}=-(-1)^{x\cdot z}$, so the sum cancels
in pairs.  Grouping by cosets therefore yields
$\sum_{x+C}f_y(x+C)f_{y'}(x+C)=0$.  Hence $\{f_y\}_{y\in C}$ is an
eigenbasis of $N$, and the eigenvalues $1-\frac{2\wt(y)}d$ are the full spectrum with
multiplicity.
\end{proof}

Next we prove the spectral density of $G_C$ for $C\in \mathcal C_{\mathrm{II}}$ and $C\in \mathcal C_{\mathrm I}$ seperates. We refer to \Cref{fig:type-spectra} for illustration of the spectral densities of the two type of graphs. Define the lattice
\[
 \Lambda_d=\left\{1-\frac{8j}{d}:0\leq j\leq\frac d4\right\}
\]
and its distance function
\[
                 f_d(x)=\dist(x,\Lambda_d),\qquad x\in[-1,1].
\]
Clearly, the function $f_d$ is $1$-Lipschitz.

\begin{proposition}
\label{prop:classical-separation}
For every $C_{\mathrm{II}}\in\mathcal C_{\mathrm{II}}$ and every
$C_{\mathrm I}\in\mathcal C_{\mathrm I}$,
\[
 \int f_d\,d\rho_{G_{C_{\mathrm{II}}}}=0,
 \qquad
 \int f_d\,d\rho_{G_{C_{\mathrm I}}}=\frac2d.
\]
Consequently
\[
 W_1\!\left(\rho_{G_{C_{\mathrm{II}}}},
             \rho_{G_{C_{\mathrm I}}}\right)\geq\frac2d.
\]
\end{proposition}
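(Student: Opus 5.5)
Our approach is to compute both integrals exactly from the explicit spectrum in \Cref{lem:quotient-spectrum}, and then to read off the $W_1$ bound from the dual formula in \Cref{prop:kr-duality}.

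By \Cref{lem:quotient-spectrum}, for any self-dual code $C$ we have
\[
 \int f_d\,d\rho_{G_C}=\frac1{|C|}\sum_{y\in C} f_d\!\left(1-\frac{2\wt(y)}d\right),
\]
so everything reduces to evaluating $f_d$ at the points $1-2w/d$ with $w=\wt(y)$. Every $y\in C$ has even weight because $C\subseteq E$ (\Cref{lem:codes-lagrangians}). If $w\equiv 0\pmod 4$, write $w=4j$ with $0\le j\le d/4$. Then $1-2w/d=1-8j/d\in\Lambda_d$, so $f_d$ vanishes there. If $w\equiv2\pmod4$, write $w=4j+2$ with $0\le j<d/4$. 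Then $1-2w/d=1-8j/d-4/d$, which is exactly the midpoint of the consecutive lattice points $1-8j/d$ and $1-8(j+1)/d$. Both of these lie in $\Lambda_d$, and their spacing is $8/d$. Since $\Lambda_d$ is an arithmetic progression inside $[-1,1]$ and the point lies strictly inside its hull, the nearest lattice point is at distance exactly $4/d$.

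For Type~II every weight is divisible by $4$, so the integral is $0$. For Type~I, \Cref{lem:codes-lagrangians} says exactly half of the codewords have weight $\equiv2\pmod4$. We therefore get
\[
 \int f_d\,d\rho_{G_{C_{\mathrm I}}}=\tfrac12\cdot\tfrac4d=\tfrac2d.
\]
Finally, $f_d$ is $1$-Lipschitz, so the supremum characterization in \Cref{prop:kr-duality} gives
\[
 W_1\!\left(\rho_{G_{C_{\mathrm{II}}}},\rho_{G_{C_{\mathrm I}}}\right)\ge\left|\tfrac2d-0\right|=\tfrac2d.
\]

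There is no real obstacle here. The only points needing care are the exact distance computation, where we must confirm that both neighbouring lattice points lie in the range $0\le j\le d/4$ (they do, since $w\le d$ forces $j+1\le d/4$), and the use of the half-and-half count from \Cref{lem:codes-lagrangians}.
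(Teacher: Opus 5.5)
Your proposal is correct and follows essentially the same route as the paper: read the eigenvalues off \Cref{lem:quotient-spectrum}, note that weights $\equiv 0 \pmod 4$ land on $\Lambda_d$ while weights $\equiv 2 \pmod 4$ land at midpoints at distance exactly $4/d$, use the half-and-half count from \Cref{lem:codes-lagrangians}, and conclude via \Cref{prop:kr-duality} with the $1$-Lipschitz test function $f_d$. Your extra check that $j+1\le d/4$, so that both neighbouring lattice points actually belong to $\Lambda_d$, makes explicit a detail the paper leaves implicit.
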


\begin{proof}
For Type II, every codeword has weight $0$ modulo $4$, so
\Cref{lem:quotient-spectrum} places every eigenvalue in $\Lambda_d$.
For Type I, exactly half of the $|C|=2^{d/2}$ codewords have weight
$2$ modulo $4$.  If $\wt(y)=4j+2$, then
\[
                 \lambda_y=1-\frac{8j+4}{d},
\]
whose distance from $\Lambda_d$ is exactly $4/d$.  The remaining half have
weight $0$ modulo $4$ and contribute zero.  Since the empirical spectral
measure assigns mass $1/|C|$ to each character/eigenvalue, the integral is
\[
             \frac{|C|/2}{|C|}\cdot\frac4d=\frac2d.
\]
By \Cref{prop:kr-duality}, we have $W_1\!\left(\rho_{G_{C_{\mathrm{II}}}},
             \rho_{G_{C_{\mathrm I}}}\right)\geq\frac2d$.
\end{proof}

\begin{figure}[ht]
  \centering
  \begin{tikzpicture}
  \begin{axis}[
    name=typeII,
    width=0.46\textwidth,
    height=3.4cm,
    xmin=-1.18, xmax=1.18,
    ymin=0, ymax=0.86,
    xtick={-1,-0.5,0,0.5,1},
    extra x ticks={-0.75,-0.25,0.25,0.75},
    extra x tick labels={},
    extra x tick style={major tick length=2.5pt, tick style={gray}},
    ytick={0,0.2,0.4,0.6,0.8},
    xlabel={eigenvalue of $N_G$},
    title={Type~II},
    title style={font=\small\bfseries},
    tick label style={font=\small},
    tick align=inside,
    axis background/.style={fill=white},
    clip=false,
  ]
  \pgfplotsinvokeforeach{-1,-0.5,0,0.5,1}{
    \draw[gray!45, densely dotted] (axis cs:#1,0) -- (axis cs:#1,0.86);
  }
  \addplot[
    ycomb,
    blue!70!black,
    line width=0.9pt,
    mark=*,
    mark size=1.5pt,
    mark options={draw=red, fill=red},
  ] coordinates {
    (-1, 0.0039)
    (-0.5, 0.1094)
    (0, 0.7734)
    (0.5, 0.1094)
    (1, 0.0039)
  };
  \end{axis}
  
  \begin{axis}[
    name=typeI,
    at={(typeII.east)},
    anchor=west,
    xshift=0.85cm,
    width=0.46\textwidth,
    height=3.4cm,
    xmin=-1.18, xmax=1.18,
    ymin=0, ymax=0.86,
    xtick={-1,-0.5,0,0.5,1},
    extra x ticks={-0.75,-0.25,0.25,0.75},
    extra x tick labels={},
    extra x tick style={major tick length=2.5pt, tick style={gray}},
    ytick={0,0.2,0.4,0.6,0.8},
    xlabel={eigenvalue of $N_G$},
    title={good Type~I},
    title style={font=\small\bfseries},
    tick label style={font=\small},
    tick align=inside,
    axis background/.style={fill=white},
    clip=false,
  ]
  \pgfplotsinvokeforeach{-1,-0.5,0,0.5,1}{
    \draw[gray!45, densely dotted] (axis cs:#1,0) -- (axis cs:#1,0.86);
  }
  \addplot[
    ycomb,
    blue!70!black,
    line width=0.9pt,
    mark=*,
    mark size=1.5pt,
    mark options={draw=red, fill=red},
  ] coordinates {
    (-1, 0.0039)
    (-0.5, 0.0938)
    (-0.25, 0.2500)
    (0, 0.3047)
    (0.25, 0.2500)
    (0.5, 0.0938)
    (1, 0.0039)
  };
  \end{axis}
  \end{tikzpicture}
  \caption{Spectral densities of the two type of graphs at $d=16$, $n=2^{d/2}=256$.}
  \label{fig:type-spectra}
  \end{figure}

\subsection{Indistinguishability via coupling.}\label{subsec:indistinguishability-via-coupling}

We prove indistinguishability under local access.  Define the shadow graph
\[
 U=\F_2^d/R,
 \qquad
 H=\Cay\!\left(U,\{e_1+R,\ldots,e_d+R\}\right),
\]
and, for $L=C/R$, the quotient map
\[
 \pi_L:U\longrightarrow U/L=\F_2^d/C,
\]
which sends the uniform distribution on $U$ to the uniform distribution on
$\F_2^d/C$.  Next we prove that the local
access queries cannot be distinguished via coupling. Our argument is based
on Yao's minimax principle~\cite{Yao1977}, so we first fix any deterministic
$Q$-query algorithm and two input distributions: (i) $G_{C_I}$ with a uniform draw
$C_{\mathrm I}\sim\mathcal C_{\mathrm I}^{\mathrm{good}}$ and (ii) $G_{C_{II}}$ with a uniform
draw $C_{\mathrm{II}}\sim\mathcal C_{\mathrm{II}}$. We prove that the total variation distance between the query transcripts of the two graphs is exponentially small.

\begin{lemma}
\label{lem:adaptive-coupling}
Let $\mathcal A$ be an arbitrary deterministic algorithm that makes at most
$Q$ local-access queries with $Q\le 2^h$.  Draw $C_{\mathrm I}\sim\mathcal C_{\mathrm I}^{\mathrm{good}}$
and $C_{\mathrm{II}}\sim\mathcal C_{\mathrm{II}}$ independently and uniformly,
and write $T_{\mathrm I}$ (resp.\ $T_{\mathrm{II}}$) for the complete query
transcript of $\mathcal A$ on $G_{C_{\mathrm I}}$ (resp.\ $G_{C_{\mathrm{II}}}$). Then
\[
                 \TV(T_{\mathrm I},T_{\mathrm{II}})
                    \leq3(Q+1)^2 2^{-h}.
\]
\end{lemma}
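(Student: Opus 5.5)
We will couple both executions to a single simulated execution on the shadow graph $H$, which is a $d$-regular Cayley graph on $U=\F_2^d/R$. Both graphs $G_{C_{\mathrm I}}$ and $G_{C_{\mathrm{II}}}$ are quotients $\pi_L(H)$, so we fix a canonical labelling. A vertex of $G_C$ is the coset $\pi_L(u)$, the $i$-th neighbor of $\pi_L(u)$ is $\pi_L(u+e_i+R)$, and the degree is always $d$ by \Cref{lem:quotient-graph}. We present the vertex names to the algorithm through a uniformly random bijection, equivalently an oracle-hidden labelling. This makes the transcript depend only on the equality pattern among the discovered cosets. A uniform vertex sample is realized as $\pi_L(u)$ with $u$ uniform in $U$.

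The run is simulated as follows. We maintain lifts $u_1,\dots,u_k\in U$ of every vertex seen so far. A uniform-vertex query draws a fresh uniform $u\in U$. A neighbor query $(v,i)$ at a vertex with lift $u_j$ gives the lift $u_j+e_i+R$. The answer handed to the algorithm is a new label unless $\pi_L(u)=\pi_L(u_j)$ for some earlier $j$, that is, unless $u+u_j\in L$, in which case the old label is reused.

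Define the bad event $\mathsf{Bad}$: some pair of distinct lifts $u\ne u'$ among the at most $Q+1$ generated points satisfies $u+u'\in L\setminus\{0\}$. Off $\mathsf{Bad}$, every answer is a fresh label. The transcript is then a deterministic function of the algorithm and of the label randomness, and it does not depend on $L$ or on whether $C$ is Type I or II. Hence
\[
 \TV(T_{\mathrm I},T_{\mathrm{II}})\le \Pr_{\mathrm I}[\mathsf{Bad}]+\Pr_{\mathrm{II}}[\mathsf{Bad}].
\]

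To bound $\Pr[\mathsf{Bad}]$, we generate the lifts using the $H$-run only. This is valid because, up to the first collision, the algorithm's behavior is identical to its run on $H$ and is therefore independent of $L$. The set of differences $z=u_j+u_{j'}\ne 0$ is then a random set of at most $\binom{Q+1}{2}$ vectors, independent of $L$. Conditioning on this set and applying a union bound, we get by \Cref{lem:good-type-one}
\[
 \Pr_{\mathrm I}[\mathsf{Bad}]\le\binom{Q+1}{2}\frac{2}{2^h-1},
\]
and by \Cref{lem:type-two-count}
\[
 \Pr_{\mathrm{II}}[\mathsf{Bad}]\le\binom{Q+1}{2}\frac{1}{2^{h-1}+1}.
\]
For large $h$ these sum to at most $(Q+1)^2(2^{1-h}+2^{1-h})/2\cdot(1+o(1))\le 3(Q+1)^2 2^{-h}$.

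The main obstacle is the adaptivity argument: justifying that the first collision can be analyzed on the $L$-independent $H$-run. We plan to handle it with the standard stopping argument, taking the union over the steps $t$ of the event that the first collision happens at step $t$. Up to step $t$ the lifts are determined by the $H$-run, so each such event involves only fixed differences that are independent of $L$. A secondary issue is making the label convention precise, via a random injective relabelling, so that identical equality patterns give identical transcript distributions.
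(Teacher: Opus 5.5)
Your proposal is correct and is essentially the paper's argument. Both couple the two executions to one randomly labelled run on the shadow graph $H$ and observe that the $G_C$-transcript agrees with the $H$-transcript unless two distinct $H$-lifts differ by an element of $L$. Both then union-bound over the at most $\binom{Q+1}{2}$ differences, which are independent of $L$, using \Cref{lem:good-type-one} and \Cref{lem:type-two-count}, and finish with the triangle inequality through $T_H$. Your constants in fact satisfy $\frac{2}{2^h-1}+\frac{1}{2^{h-1}+1}\le 6\cdot 2^{-h}$ for all $h\ge 1$, so no ``large $h$'' is needed.

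One small correction: off $\mathsf{Bad}$ it is not true that every answer is a fresh label, since for example backtracking along an edge revisits a vertex of $H$. The right statement is that labels are reused exactly when the $H$-lifts coincide, so the $G_C$-transcript equals the $H$-transcript. Also, differences $z\notin W$ contribute nothing to the union bound because $L\subseteq W$.
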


\begin{proof}
Since $\mathcal A$ is deterministic, its next query is a fixed function of the
answers received so far.  We couple both input distributions to a single
execution of $\mathcal A$ against the shadow graph $H$, as follows.
Throughout, draw a uniformly random permutation $\sigma$ of
$[n]=\{1,\ldots,2^{d/2}\}$ and assign labels from this random ordering:
the $k$th previously unseen vertex receives $\sigma(k)$; a previously seen
vertex keeps the same label.  A coupling argument is detailed in the following.
\begin{itemize}
\item \textbf{Query transcript on $H$.}
We answer the queries of $\mathcal A$ on $H$ as follows: (i) a uniform-vertex query returns
an independent uniform $x\in U$; (ii) a neighbor query in direction $i$ at a
previously returned vertex $x$ returns $x+e_i+R$.  Present each answer to
$\mathcal A$ by the label assigned under the rule above. That is, the $k$-th
newly seen vertex $x_j$ receives $\sigma(k)$, and a previously seen vertex
receives the same label.  This produces a sequence of at most $Q+1$ shadow vertices
$x_0,\ldots,x_t\in U$ together with a shadow transcript $T_H$ of returned
labels. 

\item \textbf{Query transcript on $G_{C_{\mathrm I}}$ and $G_{C_{\mathrm{II}}}$.}
Given a hidden code $C\in\{C_{\mathrm I},C_{\mathrm{II}}\}$ with $L=C/R$,
push the same shadow vertex sequence forward under $\pi_L$ step by step. Specifically, at the initial step we query $\pi_L(x_0)$ on $G_C$. Assume at step $t\le Q-1$ the coupling keeps success and we have already query the vertices
\[
 \pi_L(x_0),\;\pi_L(x_1),\;\ldots,\;\pi_L(x_t)\in V(G_C).
\]
The labels to these vertices are assigned by the same permutation $\sigma$
as on $H$: the $k$-th newly seen image receives $\sigma(k)$. We obtain the current $t$ step transcript, denoted as $T_C^{(t)}$. Denote the first $t$ elements of $T_H$ as $T_H^{(t)}$. We now query the next vertex $x_{t+1}$ on $H$. Consider the following two cases: (i) If $T_H^{(t)}=T_C^{(t)}$, we query $\pi_L(x_{t+1})$ on $G_C$. This is because $\mathcal A$ is deterministic, and we have only explored the same structure on $H$ and $G_C$. The query choices therefore remain identical to the execution on $H$. We assign the label to $\pi_L(x_{t+1})$. (ii) If $T_H^{(t)}\neq T_C^{(t)}$, the coupling fails and we assign the label according to the $(t+1)$-th query on $G_C$. After $Q$ steps, we write $T_C$ for the resulting transcript of returned labels.
\end{itemize}

Next we consider the condition of disagreement. As long as $\pi_L$ is injective on the visited set $\{x_0,\ldots,x_t\}$, it holds that $T_H^{(t)}=T_C^{(t)}$ at each step $t\le Q$.
The first time $T_C$ can differ from $T_H$ is when two distinct visited
vertices $x_i,x_j$ on $H$ satisfy $x_i\neq x_j$ but $\pi_L(x_i)=\pi_L(x_j)$
for some $i,j\in[Q]$.  This holds if and only if $z:=x_i-x_j\in L$. 

We next bound the disagreement probability.  For a pair with $z\notin W$ the
collision probability is zero because $L\subseteq W$ by \Cref{lem:codes-lagrangians}.  For a pair with $0\neq z\in W$,
\Cref{lem:good-type-one} gives $\Prb[z\in L=C_I/R]\le 2^{2-h}$ under
$C_I\sim\mathcal C_{\mathrm I}^{\mathrm{good}}$, and
\Cref{lem:type-two-count} gives $\Prb[z\in L=C_{II}/R]\le 2^{1-h}$ under
$C_{II}\sim\mathcal C_{\mathrm{II}}$. 

Since there are at most $Q+1$ vertices appear on $H$, fewer than $(Q+1)^2/2$ unordered
pairs are tested.  A union bound yields
\begin{align*}
 \TV(T_{\mathrm{II}},T_H)
    &\le \Prb[T_H\neq T_{C_{\mathrm{II}}}] \le \frac{(Q+1)^2}{2}\,2^{1-h}
      =(Q+1)^2 2^{-h},\\
 \TV(T_{\mathrm I},T_H)
    &\le \Prb[T_H\neq T_{C_{\mathrm I}}] \le \frac{(Q+1)^2}{2}\,2^{2-h}
      =2(Q+1)^2 2^{-h}.
\end{align*}
The triangle inequality gives
$\TV(T_{\mathrm I},T_{\mathrm{II}})\leq3(Q+1)^2 2^{-h}$.

\end{proof}

Finally, we put things together to prove the exponential lower bound.

\begin{proof}[Proof of \Cref{thm:intro-classical}]
By Yao's minimax principle~\cite{Yao1977}, it suffices to prove the lower
bound against deterministic algorithms on random problem inputs.  Fix an arbitrary deterministic
algorithm $\mathcal A$ that makes at most $Q$ local-access queries.  Draw an
input graph by the following distribution: with probability $1/2$ sample
$C\sim\mathcal C_{\mathrm I}^{\mathrm{good}}$ uniformly, and with probability
$1/2$ sample $C\sim\mathcal C_{\mathrm{II}}$ uniformly; then take $G=G_C$.

Suppose $\mathcal A$ outputs a spectral density estimate $\widehat\rho$ with
Wasserstein error at most $\eps<1/d$ with probability at least $2/3$ on the distribution of problem inputs.  Given its output, declare $C\sim\mathcal C_{\mathrm{II}}$ when
\[
                         \int f_d\,d\widehat\rho<\frac1d
\]
and $C\sim\mathcal C_{\mathrm I}^{\mathrm{good}}$ otherwise.  By
\Cref{prop:classical-separation}, this test is correct whenever the
estimator succeeds.  Hence it distinguishes the two transcript laws
$T_{\mathrm I}$ and $T_{\mathrm{II}}$ with success probability at least
$2/3$, which forces
\[
 \TV(T_{\mathrm I},T_{\mathrm{II}})\geq\frac13.
\]
By \Cref{lem:adaptive-coupling} we therefore have
\[
           3(Q+1)^2 2^{-h}\geq\frac13,
           \qquad\text{so}\qquad
           Q+1\geq\frac13\,2^{h/2}.
\]
As $h=d/2-1$, so $Q=2^{\Omega(d)}$.
For sufficiently small $\eps$, choose
\[
                       d=8\left\lfloor\frac1{16\eps}\right\rfloor.
\]
After decreasing a universal $\eps_0$ if necessary, we guarantee that
\[
 \frac1{4\eps}\leq d\leq\frac1{2\eps},
 \qquad\text{so}\qquad
 h=\frac{d}{2}-1=\Theta(1/\eps).
\]
Thus $Q\geq2^{c/\eps}$ for a universal $c>0$.  This finishes the proof.
\end{proof}

\section{Polynomial-query quantum algorithms}
\label{sec:quantum-upper}

We next prove \Cref{thm:intro-quantum-upper}.  The
first step removes matrix entries incident to very high degree.  The second
step constructs a projected-unitary encoding of the truncated normalized
adjacency matrix.  The final step we learns the resulting spectral density in the Wasserstein metric by utilizing the quantum
$\ell_\infty$-norm multidimensional amplitude estimation result \cite{vanApeldoorn2021}.

\subsection{High-degree spectral truncation}

In this subsection, we truncate the high-degree vertices of the graph, which is inspired by the nuclear sparcification method from \cite{JinEtAl2024}. Fix a threshold $\Delta\geq1$, let
\[
 H=\{v:d_v>\Delta\},\qquad L=V\setminus H,
\]
and let $P_L$ be the diagonal projector onto $L$.  Define
\[
                              B=P_LNP_L.
\]
We remind that $B$ is not the normalized adjacency of the induced subgraph $G[L]$. However, the spectral density $\rho_B$ is a good approximation of $\rho_N$. The following Lemma is the similar argument as in \cite{JinEtAl2024}, we provide the proof here for completeness.

\begin{lemma}
\label{lem:truncation}
For every graph and every $\Delta\geq1$,
\[
                         W_1(\rho_N,\rho_B)
                              \leq\sqrt{\frac2\Delta}.
\]
\end{lemma}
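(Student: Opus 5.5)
The plan is to bound $W_1$ by the trace norm of $N-B$, and then bound that trace norm through Frobenius norms of off-diagonal blocks. First, for symmetric matrices $N,B$ with eigenvalues sorted decreasingly, the matching coupling gives
\[
W_1(\rho_N,\rho_B)\le\frac1n\sum_j|\lambda_j(N)-\lambda_j(B)|\le\frac1n\|N-B\|_*,
\]
where the second inequality is the Lidskii--Mirsky (Hoffman--Wielandt type) inequality for the trace norm. So it suffices to show $\|N-B\|_*\le n\sqrt{2/\Delta}$.

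Next, decompose
\[
N-B=P_HNP_H+P_HNP_L+P_LNP_H .
\]
The key counting fact is $|H|\le 2m/\Delta$, since $\sum_v d_v=2m$ and each $v\in H$ has $d_v>\Delta$. Each of the three pieces has rank at most $|H|$, so $N-B$ has rank at most $2|H|$. I would then use $\|X\|_*\le\sqrt{\operatorname{rank}(X)}\,\|X\|_F$ together with
\[
\|N-B\|_F^2\le\|N\|_F^2=\sum_{uv}\frac{A_{uv}}{d_ud_v}.
\]
This last quantity is at most $n$: by AM--GM, $\frac{1}{d_ud_v}\le\frac12\bigl(\frac1{d_u^2}+\frac1{d_v^2}\bigr)$, so summing over ordered adjacent pairs gives $\sum_u d_u\cdot\frac{1}{d_u^2}\le n$. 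A sharper route is to keep only entries touching $H$, i.e.\ the sum over $u\in H$ of $\sum_v A_{uv}/(d_ud_v)$. That is the quantity to control.

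The main obstacle is that the combination $\sqrt{2|H|}\cdot\sqrt{n}$ gives $\sqrt{4mn/\Delta}$, not $n\sqrt{2/\Delta}$. When $m\gg n$ this is too weak. I would therefore refine both factors. For the Frobenius norm, the rows indexed by $u\in H$ satisfy $\sum_v N_{uv}^2=\sum_{v\sim u}\frac1{d_ud_v}$. So $\|N-B\|_F^2\le 2\sum_{u\in H}\sum_{v\sim u}\frac1{d_ud_v}$, and I would bound this by $2\sum_{u\in H}\frac{1}{d_u}\sum_{v\sim u}\frac{1}{d_v}$.

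A cleaner alternative, which I would try first, is to write $N-B=P_HN+NP_H-P_HNP_H$. The pieces $P_HN$ and $NP_H$ both have rank at most $|H|$. Then apply Cauchy--Schwarz on singular values of $P_HN$ directly: $\|P_HN\|_*\le\sqrt{|H|}\,\|P_HN\|_F$. Here $\|P_HN\|_F^2=\sum_{u\in H}\sum_{v\sim u}\frac1{d_ud_v}\le\sum_{u\in H}\frac{1}{d_u}\cdot\sum_{v\sim u}1\cdot\max\frac1{d_v}\le|H|$, which gives $\|P_HN\|_*\le|H|$.

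To get the $\sqrt{n/\Delta}$ scaling, I would instead bound the rank by $|H|$ but weight by degrees. Setting $\|P_HN\|_F^2\le\sum_{u\in H}1\le|H|$ and $|H|\le n$ mixes badly, so I would look for a bound like $\|P_HN\|_F^2\le\sum_{u\in H}\frac{1}{d_u}\sum_{v\sim u}\frac1{d_v}\le\frac{1}{\Delta}\cdot n$. This holds because $\sum_{u}\sum_{v\sim u}\frac{1}{d_v}=n$ (each $v$ is counted $d_v$ times) and $\frac1{d_u}<\frac1\Delta$ on $H$. Combined with $\operatorname{rank}\le n$, this gives $\|P_HN\|_*\le\sqrt{n}\sqrt{n/\Delta}=n/\sqrt\Delta$.

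Finally, I would bound $N-B$ in Frobenius norm as a whole: $\|N-B\|_F^2\le 2\|P_HN\|_F^2\le 2n/\Delta$, since $\|N-B\|_F^2\le\|P_HN\|_F^2+\|NP_H\|_F^2$. Then $\|N-B\|_*\le\sqrt n\,\|N-B\|_F\le n\sqrt{2/\Delta}$, which yields exactly $W_1(\rho_N,\rho_B)\le\sqrt{2/\Delta}$.
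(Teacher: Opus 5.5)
Your final argument is correct and is essentially the paper's proof. Both reduce $W_1(\rho_N,\rho_B)$ to $\|N-B\|_F/\sqrt n$: you do this via Lidskii--Mirsky in trace norm plus $\|X\|_*\le\sqrt n\,\|X\|_F$, the paper via Hoffman--Wielandt plus Cauchy--Schwarz. Both then show $\|N-B\|_F^2\le 2n/\Delta$ by charging each entry $1/(d_ud_v)$ that touches $H$ a factor $1/\Delta$ from its high-degree endpoint and double-counting the remaining factor. The differences are cosmetic: you cover the support of $N-B$ by rows in $H$ plus columns in $H$ instead of the paper's low--high/high--high split, and your earlier detours (the rank-$2|H|$ bound, AM--GM, $|H|\le 2m/\Delta$) can simply be dropped.
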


\begin{proof}
Let $M=N-B$.  Its nonzero entries correspond precisely to edges incident to
at least one high-degree vertex.  For low--high edges,
\[
 \sum_{\substack{\{u,v\}\in E\\u\in L,\ v\in H}}
       \frac1{d_ud_v}
 \leq \sum_{u\in L}\frac1{d_u}\cdot
       \frac{|N(u)\cap H|}{\Delta}
 \leq\frac{|L|}{\Delta}.
\]
These entries occur in both orientations in the Frobenius norm.  For the
oriented high--high entries,
\[
 \sum_{u\in H}\frac1{d_u}
       \sum_{\substack{v\in H\\v\sim u}}\frac1{d_v}
 \leq \sum_{u\in H}\frac1{d_u}\cdot\frac{d_u}{\Delta}
 =\frac{|H|}{\Delta}.
\]
Consequently
\[
                 \norm{N-B}_F^2
                    \leq\frac{2|L|+|H|}{\Delta}
                    \leq\frac{2n}{\Delta}.
\]
Hoffman--Wielandt's inequality~\cite{Bhatia1997}, followed by
Cauchy--Schwarz, gives
\[
 W_1(\rho_N,\rho_B)
 \leq\left(\frac1n\sum_{j=1}^n
       |\lambda_j(N)-\lambda_j(B)|^2\right)^{1/2}
 \leq\frac{\norm{N-B}_F}{\sqrt n}
 \leq\sqrt{\frac2\Delta}.
\]
\end{proof}

\subsection{Preparing the neighbor state.}

Call $v$ \emph{active} if $1\leq d_v\leq\Delta$, and for an active vertex
write
\[
                      \ket{p_v}=\frac1{\sqrt{d_v}}
                             \sum_{u\sim v}\ket u.
\]

Preparing $\ket{p_v}$ is the key step in our quantum algorithm design. The following Lemma shows that $\ket{p_v}$ can be prepared in $O(\sqrt\Delta)$ queries for active $v$. We utilize the Grover's search.

\begin{lemma}\cite{Grover1996,BrassardEtAl2002}
  \label{lem:grover}
  Let \(N\ge1\) and let \(f:[N]\to\{0,1\}\) have a unique marked element
  \(x_*\), so that \(f(x_*)=1\).  Suppose a unitary \(O_f\) implements
  \[
   O_f\ket x\ket b=\ket x\ket{b \oplus f(x)}.
  \]
  There is a unitary operation that uses \(O(\sqrt N)\) applications of \(O_f\) and
  \(O_f^\dagger\) and implements the exact map
  \[
   \ket0\longmapsto\ket{x_*}.
  \]
  The same bound holds for the controlled map
  \(\ket\psi\ket0\mapsto\ket\psi\ket{x_*}\).
  \end{lemma}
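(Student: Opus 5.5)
The plan is to use the standard \emph{exact} amplitude amplification of~\cite{BrassardEtAl2002}. The inexact rounding in plain Grover is removed by first diluting the success amplitude so that an integer number of Grover iterations rotates exactly onto the marked state.

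\textbf{Phase oracle and initial state.} First I convert $O_f$ into a phase oracle $S_f\colon\ket x\mapsto(-1)^{f(x)}\ket x$ by applying $O_f$ with the output qubit initialized to $\ket{-}$; this costs one query per use. Let $F$ be any unitary with $F\ket0=\frac1{\sqrt N}\sum_{x\in[N]}\ket x$ (a QFT over $\mathbb Z_N$ works for non-powers of two). Then the marked amplitude is $\sin\theta=1/\sqrt N$. For $N\le 4$ the claim is trivial up to constants by the argument below, so assume $N$ is large.

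\textbf{Exactness via dilution.} Set $k=\lceil \pi/(4\theta)-1/2\rceil=O(\sqrt N)$ and $\theta'=\pi/(2(2k+1))$. Then $\theta'\le\theta$. I adjoin one ancilla qubit and prepare
\[
 \ket{\psi_0}=\frac1{\sqrt N}\sum_{x}\ket x\otimes\bigl(\alpha\ket1+\beta\ket0\bigr),
\]
where $\alpha=\sin\theta'/\sin\theta\in[0,1]$ is implemented by a fixed single-qubit rotation. The new marking predicate is $g(x,a)=f(x)\wedge a$. Its phase oracle uses one call to $S_f$ controlled on $a=1$, so $O(1)$ queries. The amplitude of $\ket{x_*}\ket1$ in $\ket{\psi_0}$ is exactly $\sin\theta'$.

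\textbf{Grover iterations.} Apply the Grover iterate $-\mathcal A S_0\mathcal A^{-1}S_g$ exactly $k$ times, where $\mathcal A=F\otimes R_\alpha$. By the usual two-dimensional rotation analysis, the state after $k$ iterations has marked amplitude $\sin((2k+1)\theta')=\sin(\pi/2)=1$. So the state is exactly $\ket{x_*}\ket1$, up to a global phase that I absorb or correct, since it is known. A final $X$ gate returns the ancilla to $\ket0$, which yields $\ket0\mapsto\ket{x_*}$. The query count is $O(k)=O(\sqrt N)$, and each iteration uses $O_f$ and $O_f^\dagger$ a constant number of times.

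\textbf{Controlled version.} For the controlled map, note that the whole circuit is a fixed unitary $U$ with $U\ket0=\ket{x_*}$, and it involves no measurement. Replacing every gate by its controlled version gives controlled-$U$, and controlled queries cost $O(1)$ by the remark in \Cref{sec:prelim}. Since $x_*$ does not depend on the control register, the result is $\ket\psi\ket0\mapsto\ket\psi\ket{x_*}$ on the relevant branch. If the intended meaning is literally ``apply $U$ while leaving $\ket\psi$ untouched,'' then no control is needed, because $U$ acts only on the second register.

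The only delicate point is exactness: ensuring $(2k+1)\theta'=\pi/2$ exactly while keeping $\theta'\le\theta$, so that the dilution amplitude $\alpha$ is at most $1$ and hence realizable. The explicit choice of $k$ above handles this, and everything else is routine.
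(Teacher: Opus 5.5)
Your proposal is correct and is the exact amplitude-amplification argument of \cite{BrassardEtAl2002}: you dilute the success amplitude with an ancilla rotation so that $(2k+1)\theta'=\pi/2$ holds exactly; the paper itself only cites this result and gives no proof. One remark: in the paper's later use (\Cref{lem:port-erasure}) the marked element, and even $N=d_v$, depend on the control basis state, and your construction still covers this at the same cost by computing $k$ and $\alpha$ coherently from the control register and applying each of the $O(\sqrt{\Delta})$ iterations only when the iteration index is below $k$.
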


\begin{lemma}
\label{lem:port-erasure}
There is a unitary state-preparation procedure $T$ such that
\[
 T\ket v=
 \begin{cases}
  \ket0_f\ket0_b\ket v\ket{p_v},&1\leq d_v\leq\Delta,\\
  \ket1_f\ket0_b\ket v\ket\bot,&d_v=0\text{ or }d_v>\Delta,
 \end{cases}
\]
Where $f$ is the activity flag and $b$ is the auxiliary bit. In addition, $T$ uses $O(\sqrt\Delta)$ coherent degree and indexed-neighbor queries.
The inverse and controlled procedure have the same asymptotic cost.
\end{lemma}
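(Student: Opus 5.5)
The procedure $T$ is built in four stages: query the degree and set the flag, prepare a uniform superposition over ports, map each port to its neighbor with one indexed-neighbor query, and then erase the port register using the Grover routine of \Cref{lem:grover}. The erasure is what makes the output exactly $\ket v\ket{p_v}$ with no leftover garbage.

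\textbf{Stage 1 (degree and flag).} Query $O_{\rm deg}$ to write $d_v$ into a work register. Compute the flag $f=[d_v=0\text{ or }d_v>\Delta]$ reversibly. If the vertex is inactive, write $\bot$ into the target register, uncompute $d_v$ with a second degree query, and stop; this uses $O(1)$ queries.

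\textbf{Stage 2 (port superposition).} Controlled on $f=0$, and using the classical value $d_v$ held in the work register, apply an exact uniform-superposition circuit on a port register to produce $\frac1{\sqrt{d_v}}\sum_{i\in[d_v]}\ket i$. This step is pure computation with no oracle queries. An exact preparation for an arbitrary $d_v\le\Delta$ can be done with a standard exact amplitude-amplification circuit or with a cascade of controlled rotations. The auxiliary bit $b$ can be used here as the success flag, and it returns to $\ket0$ exactly.

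\textbf{Stage 3 (port to neighbor).} One call to $O_{\rm nbr}$ maps $\ket i\ket0\mapsto\ket i\ket{{\rm nbr}(v,i)}$. The state is now $\frac1{\sqrt{d_v}}\sum_i\ket i\ket{u_i}$, where $u_i={\rm nbr}(v,i)$.

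\textbf{Stage 4 (port erasure, the main obstacle).} The port index $i$ is entangled with $u_i$, so it cannot simply be discarded. The issue is that the map $u\mapsto i$ requires inverting the adjacency list, which the oracles do not provide directly. Because the graph is simple, for each neighbor $u$ there is a unique port $i_*\in[d_v]$ with ${\rm nbr}(v,i_*)=u$. Define $f_u(i)=[{\rm nbr}(v,i)=u]$, which is computable with two $O_{\rm nbr}$ queries (compute and uncompute). Then \Cref{lem:grover}, in its controlled form on the registers $\ket v\ket u$ with search space $[d_v]$, gives an exact unitary $\ket0\mapsto\ket{i_*}$ using $O(\sqrt{d_v})=O(\sqrt\Delta)$ queries. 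Applying the inverse of this unitary to $\ket{u}\ket{i_*}$ maps it to $\ket u\ket0$, which erases the port. Since the search space size $d_v$ varies in superposition, the search is made uniform by padding it to a fixed size: either use $[\Delta]$ with $f_u(i)=0$ for $i>d_v$, or branch on the classical value $d_v$ held in the work register. The exact Grover variant of Brassard et al.\ handles any known size with a unique marked element.

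\textbf{Finishing.} Uncompute $d_v$ with a second degree query. The resulting state is $\ket0_f\ket0_b\ket v\ket{p_v}$. Every component of $T$ is a unitary circuit, so $T^\dagger$ and the controlled version of $T$ cost the same up to constants, using the observation from \Cref{sec:prelim} that controlled queries cost only a constant factor. The total query count is $O(1)+O(\sqrt\Delta)=O(\sqrt\Delta)$.
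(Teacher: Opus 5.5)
Your proposal is correct and follows the paper's proof essentially step for step. Both compute the degree and the activity flag, prepare a uniform port superposition with no graph query, make one indexed-neighbor query, and erase the port by applying the inverse of the exact Grover map $\ket v\ket u\ket0\mapsto\ket v\ket u\ket{\operatorname{port}_v(u)}$, whose unique marked element comes from simplicity of the graph. You also uncompute the degree register and fix the search-space size by padding to $[\Delta]$; these fill in details the paper leaves implicit.
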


\begin{proof}
Query $d_v$ coherently and compute the activity flag $f$. We set $f=1$ if $d_v=0$ or $d_v>\Delta$, and $f=0$ if $1\leq d_v\leq\Delta$. All subsequent
operations are controlled by the stored value of $d_v$, so the construction
also works when $v$ is in superposition.

Fix an active branch and write $d=d_v$, we set $f=0$.  Using controlled rotations, prepare a
uniform port and then make one neighbor query:
\[
 \ket0\ket0
 \longmapsto
 \frac1{\sqrt d}\sum_{i=1}^d\ket i\ket0
 \longmapsto
 \frac1{\sqrt d}\sum_{i=1}^d\ket i\ket{u_i},
 \qquad u_i=\operatorname{nbr}(v,i).
\]
We note that the uniform-state preparation uses no graph query.

It remains to erase the port $i$.  For every promised adjacent pair $(v,u)$,
let $\operatorname{port}_v(u)$ be the unique $i\in[d]$ such that
$\operatorname{nbr}(v,i)=u$.  The solution is unique because the graph is
simple. Using \Cref{lem:grover}, we obtain the following transformation with $O(\sqrt d)$ adjacency queries:
\[
 \ket v\ket u\ket0
 \longmapsto
 \ket v\ket u\ket{\operatorname{port}_v(u)}
\]

After the neighbor query the state is $\frac1{\sqrt d}\sum_i\ket i\ket{u_i}$.
 Rearranging the last two registers
gives $\frac1{\sqrt d}\sum_i\ket{u_i}\ket i$. Since $\operatorname{port}_v(u_i)=i$, the inverse of the map from
\Cref{lem:grover} therefore sends the state to:
\[
  \frac1{\sqrt d}\sum_{i=1}^d\ket{u_i}\ket i=\frac1{\sqrt d}\sum_{i=1}^d\ket{u_i}\ket{\operatorname{port}_v(u_i)}\longmapsto \frac1{\sqrt d}\sum_{i=1}^d\ket{u_i}\ket0
 =\ket{p_v}\ket0.
\]
Hence the port register is restored to \(\ket0\). 

On an inactive branch, set $f=1$ and prepare the fixed dummy state
$\ket\bot$.  The query complexity of both cases are bounded by $O(\sqrt{d})=O(\sqrt\Delta)$ queries.  Reversing the circuit implements
$T^\dagger$ at the same cost, and controlling the elementary operations
changes the cost by only a constant factor.
\end{proof}

\subsection{Estimating spectrum via phase estimation.}

Let $\Pi=TT^\dagger$ be the range projector.  On the target space of $T$,
define a unitary operator $U$ as follows: 
\begin{align*}
U\ket0_f\ket0_b\ket v\ket{u}&=\ket0_f\ket0_b\ket u\ket{v},\\
U\ket1_f\ket0_b\ket v\ket{\bot}&=\ket1_f\ket1_b\ket v\ket{\bot},\\
U\ket1_f\ket1_b\ket v\ket{\bot}&=\ket1_f\ket0_b\ket v\ket{\bot}.
\end{align*}

We next show that spectrum of $B$ can be estimated by quantum phase estimation.

\begin{definition}
      \label{def:cyclic-subspace}
      Let $V$ be a finite-dimensional vector space, let $A:V\to V$ be a
      linear operator, and let $v\in V$.  The \emph{cyclic subspace generated by
      $v$ under $A$} is the smallest $A$-invariant subspace containing $v$, namely
      \[
       \operatorname{span}\bigl\{v,\,Av,\,A^2v,\,\ldots\bigr\}
       =\operatorname{span}\bigl\{A^kv:k\geq0\bigr\}.
      \]
\end{definition}

\begin{definition}
\label{def:eigenphase}
Let $Q$ be a unitary operator.  If $Q\ket\phi=e^{i\theta}\ket\phi$ for a
unit vector $\ket\phi$ and $\theta\in\R$, we call $\theta$ an
\emph{eigenphase} of $Q$.  Eigenphases are identified modulo $2\pi$.
\end{definition}

\begin{lemma}
\label{lem:projected-encoding}
The compression of $U$ to the range of $T$ is exactly the truncated matrix:
\[
                              T^\dagger U T=B.
\]
Let the unitary operator
\[
                              Q=(2\Pi-I)U.
\]
For every eigenvalue $\lambda\in[-1,1]$ of $B$ with eigenvector $\ket\psi$, the cyclic subspace generated
from $T\ket\psi$ under $Q$ has eigenphases $\pm\arccos\lambda$.
Moreover, a controlled application of $Q$ or $Q^\dagger$ uses
$O(\sqrt\Delta)$ graph queries.
\end{lemma}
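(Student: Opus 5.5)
The proof has three parts: a direct computation of $T^\dagger U T$, a Szegedy/Jordan-lemma argument for the eigenphases, and a count of queries.

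\textbf{Computing $T^\dagger U T$.} Since $T$ is an isometry, it suffices to compare matrix entries $\bra{w}T^\dagger U T\ket{v}=\langle Tw|\,U\,|Tv\rangle$ over all pairs of vertices.
\begin{itemize}
\item If $v$ and $w$ are both active, then $U T\ket v=\ket0_f\ket0_b\sum_{u\sim v}d_v^{-1/2}\ket u\ket v$. Its inner product with $T\ket w=\ket0\ket0\ket w\ket{p_w}$ is $d_v^{-1/2}d_w^{-1/2}A_{wv}=N_{wv}$.
\item If $v$ is inactive, then $UT\ket v=\ket1_f\ket1_b\ket v\ket\bot$. This is orthogonal to every $T\ket w$, because all of those have $b=0$.
\item If $w$ is inactive and $v$ is active, then $T\ket w$ has flag $1$ while $UT\ket v$ has flag $0$, so the entry is $0$.
\end{itemize}
The resulting matrix is $P_{\rm act}NP_{\rm act}$. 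This equals $B$, because zero-degree vertices contribute zero rows and columns to $N$ anyway, and vertices with $d_v>\Delta$ lie in $H$.

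One point must be checked first: $U$ is well defined and unitary on the target space. On the flag-$0$ sector it is the register swap, and on the flag-$1$ sector it is a bit flip. I would extend it to the whole space by acting as the swap or the bit flip regardless of the remaining content.

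\textbf{Eigenphases of $Q$.} This is the standard Szegedy/Jordan-lemma argument, and I expect it to be the main obstacle. The difficulty is that it needs care when $\lambda=\pm1$, and when $U$ is not an involution on the relevant subspace.

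Here $U$ is in fact an involution: both the swap and the bit flip square to the identity, so $U^2=I$ and $U$ is Hermitian. Let $\ket\phi=T\ket\psi$. Then
\[
\Pi\ket\phi=\ket\phi,\qquad \Pi U\ket\phi=TB\ket\psi=\lambda\ket\phi .
\]
I would work in $S=\operatorname{span}\{\ket\phi,U\ket\phi\}$ and show it is invariant under $Q$:
\[
Q\ket\phi=(2\Pi-I)U\ket\phi=2\lambda\ket\phi-U\ket\phi,
\]
\[
QU\ket\phi=(2\Pi-I)\ket\phi=\ket\phi .
\]
In the basis $(\ket\phi,U\ket\phi)$ the matrix of $Q$ is therefore $\begin{pmatrix}2\lambda&1\\-1&0\end{pmatrix}$. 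Its characteristic polynomial is $\mu^2-2\lambda\mu+1$, whose roots are $\mu=e^{\pm i\arccos\lambda}$. So $S$ is the cyclic subspace, and its eigenphases are $\pm\arccos\lambda$.

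In the degenerate case $U\ket\phi=\pm\ket\phi$ (equivalently $\lambda=\pm1$), $S$ is one-dimensional and $Q$ acts on it by $\pm1$. The phase is then $\arccos(\pm1)$, which is consistent with the claim, and $\pm$ coincide.

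\textbf{Query cost.} We have $Q=(2TT^\dagger-I)U$, so $2\Pi-I=T(2\ket0\bra0-I)T^\dagger$, with the reflection taken about the input register's embedding. This requires one application each of $T$ and $T^\dagger$, and $U$ itself uses no queries. By \Cref{lem:port-erasure}, $T$ and $T^\dagger$ each cost $O(\sqrt\Delta)$, and the controlled versions cost a constant factor more. The same bound holds for $Q^\dagger=U(2\Pi-I)$.
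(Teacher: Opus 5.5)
Your proposal is correct and follows essentially the same route as the paper. You compute $T^\dagger U T$ entrywise using flag and $b$-bit orthogonality, and you obtain the eigenphases from the $Q$-invariant span of $T\ket\psi$ and $UT\ket\psi$ with characteristic polynomial $\mu^2-2\lambda\mu+1$. You implement $2\Pi-I$ by conjugating a zero-workspace reflection with the state preparation, and the points the paper leaves implicit (that $U^2=I$, that zero-degree vertices give zero rows of $N$, and the one-dimensional degenerate case $\lambda=\pm1$) you handle correctly.
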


\begin{proof}
For active vertices $v,w$,
\begin{align*}
 \bra vT^\dagger UT\ket w
 &=\frac1{\sqrt{d_vd_w}}
   \sum_{u\sim v}\sum_{z\sim w}
       \langle v,u\mid z,w\rangle\\
 &=\frac{A_{vw}}{\sqrt{d_vd_w}}.
\end{align*}
If exactly one input is inactive, the flags are orthogonal, so $\bra vT^\dagger UT\ket w=0$.  If both are
inactive, $U$ flips $b$, so $\bra vT^\dagger UT\ket w=0$.  This
proves $T^\dagger UT=B$.

Now let $B\ket\psi=\lambda\ket\psi$, set
$\ket a=T\ket\psi$, and $\ket b=U\ket a$.  Since
$\Pi\ket b=\lambda\ket a$,
\[
 Q\ket a=2\lambda\ket a-\ket b,
 \qquad Q\ket b=\ket a.
\]
The characteristic polynomial on their span is
$z^2-2\lambda z+1$, with roots $e^{\pm i\arccos\lambda}$. So the eigenphases are $\pm\arccos\lambda$.

To implement the reflection $2\Pi-I$, apply the inverse of the state
preparation from \Cref{lem:port-erasure}, reflect about its zero-workspace
input subspace, and apply the state preparation again.  The middle
reflection and $U$ use no graph query.  Thus a controlled $Q$, and likewise $Q^\dagger$, costs $O(\sqrt\Delta)$ graph queries by
\Cref{lem:port-erasure}.
\end{proof}


Next we show that the eigenvalues of $B$ can be estimated via phase estimation applied to $Q$. 
For angles $\alpha,\beta\in\R$, write
\[
 d_{2\pi}(\alpha,\beta)
   :=\min_{k\in\mathbb Z}|\alpha-\beta+2\pi k|
\]
for their circular distance. We utilize the quantum phase estimation.

\begin{lemma}\cite{CleveEtAl1998,MandeDeWolf2023}
\label{lem:coherent-phase-estimation}
Let $R$ be a unitary operator, constants $0<\xi\leq 1/4$, 
$0<\delta\leq 1/2$.  There is a unitary procedure
$\operatorname{PE}_{\xi,\delta}(R)$ such that the following holds.  If
\[
                         R\ket{\phi_\theta}
                            =e^{i\theta}\ket{\phi_\theta},
\]
then measuring only the output register gives an estimate $\widehat\theta$
satisfying
\[
 \Pr\!\left[d_{2\pi}(\widehat\theta,\theta)>\xi\right]\leq\delta.
\]
 Each of
$\operatorname{PE}_{\xi,\delta}(R)$ and its inverse uses
\[
                         O\!\left(\frac1\xi
                                  \log\frac1\delta\right)
\]
controlled applications of $R$ or $R^\dagger$, respectively.
\end{lemma}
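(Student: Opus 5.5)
The statement immediately above is the standard quantum phase estimation lemma, cited from \cite{CleveEtAl1998,MandeDeWolf2023}. I would prove it in two stages: a single-shot phase estimation that succeeds with constant probability, followed by coherent median amplification to reach failure probability $\delta$.

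\textbf{Step 1: constant-probability single shot.} Choose $t=\lceil\log_2(c/\xi)\rceil$ for a suitable constant $c$, and set $M=2^t=O(1/\xi)$.
\begin{enumerate}
\item Prepare the uniform superposition $\frac1{\sqrt M}\sum_{k=0}^{M-1}\ket k$ on a $t$-qubit register.
\item Apply the controlled powers $\ket k\ket{\phi_\theta}\mapsto\ket k R^k\ket{\phi_\theta}$, implemented by binary decomposition with controlled $R^{2^j}$. This uses $M-1=O(1/\xi)$ controlled applications of $R$, and on an eigenvector it only kicks back the phase $e^{ik\theta}$.
\item Apply the inverse quantum Fourier transform and read off $\widehat\theta=2\pi y/M$.
\end{enumerate}
The Fejér-kernel analysis gives
\[
\Pr\!\left[d_{2\pi}(\widehat\theta,\theta)>\frac{2\pi m}{M}\right]\le\frac1{2(m-1)}.
\]
Taking $m$ a fixed constant and $M\ge 2\pi m/\xi$, the failure probability is at most $1/4$. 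The eigenvector register is returned untouched, since it is an eigenstate throughout.

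\textbf{Step 2: boosting to $\delta$.} Run $r=O(\log(1/\delta))$ copies of Step 1 in parallel on fresh ancilla registers, all sharing the same $\ket{\phi_\theta}$. Because $\ket{\phi_\theta}$ is an eigenvector, the joint state factorizes as $\ket{\phi_\theta}$ tensored with the product of the estimate registers, so the $r$ outcomes are independent. Then coherently compute a circular median into an output register; the ``only the output register'' phrasing permits leaving the garbage in place.

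Circular distance needs care here. A simple fix is to make the median robust on the circle: choose a cut point far from most estimates, or take as output any estimate that has more than $r/2$ other estimates within $2\xi$, at the cost of doubling $\xi$, which is absorbed by rescaling. If more than $r/2$ estimates lie within $\xi$ of $\theta$, the output is within $O(\xi)$ of $\theta$. A Chernoff bound shows that the probability at most $r/2$ estimates are good is $e^{-\Omega(r)}\le\delta$.

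\textbf{Cost and main obstacle.} The total cost is $O(\xi^{-1}\log(1/\delta))$ controlled applications of $R$. The inverse procedure runs the circuit backwards and uses $R^\dagger$, with the same count. I expect the only delicate point to be this median-on-a-circle robustness together with the exact tail bound. Both are handled in \cite{MandeDeWolf2023}, so in the paper I would simply cite that result rather than reprove it.
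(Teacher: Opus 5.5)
Your proposal is correct and follows the standard argument behind the cited references; the paper gives no proof of its own, only the citation to \cite{CleveEtAl1998,MandeDeWolf2023}. That argument is constant-success phase estimation with $M=O(1/\xi)$ controlled powers, repeated $O(\log(1/\delta))$ times on fresh ancillas that share the undisturbed eigenvector, then a coherent circular-median or cluster selection justified by a Chernoff bound. One small fix is needed in your selection rule: count the candidate estimate itself, or require, say, $2r/3$ good estimates, so that a good estimate is guaranteed to qualify.
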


We now turn to the estimation of the eigenvalues of $B$.
We say a process is a \emph{coherent experiment} if it is a unitary whose
output register, upon measurement, yields a classical random variable.

\begin{lemma}
\label{lem:purified-spectrum}
For any resolution parameter $0<\eta\leq1$, there is a coherent experiment
$U_{\rm spec}$ that outputs a random variable $Z$ by measurment whose density function $\nu$ is supported
on $2/\eta$ points and satisfies
\[
                         W_1(\nu,\rho_B)\leq\eta.
\]
One use of $U_{\rm spec}$ or $U_{\rm spec}^\dagger$ costs
$ \widetilde O\!\left(
      \eta^{-1}\sqrt{\Delta}\right)$
graph queries.  Their controlled versions have the same asymptotic cost.
\end{lemma}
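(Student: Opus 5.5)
The experiment prepares the maximally mixed input over eigenvectors of $B$, phase-estimates $Q$, and outputs a rounded $\cos\widehat\theta$.

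\textbf{Construction.} Prepare a purified uniform vertex $\frac1{\sqrt n}\sum_v\ket v\ket v$, keeping the second copy as a purifying reference register. Since $\frac1n I=\frac1n\sum_j\ket{\psi_j}\bra{\psi_j}$ for any eigenbasis $\{\ket{\psi_j}\}$ of $B$, the first register is equivalently a uniformly random eigenvector $\ket{\psi_j}$ with eigenvalue $\lambda_j$. Now:
\begin{itemize}
\item apply the state preparation $T$ of \Cref{lem:port-erasure} to obtain $T\ket{\psi_j}$;
\item run $\operatorname{PE}_{\xi,\delta}(Q)$ from \Cref{lem:coherent-phase-estimation};
\item compute $\widehat\lambda=\cos\widehat\theta$ into the output register, rounded to the nearest point of a grid of spacing $\eta$ in $[-1,1]$, which has at most $2/\eta$ points (adjusting constants).
\end{itemize}
All steps are unitary, so $U_{\rm spec}$ is a coherent experiment, and its inverse and controlled versions follow by reversing and controlling each step.

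\textbf{Phase estimation on a superposition.} By \Cref{lem:projected-encoding}, $T\ket{\psi_j}$ lies in a cyclic subspace of $Q$ whose eigenphases are $\pm\arccos\lambda_j$. Hence
\[
T\ket{\psi_j}=\alpha\ket{\phi_+}+\beta\ket{\phi_-}.
\]
Phase estimation applied to this superposition returns, with probability at least $1-\delta$, some $\widehat\theta$ within circular distance $\xi$ of either $+\arccos\lambda_j$ or $-\arccos\lambda_j$. Because $\cos$ is even and $1$-Lipschitz, either case gives $|\cos\widehat\theta-\lambda_j|\le\xi$.

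\textbf{Degenerate cases.} When $\lambda_j=\pm1$ the subspace may be one-dimensional, with $\ket b=\pm\ket a$; this is still fine. Inactive vertices give eigenvalue $0$ of $B$. On those branches $T\ket v$ is the dummy state and $U$ swaps the $b$-bit. One must check that $Q$ acts with eigenphases $\pm\pi/2$ there, so the same argument yields $\widehat\lambda\approx 0$. I expect this bookkeeping to be the main technical obstacle, alongside confirming that $\Pi\ket b=\lambda\ket a$ holds in the inactive sector. Verifying these two points is where I would focus.

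\textbf{Wasserstein bound.} Couple the output $Z$ with $\lambda_J$, where $J$ is uniform on $[n]$; the marginal of $\lambda_J$ is exactly $\rho_B$. Then
\[
W_1(\nu,\rho_B)\le \mathbb E|Z-\lambda_J|\le \xi+\eta/2+2\delta,
\]
since $|Z-\lambda_J|\le 2$ on the failure event of probability at most $\delta$. Choosing $\xi=\eta/4$ and $\delta=\eta/8$ gives error at most $\eta$.

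\textbf{Cost.} Preparing the uniform purified input uses no queries. Phase estimation makes $O(\eta^{-1}\log(1/\eta))$ controlled applications of $Q^{\pm1}$. By \Cref{lem:projected-encoding} each costs $O(\sqrt\Delta)$ queries, and $T$ adds another $O(\sqrt\Delta)$. The total is $\widetilde O(\eta^{-1}\sqrt\Delta)$ queries, and the controlled versions cost only a constant factor more.
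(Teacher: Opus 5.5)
Your proposal is correct and follows essentially the same route as the paper's proof: the maximally entangled vertex state with a reference register, the map $T$, phase estimation on $Q$ with $O(\eta^{-1}\log(1/\eta))$ controlled calls of cost $O(\sqrt{\Delta})$ each, cosine-and-round, and a coupling of the output with $\lambda_J$ for uniform $J$ (your choice of grid spacing $\eta$ with $\xi=\eta/4$, $\delta=\eta/8$ fits the claimed $2/\eta$ support size slightly better than the paper's mesh $\eta/2$). The inactive-sector check you flag as the main obstacle is already covered by \Cref{lem:projected-encoding}: since $T^\dagger UT=B$ was verified entrywise including inactive vertices, $\Pi\ket{b}=TT^\dagger UT\ket{\psi}=TB\ket{\psi}=\lambda\ket{a}$ holds for every eigenvector of $B$, and concretely for an inactive $v$ one has $\Pi\ket{b}=0$, $Q\ket{a}=-\ket{b}$, $Q\ket{b}=\ket{a}$, giving eigenphases $\pm\pi/2=\pm\arccos 0$.
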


\begin{proof}
Prepare the maximally entangled vertex state with reference register $R$ and vertex register $V$:
\[
                    \ket\Phi=\frac1{\sqrt n}
                              \sum_{v\in V}\ket v_R\ket v_V,
\]
and apply $T$ to the vertex register.
Let $\{\ket{\psi_j}\}_{j=1}^n$ be a real orthonormal eigenbasis of $B$.
Basis invariance of $\ket\Phi$ gives
\[
 (I\otimes T)\ket\Phi
   =\frac1{\sqrt n}\sum_{j=1}^n
       \ket{\psi_j}_R T\ket{\psi_j}_V.
\]

By \Cref{lem:projected-encoding}, every eigenphase appearing in the $j$-th
cyclic subspace $\{Q^k\cdot T\ket{\psi_j}:k\ge0\}$ satisfies $\theta=\pm\arccos\lambda_j(B)$.  Therefore the ideal phase estimation $\cos(\theta)=\lambda_j(B)$ samples every eigenvalue with probability $1/n$.  
Apply \Cref{lem:coherent-phase-estimation} with phase accuracy $\eta/4$ and
failure probability $\eta/4$, take the cosine, and round it to a grid of
mesh $\eta/2$.  On the success event, if $\theta=\pm\arccos\lambda_j(B)$ is the eigenphase
in the $j$-th cyclic subspace, then the estimation $\widehat\theta$ by \ref{lem:coherent-phase-estimation} satisfies
\[
 \left|\operatorname{round}(\cos\widehat\theta)-\lambda_j(B)\right|
 \leq |\cos\widehat\theta-\cos\theta|+\frac\eta4
 \leq d_{2\pi}(\widehat\theta,\theta)+\frac\eta4
 \leq\frac\eta2.
\]
On failure the error is at most $2$.  Measuring the reference register in
the $\{\ket{\psi_j}\}$ basis therefore couples the output random variable
$Z\sim\nu$ to $\rho_B$ with expected distance at most
$\frac\eta2+2\frac\eta4=\eta$.
Hence $W_1(\nu,\rho_B)\leq\eta$, and $\nu$ is supported on $2/\eta$
points.

By \Cref{lem:coherent-phase-estimation}, phase estimation and its inverse
use $O(\eta^{-1}\log(1/\eta))$ controlled applications of $Q$ or
$Q^\dagger$.  Each costs $O(\sqrt\Delta)$ graph queries by
\Cref{lem:projected-encoding}, which proves the claimed
$\widetilde O(\eta^{-1}\sqrt\Delta)$ query bound.  Controlling the entire
construction only controls its constituent operations and hence does not
change this bound.
\end{proof}

\subsection{Quantum distribution learning}

By \Cref{lem:purified-spectrum}, we already derive a efficient sampler from the approximate spectral density of $B$. The standard method of learning a probability distribution under $W_1$ requires $O(\eps^{-2})$ such sampling queries~\cite{FournierGuillin2015,Panaretos_2019}. However, in quantum setting, we can improve this to $\widetilde O(\eps^{-1})$ queries. We use the following result of $\ell_\infty$-estimation.

\begin{lemma}\cite[Theorem~9]{vanApeldoorn2021}
\label{lem:multidimensional-ae}
Let $p=(p_1,\ldots,p_d)$ be a probability vector, and suppose a unitary
$O_p$ and its inverse are available, where
\[
                  O_p\ket0=\sum_{i=1}^d
                         \sqrt{p_i}\ket i\ket{\gamma_i}
\]
for arbitrary normalized workspace states $\ket{\gamma_i}$.  For every
$0<\alpha<1$, a classical probability vector $\widetilde p$ satisfying
\[
                         \norm{\widetilde p-p}_\infty\leq\alpha
\]
with success probability $\ge 2/3$ can be found using
$\widetilde O(1/\alpha)$ controlled applications of $O_p$ and $O_p^\dagger$.
\end{lemma}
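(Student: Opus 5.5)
The statement is the $\ell_\infty$ multidimensional amplitude estimation theorem of \cite[Theorem~9]{vanApeldoorn2021}. In this paper I would simply invoke it. If I had to reprove it, I would follow van Apeldoorn's route: turn $O_p$ into a phase oracle, then estimate all coordinates at once with a Jordan-type Fourier (gradient) estimation step whose error is measured in $\ell_\infty$. I would not run independent amplitude estimations, one per coordinate.

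\textbf{Step 1: discard light coordinates.} Let $H=\{i:p_i\geq\alpha/2\}$, so $|H|\leq 2/\alpha$. Every coordinate with $p_i<\alpha$ may be output as $0$ at error $\leq\alpha$, so only coordinates with $p_i\gtrsim\alpha$ need real estimates. To find them, measure $O_p\ket0$ repeatedly, $O(\alpha^{-1}\log(1/\alpha))$ times. By a union bound over the at most $2/\alpha$ heavy indices, with constant probability every $i$ with $p_i\geq\alpha$ appears among the samples. This yields a candidate set $S$ with $|S|=\widetilde O(1/\alpha)$, at cost $\widetilde O(1/\alpha)$.

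\textbf{Step 2: build a phase oracle for $p$.} Use the standard probability-to-phase conversion (the block-encoding of $\diag(p_i)$ obtained from $O_p$, its inverse, and a reflection marking index $i$). This gives a controlled unitary that, conditioned on index $i\in S$ and a fractional parameter $t$, applies phase $e^{2\pi i\,t\,p_i}$ up to small error. The number of $O_p$ uses is $\widetilde O(|t|)$, where $|t|$ is the evolution time.

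\textbf{Step 3: simultaneous Fourier estimation.} Prepare a uniform superposition over a grid of vectors $t\in\{0,\ldots,M-1\}^{S}$. The goal is to apply the phase $e^{2\pi i\sum_{i\in S}t_ip_i/M}$ and then take an inverse QFT coordinate-wise, reading off all $p_i$ to precision $1/M$ with $M=\widetilde\Theta(1/\alpha)$. The key saving is that $\sum_i t_ip_i$ is the expectation of $t_I$ with $I\sim p$. Hence one phase-oracle call, with the index drawn coherently from $O_p$ and $t_I$ bounded by $M$, implements the whole linear phase using $\widetilde O(M)$ queries rather than $|S|\cdot M$. A median-of-means repetition, with $O(\log|S|)$ copies, then boosts per-coordinate success so that a union bound over $S$ gives $\ell_\infty$ error $\alpha$ with probability $2/3$. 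Finally, renormalize or clip the output to a probability vector; this at most doubles the $\ell_\infty$ error.

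\textbf{Main obstacle.} The hard part is Step 3. Implementing the phase $\sum_i t_ip_i$ from a single expectation is easy, but phase-oracle approximation errors accumulate over the Fourier grid. Van Apeldoorn controls this with a smoothed (high-order) phase construction and an analysis showing the error of each coordinate is independent of $|S|$ up to logarithmic factors. I would import exactly that analysis rather than rederive it. A naive alternative, running amplitude estimation separately for each $i\in S$, costs $\sum_{i\in S}\sqrt{p_i}/\alpha$, which can be as large as $\alpha^{-3/2}$; this is why the joint Fourier step is necessary.
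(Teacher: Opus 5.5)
Your proposal matches the paper: the lemma is stated there without proof and simply imported from \cite[Theorem~9]{vanApeldoorn2021}, exactly as you say you would do. If you ever write out your optional reconstruction, three points need fixing. First, the applied phase must be $e^{2\pi i\sum_i t_ip_i}$, with evolution time up to $M$ as your $\widetilde O(M)$ cost already implies, and not $e^{2\pi i\sum_i t_ip_i/M}$, which would only give precision about $1$. Second, for this linear function the approximate phase oracle acts once on the grid superposition, so its error does not grow with the grid dimension and no high-order smoothing is needed. Third, rather than renormalizing naively, choose a probability vector within $\ell_\infty$-distance $\alpha$ of $\widetilde p$; one exists because $p$ itself is such a vector, and this only doubles the error.
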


We change this estimator to the Wasserstein learning from a coherent sampler.

\begin{lemma}
\label{lem:quantum-wasserstein-learning}
Let $U$ be a coherent experiment whose measured output has density function $\nu$ on
$[-1,1]$, and suppose controlled $U$ and $U^\dagger$ are available.  For
every $0<\eps<1$, there is a algorithm using
$\widetilde O(1/\eps)$ calls to $U$ and $U^\dagger$ that outputs a classical
measure $\widehat\nu$, supported on $O(1/\eps)$ points, such that 
\[
                             W_1(\widehat\nu,\nu)\leq\eps.
\]
with success probability $\ge 2/3$.
\end{lemma}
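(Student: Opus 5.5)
The plan is to discretize the output of $U$ onto a grid, learn the resulting histogram in $\ell_\infty$ via \Cref{lem:multidimensional-ae}, and convert that $\ell_\infty$ guarantee into a $W_1$ bound through the CDF formula of \Cref{prop:kr-duality}.

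\textbf{Discretization.} Fix a grid $\{g_1<\dots<g_K\}\subset[-1,1]$ of mesh $\eps/2$, so $K=O(1/\eps)$. Let $U'$ be $U$ followed by a reversible computation that writes into a fresh register the index $i$ of the grid cell containing the output value, rounding to the nearest grid point. Since $U$ is a coherent experiment, $U'\ket0=\sum_i\sqrt{p_i}\ket i\ket{\gamma_i}$, where $p_i$ is the $\nu$-mass of cell $i$. The rounded distribution $\bar\nu=\sum_i p_i\delta_{g_i}$ satisfies $W_1(\bar\nu,\nu)\le\eps/4$, because each sample moves by at most half the mesh. Controlled $U'$ and $U'^\dagger$ cost one controlled call to $U$ or $U^\dagger$ plus query-free arithmetic.

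\textbf{Learning in $\ell_\infty$.} The naive route would apply \Cref{lem:multidimensional-ae} with $\alpha=\Theta(\eps^2)$, since $W_1(\bar\nu,\tilde\nu)\le\int|F_{\bar\nu}-F_{\tilde\nu}|\le 2\max_x|F_{\bar\nu}(x)-F_{\tilde\nu}(x)|$ and the CDF error can be as large as $K\alpha$. That costs $\widetilde O(\eps^{-2})$, which is too much. Instead, I will estimate the CDF values directly. Encode the index not as $i$ but in a binary-tree (dyadic) form: for each level $\ell=1,\dots,\lceil\log K\rceil$, produce the probability vector over the dyadic blocks of size $2^\ell$. Each is obtained from $U'$ by coarse-graining the label, at no extra cost. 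Applying \Cref{lem:multidimensional-ae} to each level with $\alpha=\eps/(c\log K)$ gives every dyadic-block mass to accuracy $\alpha$. Any prefix sum $F(g_j)$ is a union of at most $\log K$ dyadic blocks, one per level, so it is estimated to accuracy $O(\eps/c)$. This costs $\widetilde O(\log K\cdot\log K/\eps)=\widetilde O(1/\eps)$ calls. To make all levels succeed simultaneously, I boost each level's success probability to $1-1/(10\log K)$ by a median of $O(\log\log K)$ repetitions, taken coordinatewise; a coordinatewise median preserves the $\ell_\infty$ bound when the majority succeed. This adds only polylog factors.

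\textbf{Output.} From the estimated prefix sums $\widetilde F(g_j)$, I enforce monotonicity and the range $[0,1]$ by taking running maxima and clipping. This does not increase the sup error relative to the true monotone CDF. I then set $\widehat\nu=\sum_j(\widetilde F(g_j)-\widetilde F(g_{j-1}))\delta_{g_j}$, fixing the last value to $1$. This is a probability measure on $K=O(1/\eps)$ points, and
\[
W_1(\widehat\nu,\bar\nu)=\int_{-1}^{1}|F_{\widehat\nu}-F_{\bar\nu}|\le 2\sup|F_{\widehat\nu}-F_{\bar\nu}|\le \eps/2
\]
for a suitable constant $c$. The triangle inequality then gives $W_1(\widehat\nu,\nu)\le\eps$.

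The main obstacle is avoiding the factor-$K$ blowup from $\ell_\infty$ to CDF error. The dyadic decomposition is my intended fix. I must check that the coarse-grained oracles are legitimate inputs to \Cref{lem:multidimensional-ae} (they are, by relabeling the first register), and that the polylog factors stay within $\widetilde O$.
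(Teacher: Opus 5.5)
Your proof is correct. It shares the paper's central idea: estimate masses at every dyadic level, so that each CDF value is a sum of only $O(\log K)$ estimated block masses rather than $K$ leaf masses. Where you differ is in how the levels are queried and how the measure is reconstructed.

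The paper packs all levels into a single probability vector $r_{\ell,k}=\nu^{(\ell)}_k/J$, using a uniform superposition over $\ell$ followed by the bin index. It calls \Cref{lem:multidimensional-ae} once with $\alpha=\eps/(16J^2)$. It then solves a linear program minimizing $\sum_\ell 2^{-\ell}\|A_\ell x-\widetilde\nu^{(\ell)}\|_1$ over the simplex, and bounds $W_1$ by a tree-routing coupling on the dyadic tree.

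You instead run \Cref{lem:multidimensional-ae} separately per level, with coordinatewise-median boosting and a union bound. You read off the prefix sums $\widetilde F(g_j)$ directly and repair them by running maxima and clipping; your claim that this does not increase the sup error is right. You then bound $W_1$ through the CDF formula of \Cref{prop:kr-duality}.

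The query cost is $\widetilde O(J^2/\eps)$ in both. The paper's combined oracle needs only one success event and no boosting. Your reconstruction avoids the LP: post-processing takes $O(K\log K)$ time instead of the $\widetilde O(\eps^{-2})$ the paper attributes to min-cost flow. It also gives the stronger conclusion that the grid CDFs of $\widehat\nu$ and the rounded measure agree uniformly to $O(\eps)$.

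One small indexing fix: a prefix $\{1,\dots,j\}$ with $j$ odd needs a size-$1$ block. Your levels should therefore start at $\ell=0$, which is the leaf vector itself, and $K$ should be padded to a power of two.
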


\begin{proof}
Choose $J$ so that $K=2^J\in[4/\eps,8/\eps)$.  Partition $[-1,1]$
dyadically as follows: at level $\ell=1$ split it into intervals $[-1,0]$ and
$[0,1]$; from level $\ell$ to level $\ell+1$, split every current interval
at its midpoint.  After $J$ levels one has $K$ leaf intervals; see
\Cref{fig:dyadic-w1}.  Let $\nu^{(\ell)}_k$ be the mass of $\nu$ of the $k$-th interval at level $\ell$.  On the disjoint union of all intervals we define
\[
                    r_{\ell,k}=\frac{\nu^{(\ell)}_k}{J},
       \qquad 1\leq\ell\leq J,\quad 1\leq k\leq2^\ell.
\]
Then $r:=[r_{\ell,k}]_{1\leq\ell\leq J, 1\leq k\leq2^\ell}$ is a probability vector since the summation of all variables $r_{\ell,k}$ is $1$.

We implement a probability oracle $O_r$ for $r$ as follows.  Start from
clean registers $\ket0_\ell\ket0_Z\ket0_k$ and prepare a uniform
superposition over levels,
\[
 \ket0_\ell\ket0_Z\ket0_k
 \longmapsto
 \frac1{\sqrt J}\sum_{\ell=1}^J
        \ket\ell_\ell\ket0_Z\ket0_k.
\]
Apply the coherent experiment $U$ to the sample register $Z$, obtaining
\[
 \frac1{\sqrt J}\sum_{\ell=1}^J
        \ket\ell_\ell
        \Bigl(\sum_z\sqrt{\nu(z)}\,\ket z_Z\ket{\gamma_z}\Bigr)
        \ket0_k,
\]
where measuring $Z$ would yield a sample from $\nu$ and $\ket{\gamma_z}$
is an arbitrary workspace state.  Write $\operatorname{bin}_\ell(z)$ for
the index of the unique level-$\ell$ interval containing $z$.  After
the affine map $x=(z+1)/2\in[0,1]$, the index of this unique level-$\ell$ interval is exactly 
$\operatorname{bin}_\ell(z)=\lfloor x\cdot 2^\ell\rfloor$.  Copying it into the third register is a
reversible classical circuit on $(\ell,Z)$:
\[
 \frac1{\sqrt J}\sum_{\ell=1}^J\sum_z
        \sqrt{\nu(z)}\,
        \ket\ell_\ell\ket z_Z
        \ket{\operatorname{bin}_\ell(z)}_k
        \ket{\gamma_z}.
\]
Fix a pair $(\ell,k)$.  The corresponding sum is over those $z$ with
$\operatorname{bin}_\ell(z)=k$, and
\[
 \sum_{z:\,\operatorname{bin}_\ell(z)=k}\nu(z)
 =\nu^{(\ell)}_k
 =J\,r_{\ell,k}.
\]
Absorbing $\ket z_Z\ket{\gamma_z}$ into a normalized workspace
$\ket{\gamma_{\ell,k}}$ therefore produces
\[
 \sum_{\ell,k}\sqrt{r_{\ell,k}}
        \ket{\ell,k}\ket{\gamma_{\ell,k}},
\]
which is the form required by \Cref{lem:multidimensional-ae}.  The map
$\operatorname{bin}_\ell$ uses no call to $U$.  Thus one use of $O_r$ or $O_r^\dagger$ costs one use of $U$ or
$U^\dagger$, respectively; the same is true for their controlled versions.

Apply \Cref{lem:multidimensional-ae} with
$\alpha=\eps/(16J^2)$ to obtain $\widetilde r$ satisfying
$\norm{\widetilde r-r}_\infty\leq\alpha$ with probability $\ge 2/3$.  Set
\[
                \widetilde \nu^{(\ell)}_k=J\widetilde r_{\ell,k},
\]
We write \(\widetilde\nu^{(\ell)}=(\widetilde\nu^{(\ell)}_k)_{1\le k\le 2^{\ell}}\) for the estimated \(\ell\)-level mass vector, and write \(\widetilde\nu=(\widetilde\nu^{(J)}_k)_{1\le k\le K}\) for the
estimated leaf-mass vector. We note that they are not necessarily probability vector.
We also write
\(\nu_{\mathrm{grid}}=[\nu^{(J)}_k]_{1\le k\le K}\in\R^K\) for the true leaf-probability vector.  Let \(A_\ell\) aggregate leaf masses into
level-\(\ell\) intervals.  Compute a probability vector
\(\widehat\nu\in\R^K\) minimizing
\[
 \Phi(x)=\sum_{\ell=1}^J2^{-\ell}
               \norm{A_\ell x-\widetilde \nu^{(\ell)}}_1,
 \qquad \text{s.t.}\qquad x\geq0,\quad\sum_i x_i=1.
\]
By $\norm{\widetilde r-r}_\infty\le \alpha$, it holds that 
\[
 \Phi(\nu_{\mathrm{grid}})
 \leq\sum_{\ell=1}^J2^{-\ell}\,2^\ell J\alpha
 =J^2\alpha=\frac\eps{16}.
\]
By optimality of $\widehat\nu$ and the triangle inequality,
\begin{equation}
\label{eq:dyadic-tree-discrepancy}
 \sum_{\ell=1}^J2^{-\ell}
       \norm{A_\ell(\widehat \nu-\nu_{\mathrm{grid}})}_1
 \leq \Phi(\widehat \nu)+\Phi(\nu_{\mathrm{grid}})
 \leq\frac\eps8.
\end{equation}

Now for the $k$-th interval at level $\ell$, we write
\[
 \Delta_{\ell,k}
 :=\widehat\nu^{(\ell)}_k-\nu^{(\ell)}_k,
\]
equivalently the $k$-th entry of $A_\ell(\widehat\nu-\nu_{\mathrm{grid}})$.  Routing the
discrepancy along the tree sends net mass \(\lvert\Delta_{\ell,k}\rvert\)
across the parent edge of that interval, see \Cref{fig:dyadic-w1}.  Define
\[
 d_{\mathrm{tree}}(\widehat \nu,\nu_{\mathrm{grid}})
 :=\sum_{\ell=1}^J 2^{-\ell}
       \norm{A_\ell(\widehat \nu-\nu_{\mathrm{grid}})}_1,
\]
which is the left-hand side of \eqref{eq:dyadic-tree-discrepancy}, hence
$d_{\mathrm{tree}}(\widehat \nu,\nu)\le\eps/8$.  Identify \(\widehat\nu\)
with the discrete measure placing masses \(\widehat\nu_k\) at the leaf
centers.  A level-\(\ell\) interval of \([-1,1]\) has length \(2^{1-\ell}\),
while the corresponding tree edge has length \(2^{-\ell}\), so the
Euclidean distance between two leaf centers is at most twice their tree
distance.  Thus this tree routing is a coupling of \(\widehat\nu\) and \(\nu_{\mathrm{grid}}\), and
\[
 W_1(\widehat\nu,\nu_{\mathrm{grid}})
 \le 2\,d_{\mathrm{tree}}(\widehat\nu,\nu_{\mathrm{grid}})
 \le\frac\eps4.
\]
Moving every sample of \(\nu\) to the center of its leaf interval costs at
most \(1/K\le\eps/4\).  Therefore
\[
 W_1(\widehat\nu,\nu)
 \le W_1(\widehat\nu,\nu_{\mathrm{grid}})
      +W_1(\nu_{\mathrm{grid}},\nu)
 \le\frac\eps4+\frac\eps4
 =\frac\eps2<\eps.
\]

The index set of $r$ has $\sum_{\ell=1}^J2^\ell=O(K)=O(1/\eps)$ elements,
and \Cref{lem:multidimensional-ae} uses
\[
          \widetilde O(1/\alpha)
          =\widetilde O(J^2/\eps)
          =\widetilde O(1/\eps)
\]
oracle calls.  Finally, computing $\widehat\nu$ is a linear program of size
$O(1/\eps)$, which doesn't need any graph queries and can be solved in $\widetilde{O}(\eps^{-2})$ additional time using minimum-cost flow on trees \cite{orlin1988faster}.
\end{proof}

\begin{figure}[t]
\centering
\begin{tikzpicture}[font=\small, >=Stealth]
  \def\W{11.2}
  \def\n{8}
  \pgfmathsetmacro{\dx}{\W/\n}
  \foreach \i in {0,...,7} {
    \pgfmathsetmacro{\cx}{(\i+0.5)*\dx}
    \coordinate (L\i) at (\cx,0);
  }
  \def\hiii{1.15}
  \def\hii{2.15}
  \def\hi{3.15}
  \def\hroot{3.85}

  \fill[blue!10] (4*\dx, -0.22) rectangle (\W, \hi+0.08);
  \fill[blue!16] (4*\dx, -0.22) rectangle (6*\dx, \hii+0.08);
  \fill[blue!25] (4*\dx, -0.22) rectangle (5*\dx, \hiii+0.08);

  \foreach \i/\j in {0/1, 2/3, 4/5, 6/7} {
    \pgfmathsetmacro{\xm}{(\i+\j+1)*\dx/2}
    \draw[thick] (L\i) -- (\i*\dx+\dx/2, \hiii) -- (\j*\dx+\dx/2, \hiii) -- (L\j);
    \fill[black] (\xm, \hiii) circle (1.4pt);
  }
  \foreach \a/\b/\c/\d in {0/1/2/3, 4/5/6/7} {
    \pgfmathsetmacro{\xab}{(\a+\b+1)*\dx/2}
    \pgfmathsetmacro{\xcd}{(\c+\d+1)*\dx/2}
    \pgfmathsetmacro{\xm}{(\a+\d+1)*\dx/2}
    \draw[thick] (\xab, \hiii) -- (\xab, \hii) -- (\xcd, \hii) -- (\xcd, \hiii);
    \fill[black] (\xm, \hii) circle (1.4pt);
  }
  \pgfmathsetmacro{\xl}{(0+3+1)*\dx/2}
  \pgfmathsetmacro{\xr}{(4+7+1)*\dx/2}
  \draw[thick] (\xl, \hii) -- (\xl, \hi) -- (\xr, \hi) -- (\xr, \hii);
  \fill[black] ({\W/2}, \hi) circle (1.4pt);
  \draw[thick] ({\W/2}, \hi) -- ({\W/2}, \hroot);
  \fill[black] ({\W/2}, \hroot) circle (1.6pt);

  \draw[line width=1.15pt, blue!70!black]
    (L4) -- (4.5*\dx, \hiii) -- (5*\dx, \hiii) -- (5*\dx, \hii)
    -- (6*\dx, \hii) -- (6*\dx, \hi) -- ({\W/2}, \hi) -- ({\W/2}, \hroot);

  \draw[thick] (0,0) -- (\W,0);
  \foreach \i in {0,...,8} {
    \draw[thick] ({\i*\dx}, -0.12) -- ({\i*\dx}, 0.12);
  }
  \foreach \i in {0,...,7} {
    \fill (L\i) circle (1.5pt);
  }
  \node[anchor=east] at (0,-0.42) {$-1$};
  \node[anchor=west] at (\W,-0.42) {$1$};

  \draw[blue!70!black, dashed, thick] (4*\dx, -0.22) -- (4*\dx, \hiii+0.35);
  \draw[<->, blue!70!black] (4*\dx, \hiii+0.28) -- (5*\dx, \hiii+0.28);
  \node[blue!70!black, anchor=south, inner sep=1pt] at (4.5*\dx, \hiii+0.32)
    {$2^{-3}\lvert\Delta_{3,5}\rvert$};

  \draw[blue!60!black, dashed] (4*\dx, \hii+0.12) -- (4*\dx, \hii+0.55);
  \draw[blue!60!black, dashed] (6*\dx, \hii+0.12) -- (6*\dx, \hii+0.55);
  \draw[<->, blue!60!black] (4*\dx, \hii+0.48) -- (6*\dx, \hii+0.48);
  \node[blue!60!black, anchor=south, inner sep=1pt] at (5*\dx, \hii+0.52)
    {$2^{-2}\lvert\Delta_{2,3}\rvert$};

  \draw[<->] (\W/2+0.15, \hi+0.42) -- (\W, \hi+0.42);
  \node[anchor=south, inner sep=1pt] at (0.75*\W, \hi+0.46)
    {$2^{-1}\lvert\Delta_{1,2}\rvert$};

  \node[anchor=east] at (-0.15, \hi) {$\ell=1$};
  \node[anchor=east] at (-0.15, \hii) {$\ell=2$};
  \node[anchor=east] at (-0.15, \hiii) {$\ell=3$};
  \node[anchor=east] at (-0.15, 0) {leaves};

  \node at ({\W/2}, \hroot+0.38)
    {$d_{\mathrm{tree}}(\widehat \nu,\nu_{\mathrm{grid}})=\sum_{\ell}2^{-\ell}\lVert A_\ell(\widehat \nu-\nu_{\mathrm{grid}})\rVert_1$};
\end{tikzpicture}
\caption{Dyadic intervals of $[-1,1]$.  The $k$-th interval at level $\ell$ is
joined to its parent by an edge of length $2^{-\ell}$.  Its imbalance
$\Delta_{\ell,k}$ is the difference between the masses that
$\widehat\nu$ and $\nu$ assign to that interval, equivalently an entry of
$A_\ell(\widehat\nu-\nu_{\mathrm{grid}})$.  The net flow across the parent edge equals
$\lvert\Delta_{\ell,k}\rvert$ and contributes
$2^{-\ell}\lvert\Delta_{\ell,k}\rvert$ to
$d_{\mathrm{tree}}(\widehat\nu,\nu_{\mathrm{grid}})$.  The resulting tree routing is a
coupling of $\widehat\nu$ and $\nu_{\mathrm{grid}}$, and
$W_1(\widehat\nu,\nu_{\mathrm{grid}})\le 2d_{\mathrm{tree}}(\widehat\nu,\nu_{\mathrm{grid}})$.}
\label{fig:dyadic-w1}
\end{figure}

\begin{proof}[Proof of \Cref{thm:intro-quantum-upper}.]
It suffices to consider $0<\eps<1$.  Set
$\Delta=\lceil32/\eps^2\rceil$.  By \Cref{lem:truncation},
$W_1(\rho_N,\rho_B)\leq\eps/4$.  Run
\Cref{lem:purified-spectrum} with $\eta=\eps/4$, so its output law $\nu$
satisfies $W_1(\nu,\rho_B)\leq\eps/4$.  Finally apply
\Cref{lem:quantum-wasserstein-learning} to $U_{\rm spec}$ with target
accuracy $\eps/2$.  Since $\rho_N=\rho_G$, the triangle inequality gives
\[
                         W_1(\widehat\rho,\rho_G)\leq\eps,
\]
where $\widehat\rho$ is the classical measure returned by
\Cref{lem:quantum-wasserstein-learning}.  Amplifying the constant-success
guarantee of \Cref{lem:multidimensional-ae} makes the overall success
probability at least $2/3$.

The learning algorithm makes $\widetilde O(1/\eps)$ calls to
$U_{\rm spec}$ and $U_{\rm spec}^\dagger$.  By
\Cref{lem:purified-spectrum}, each such call costs
\[
       \widetilde O(\eta^{-1}\sqrt\Delta)
       =\widetilde O(\eps^{-2})
\]
graph queries.  The total graph query complexity is therefore
$\widetilde O(\eps^{-3})$.  The output has $O(1/\eps)$ support points, and
the classical reconstruction via linear program talkes $\widetilde{O}(\eps^{-2})$ time, see the proof of \Cref{lem:quantum-wasserstein-learning}.
\end{proof}

\section{A quantum query lower bound}
\label{sec:quantum-lower}

Next we prove \Cref{thm:intro-quantum-lower}. We use a family of parity-controlled cycle graphs.  Its spectral densities form a large packing in $W_1$, so estimating the spectral density of one graph in this family identifies an entire codeword. Finally, we use the polynomial method to prove the desired lower bound.

We begin with some properties of cycle graph. Denote $C_m$ by the length $m$ cycle graph. For an integer $m\geq5$, let $\mu_m=\rho_{C_m}$.  The eigenvalues of the normalized adjacency matrix of $C_m$ are $\cos(2\pi k/m)$, $0\leq k<m$ \cite[Theorem 5.5.1]{spielman2019sagt}.  Define
\[
                    q_m(\theta)=\operatorname{sgn}(\sin(m\theta)),
                    \qquad 0<\theta<\pi,
\]
where $\operatorname{sgn}(x)$ is the sign function, $\operatorname{sgn}(x)=1$ if $x>0$, $\operatorname{sgn}(x)=0$ if $x=0$, and $\operatorname{sgn}(x)=-1$ if $x<0$. We give the following Lemma about the spectral difference of $C_m$ and $C_{2m}$.

\begin{lemma}
\label{lem:cycle-spectral-difference}
For almost every $\theta\in(0,\pi)$,
\[
 F_{\mu_{2m}}(\cos\theta)-F_{\mu_m}(\cos\theta)
                         =\frac{q_m(\theta)}{2m}.
\]
\end{lemma}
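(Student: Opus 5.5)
The plan is to compute both CDFs explicitly by counting eigenvalues, and then compare them using the substitution $x=\cos\theta$. The eigenvalues of $N_{C_m}$ are $\cos(2\pi k/m)$ for $0\le k<m$. Since $\cos$ is decreasing on $[0,\pi]$, we have $\cos(2\pi k/m)\le\cos\theta$ exactly when the representative angle $\phi_k=\min(2\pi k/m,\,2\pi-2\pi k/m)\in[0,\pi]$ satisfies $\phi_k\ge\theta$.

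\textbf{Counting for one cycle.} Work with the generic $\theta$ for which $m\theta/\pi\notin\mathbb Z$ and $2m\theta/\pi\notin\mathbb Z$; this excludes only finitely many $\theta$. We need the number of $k\in\{0,\dots,m-1\}$ with $2\pi k/m\in[\theta,2\pi-\theta]$. Writing $t=m\theta/(2\pi)\in(0,m/2)$, this is the number of integers $k$ with $t\le k\le m-t$, namely $m-2\lfloor t\rfloor-1$ when $t$ is not an integer. Hence
\[
F_{\mu_m}(\cos\theta)=1-\frac{2\lfloor m\theta/(2\pi)\rfloor+1}{m}.
\]

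\textbf{Comparing $2m$ with $m$.} Replacing $m$ by $2m$ gives $F_{\mu_{2m}}(\cos\theta)=1-\frac{2\lfloor m\theta/\pi\rfloor+1}{2m}$. Put $s=m\theta/(2\pi)$ and $j=\lfloor s\rfloor$, so that $\lfloor 2s\rfloor\in\{2j,2j+1\}$. The difference is
\[
F_{\mu_{2m}}-F_{\mu_m}=\frac{2(2j+1)-(2\lfloor2s\rfloor+1)}{2m}.
\]
If $\lfloor 2s\rfloor=2j$, i.e.\ the fractional part $\{s\}<1/2$, this equals $\frac{1}{2m}$. If $\{s\}>1/2$, it equals $-\frac{1}{2m}$.

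Finally, $\sin(m\theta)=\sin(2\pi s)$ is positive when $\{s\}\in(0,1/2)$ and negative when $\{s\}\in(1/2,1)$. This matches $q_m(\theta)$, and the excluded $\theta$ (where $\sin(m\theta)=0$ or the counts hit boundaries) form a finite, hence null, set.

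The only delicate step is the boundary bookkeeping: getting the count $m-2\lfloor t\rfloor-1$ right, with $k=0$ always excluded since $\theta>0$. I would double-check it on small cases such as $m=5$.
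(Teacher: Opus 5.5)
Your proof is correct and follows essentially the same route as the paper: count the eigenvalues $\cos(2\pi k/r)$ on one side of $\cos\theta$ to get a closed form in $\lfloor r\theta/(2\pi)\rfloor$, then compare $r=m$ with $r=2m$ by a case split matched against $\operatorname{sgn}(\sin m\theta)$. The differences are cosmetic: you count eigenvalues $\le\cos\theta$ and split on the fractional part of $s=m\theta/(2\pi)$, while the paper counts the complementary mass $\mu_r([\cos\theta,1])$ and splits on the parity of $\lfloor y\rfloor$ with $y=m\theta/\pi$.
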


\begin{proof}
The eigenvalues of $C_r$ are $\cos(2\pi k/r)$ for $0\le k<r$.  For
$\theta\in(0,\pi)$ with $\cos\theta$ not an eigenvalue,
$\cos(2\pi k/r)\ge\cos\theta$ if and only if
$k\le r\theta/(2\pi)$ or $k\ge r-r\theta/(2\pi)$.  Hence
\[
 \mu_r([\cos\theta,1])
       =\frac{1+2\lfloor r\theta/(2\pi)\rfloor}{r}.
\]
 Set $y=m\theta/\pi$, then it holds that
\begin{align*}
 F_{\mu_{2m}}(\cos\theta)-F_{\mu_m}(\cos\theta)
 &=\mu_m([\cos\theta,1])-\mu_{2m}([\cos\theta,1])\\
 &=\frac{1+2\lfloor y/2\rfloor}{m}
   -\frac{1+2\lfloor y\rfloor}{2m}
 =\frac{1+4\lfloor y/2\rfloor-2\lfloor y\rfloor}{2m}.
\end{align*}
If $\lfloor y\rfloor=2k$ is even then
$\lfloor y/2\rfloor=k$ and the numerator is $1$; if $\lfloor y\rfloor=2k+1$ is odd
then $\lfloor y/2\rfloor=k$ and the numerator is $-1$.  For
non-integral $y$, $\operatorname{sgn}(\sin(m\theta))=\operatorname{sgn}(\sin(\pi y))=(-1)^{\lfloor y\rfloor}$.
Thus the difference equals $q_m(\theta)/(2m)$.
\end{proof}

\subsection{The construction of hard instance.}
Next we discribe our construction. We consider the vertex set $\mathbb Z_m\times\{0,1\}$. Fix a large integer $L$ and let
\[
 \mathcal M=[L,2L]\cap(2\mathbb Z+1),\qquad K=|\mathcal M|=\Theta(L).
\]
For each $m\in\mathcal M$, take an input block
$x^{(m)}\in\{0,1\}^{L}$ and extend it by zeros to
$y^{(m)}\in\{0,1\}^{m}$.  On $\mathbb Z_m\times\{0,1\}$ put the edges
\[
 (t,s)\sim(t+1,s\oplus y^{(m)}_t).
\]

We call the resulting graph $G_{x^{(m)}}$. The vertex set $\mathbb Z_m\times\{0,1\}$ consists of $m$ columns and two
sheets, hence $2m$ vertices of degree two, so $G_{x^{(m)}}$ is a disjoint union
of cycles.  Starting at $(0,0)$ and walking once around the $m$ columns
ends on sheet $\bigoplus_{t=0}^{m-1}y^{(m)}_t$.  Since $y^{(m)}$ is
$x^{(m)}$ padded by zeros, this total twist equals
$b_m:=\bigoplus_{t=0}^{L-1}x^{(m)}_t$.  If $b_m=0$, one lap returns to the
same sheet and the two sheets form separate $m$-cycles; if $b_m=1$, one
lap switches sheets and two laps are needed to close, giving a single
$2m$-cycle.
As a result, the constructed graph $G_{x^{(m)}}=C_m\sqcup C_m$ if $b_m=0$ and $G_{x^{(m)}}=C_{2m}$ if $b_m=1$.  Let $G_x=\sqcup_{m\in\mathcal M}G_{x^{(m)}}$ and write
$S=\sum_{m\in\mathcal M}m$.  Thus $G_x$ has $2S=\Theta(KL)$ vertices, and $G_x$ is uniquely determined by $b\in \{0,1\}^{\mathcal M}$. We write $\rho_b=\rho_{G_x}$.

\subsection{Spectral density seperation between $G_x$.}
Next we exhibit a large packing $\mathcal Z\subseteq\{0,1\}^{\mathcal M}$
such that the spectral densities of the associated graphs $G_z$ are
pairwise separated in Wasserstein--$1$ distance.  This is the content of
the following lemma.

\begin{lemma}
  \label{lem:square-wave-packing}
  There are absolute constants $c,C>0$ and a set
  $\mathcal Z\subseteq\{0,1\}^{\mathcal M}$ satisfying
  \[
   |\mathcal Z|\geq\exp\!\left(\frac{cK}{B_L}\right),
   \qquad B_L=C\log\log(3L),
  \]
  such that, for all distinct $z,z'\in\mathcal Z$,
  \[
   W_1(\rho_z,\rho_{z'}) \geq \frac{c\sqrt K}{S}=\Omega(L^{-3/2}).
  \]
  \end{lemma}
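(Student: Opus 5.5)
Reduce $W_1$ to an $L^1$ norm of a signed combination of square waves, then extract a packing by a Gilbert–Varshamov/probabilistic argument.

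\textbf{Step 1: reduction to square waves.} Since $G_b$ is a disjoint union of blocks, $\rho_b=\frac1{2S}\sum_{m}\bigl(2m\,\mu_m\ \text{or}\ 2m\,\mu_{2m}\bigr)$ according to $b_m$. By \Cref{lem:cycle-spectral-difference},
\[
F_{\rho_b}(\cos\theta)=F_0(\cos\theta)+\frac1{2S}\sum_{m\in\mathcal M} b_m\, q_m(\theta),
\]
where $F_0$ does not depend on $b$. By \Cref{prop:kr-duality} and the substitution $x=\cos\theta$,
\[
W_1(\rho_b,\rho_{b'})=\frac1{2S}\int_0^\pi\Bigl|\sum_m (b_m-b'_m)q_m(\theta)\Bigr|\sin\theta\,d\theta.
\]
Thus it suffices to find $\mathcal Z$ such that, for every nonzero difference $\sigma=z-z'\in\{-1,0,1\}^{\mathcal M}$ arising from $\mathcal Z$, the function $g_\sigma=\sum_m\sigma_m q_m$ has weighted $L^1$ norm $\Omega(\sqrt K)$. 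We restrict to $\theta\in[\pi/6,5\pi/6]$, where $\sin\theta\ge1/2$.

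\textbf{Step 2: an $L^1$ lower bound via $L^2$ and $L^4$.} Use Hölder in the form $\|g\|_1\ge \|g\|_2^3/\|g\|_4^2$. Expand the odd square wave as $q_m(\theta)=\frac4\pi\sum_{j\text{ odd}}\frac{\sin(jm\theta)}{j}$. The inner products $\langle q_m,q_{m'}\rangle$ for $m\ne m'$ come from coincidences $jm=j'm'$, which require large $j,j'$ because the $m$'s are odd numbers in $[L,2L]$. This gives $\|g_\sigma\|_2^2\approx c_0|\supp\sigma|+(\text{cross terms})$. The cross terms are controlled by $\sum_{m\neq m'}\sigma_m\sigma_{m'}a_{mm'}$ with $a_{mm'}\lesssim \gcd(m,m')/\max(m,m')$, up to restriction to a subinterval. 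The $L^4$ norm is bounded by $O(\sqrt{|\supp\sigma|})$ through a similar additive-energy count of solutions to $j_1m_1\pm j_2m_2\pm j_3m_3\pm j_4m_4=0$ weighted by $1/(j_1j_2j_3j_4)$. The $\log\log$ loss comes from the divisor-type sums over $\gcd$'s. To simplify this, I will first pass to a subfamily of $\mathcal M$ of size $K/B_L$ consisting of $m$'s that are pairwise almost coprime, for example primes, or integers with no small common factors chosen greedily. Sums like $\sum_{d}\frac{\phi\text{-type}}{d}$ over such a family cost only a $\log\log L$ factor.

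\textbf{Step 3: packing.} On the reduced index set $\mathcal M'$ with $|\mathcal M'|\asymp K/B_L$, take a Gilbert–Varshamov code with relative distance $1/4$ and size $\exp(c|\mathcal M'|)$. Every difference $\sigma$ then has support at least $|\mathcal M'|/4$. Step 2 gives $\|g_\sigma\|_1\gtrsim\sqrt{|\mathcal M'|}$. To recover $\sqrt K$ rather than $\sqrt{K/B_L}$, I will instead use blocks: each coordinate of the code controls a group of $B_L$ values of $m$, all set equally. Alternatively, I will let the code live on all of $\mathcal M$ and use the quasi-orthogonality only in an averaged sense, showing a random $\sigma$ satisfies the $L^4$ bound with probability $1-e^{-cK/B_L}$, and then union-bound. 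Since $S=\Theta(L^2)$ and $K=\Theta(L)$, this gives $W_1\ge c\sqrt K/S=\Omega(L^{-3/2})$.

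\textbf{Main obstacle.} The hard step is the near-orthogonality estimate in Step 2: bounding the cross correlations and the fourth moment of sums of square waves with distinct odd frequencies uniformly over all sign patterns in the code. I would first attempt the random-code route, using a Bernstein/Hanson–Wright bound on $\sum\sigma_m\sigma_{m'}a_{mm'}$ with the matrix norm $\|a\|$ controlled by a gcd-sum estimate of order $\log\log L$. That estimate is exactly the source of $B_L$.
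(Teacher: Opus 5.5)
Your Step 1 is correct and is the paper's reduction $W_1(\rho_b,\rho_c)=\frac1{2S}\int_0^\pi|\sum_m(b_m-c_m)q_m(\theta)|\sin\theta\,d\theta$. The gap is in Steps 2--3. The bound $\|g_\sigma\|_4=O(\sqrt{|\supp\sigma|})$ is false. More strongly, no argument that uses only the support of $\sigma$ can give separation $\sqrt K$; that is all a Gilbert--Varshamov code provides, with or without blocks. For odd $m\le 2L$ and $|\theta-\pi/2|<\pi/(4L)$ one has $q_m(\theta)=(-1)^{(m-1)/2}$, and $\pi/2$ lies inside your window $[\pi/6,5\pi/6]$. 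Take $\sigma_m=(-1)^{(m-1)/2}$, which is a legitimate full-support difference $z-z'$. Then $g_\sigma\equiv K$ on an interval of length $\asymp1/L$, so $\|g_\sigma\|_4^4\gtrsim K^4/L\asymp K^3$ rather than $O(K^2)$. Worse, $g_\sigma(\pi/2+\delta)=\sum_m\operatorname{sgn}\cos(m\delta)$ is a superposition of Dirichlet-type kernels, and its $L^1$ norm is only polylogarithmic in $K$.

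The random variant fails quantitatively for the same reason. With $Y=\sum_m w_m(-1)^{(m-1)/2}$ one has $g_w\equiv Y$ near $\pi/2$, and $\Pr[|Y|\ge aK^{3/4}]\ge e^{-C_a\sqrt K}$ for each fixed $a>0$. Hence for any fixed $A$ the event $\|g_w\|_4^4\le AK^2$ fails with probability at least $e^{-C'\sqrt K}\gg e^{-cK/B_L}$, and the union bound over $e^{\Omega(K/B_L)}$ codewords breaks. Hanson--Wright for $\sigma^{\mathsf T}H\sigma$ only controls $\|g_\sigma\|_2$, which is not where the difficulty lies. On $(0,\pi)$ one even has $\|g_\sigma\|_2^2\ge\frac8\pi\|\sigma\|_2^2$ deterministically, since frequency $m$ occurs only in $q_m$. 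The coprime-subfamily option loses a factor $\sqrt{B_L}$, as you note.

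The missing idea is to concentrate the $L^1$ functional directly instead of passing through $L^4$.
\begin{enumerate}
\item Fix $\theta$ and let $w=z-z'$ for independent uniform $z,z'$. Then $X(\theta)=\sum_m w_mq_m(\theta)$ is a sum of $K$ i.i.d.\ symmetric $\{0,\pm1\}$ variables. Paley--Zygmund, applied to the fourth moment of the scalar $X(\theta)$ rather than of $g$, gives $\mathbb E|X(\theta)|\ge c\sqrt K$ uniformly in $\theta$. Fubini then gives $\mathbb E\,W_1(\rho_z,\rho_{z'})\ge c\sqrt K/S$.
\item The map $\Phi(w)=\int_0^\pi|\sum_m w_mq_m|\sin\theta\,d\theta$ is convex. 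Cauchy--Schwarz gives $|\Phi(d)-\Phi(e)|\le\pi\sqrt{(d-e)^{\mathsf T}H(d-e)}\le\pi\sqrt{\|H\|_2}\,\|d-e\|_2$, where $H_{mm'}=\gcd(m,m')^2/(mm')$. The divisor-sum bound gives $\|H\|_2\le\|H\|_\infty=O(\log\log L)$.
\item Talagrand's convex concentration then yields $\Pr[W_1(\rho_z,\rho_{z'})<c\sqrt K/S]\le e^{-cK/B_L}$. A union bound over $e^{cK/(3B_L)}$ random vectors gives $\mathcal Z$.
\end{enumerate}
So $B_L$ enters through the Lipschitz constant, i.e.\ the operator norm of the Gram matrix, not through thinning $\mathcal M$. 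The resonance spikes that destroy $L^4$ control are harmless here, because the sensitivity of the $L^1$ functional is governed by the $L^2$ geometry.
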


Some preliminaries are needed for the proof of \Cref{lem:square-wave-packing}.

\begin{lemma}
\label{lem:spectral-packing-formula}
For parity vectors $b,c\in\{0,1\}^{\mathcal M}$,
\[
 W_1(\rho_b,\rho_c)
 =\frac1{2S}\int_0^\pi
       \left|\sum_{m\in\mathcal M}(b_m-c_m)q_m(\theta)\right|
       \sin\theta\,d\theta .
\]
\end{lemma}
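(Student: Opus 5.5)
We use the CDF form of \Cref{prop:kr-duality}, $W_1(\rho_b,\rho_c)=\int_{-1}^1|F_{\rho_b}(x)-F_{\rho_c}(x)|\,dx$, and compute the CDF difference block by block with \Cref{lem:cycle-spectral-difference}. The plan has three steps: write each CDF as a vertex-weighted mixture over the blocks, subtract, and change variables $x=\cos\theta$.

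First, the CDF of $\rho_b$. For each $m\in\mathcal M$ the block $G_{x^{(m)}}$ has $2m$ vertices, and its spectral measure (normalized to mass one on the block) is $\mu_m$ if $b_m=0$ and $\mu_{2m}$ if $b_m=1$. For $b_m=0$ this uses that $C_m\sqcup C_m$ has the same normalized spectrum as $C_m$ with doubled multiplicity. Since $G_x$ is a disjoint union, the spectrum of $N_{G_x}$ is the multiset union of the block spectra. Hence
\[
 F_{\rho_b}(x)=\frac1{2S}\sum_{m\in\mathcal M}2m\Bigl(F_{\mu_m}(x)+b_m\bigl(F_{\mu_{2m}}(x)-F_{\mu_m}(x)\bigr)\Bigr).
\]

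Second, subtract the two mixtures. The $F_{\mu_m}$ terms cancel, leaving
\[
 F_{\rho_b}(x)-F_{\rho_c}(x)=\frac1{2S}\sum_m 2m\,(b_m-c_m)\bigl(F_{\mu_{2m}}(x)-F_{\mu_m}(x)\bigr).
\]
Put $x=\cos\theta$ with $\theta\in(0,\pi)$. By \Cref{lem:cycle-spectral-difference}, for almost every $\theta$ we have $2m\bigl(F_{\mu_{2m}}(\cos\theta)-F_{\mu_m}(\cos\theta)\bigr)=q_m(\theta)$. This gives, almost everywhere,
\[
 F_{\rho_b}(\cos\theta)-F_{\rho_c}(\cos\theta)=\frac1{2S}\sum_m(b_m-c_m)q_m(\theta).
\]

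Third, integrate. Substitute $x=\cos\theta$, so $dx=-\sin\theta\,d\theta$ and $[-1,1]$ corresponds to $\theta\in[0,\pi]$ with the orientation reversed. This yields the stated formula. The exceptional null set consists of the finitely many $\theta$ where $\cos\theta$ is an eigenvalue, or where $m\theta/\pi\in\mathbb Z$ for some $m$. It does not affect the integral.

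The only delicate point is the first step: the weighting by block size $2m/(2S)$ must exactly cancel the $1/(2m)$ in \Cref{lem:cycle-spectral-difference}, and the $b_m=0$ case must correctly give $\mu_m$. The rest is routine.
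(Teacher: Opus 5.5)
Your proof is correct and takes the same approach as the paper. You write $\rho_b$ as the block-size-weighted mixture (weight $m/S$ per block), use \Cref{lem:cycle-spectral-difference} to get $F_{\rho_b}(\cos\theta)-F_{\rho_c}(\cos\theta)=\frac1{2S}\sum_m(b_m-c_m)q_m(\theta)$ almost everywhere, and conclude with the CDF form of \Cref{prop:kr-duality} under $x=\cos\theta$; you also make explicit details the paper leaves implicit, namely that $C_m\sqcup C_m$ has spectral density $\mu_m$ and that the exceptional null set does not affect the integral.
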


\begin{proof}
$G_{x^{(m)}}$ has $2m$ vertices, hence weight $m/S$ in the spectral density.
The CDF identity in \Cref{lem:cycle-spectral-difference} gives
\[
 F_{\rho_b}(\cos\theta)-F_{\rho_c}(\cos\theta)
   =\frac1{2S}\sum_m(b_m-c_m)q_m(\theta).
\]
Now using \Cref{prop:kr-duality} and the substitution $x=\cos\theta$ gives the desired result.
\end{proof}

The square-wave functions $q_m$ have an explicit Gram identity, which
follows from the classical Fourier series of $\operatorname{sgn}(\sin x)$
\cite[Ch.~I]{Zygmund2002}.

\begin{lemma}
\label{lem:square-wave-gram}
For odd positive integers $m,m'$,
\[
 \frac1\pi\int_0^\pi q_m(\theta)q_{m'}(\theta)\,d\theta
             =\frac{\gcd(m,m')^2}{mm'}.
\label{eq:square-wave-gram}
\]
\end{lemma}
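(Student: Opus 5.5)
We expand each square wave in its Fourier sine series and then integrate term by term using orthogonality. Since $\operatorname{sgn}(\sin x)=\frac4\pi\sum_{k\ \mathrm{odd}}\frac{\sin(kx)}{k}$ in $L^2(0,\pi)$ (and in fact in $L^2$ of the full period), substituting $x=m\theta$ gives
\[
 q_m(\theta)=\frac4\pi\sum_{k\ \mathrm{odd}}\frac{\sin(km\theta)}{k}.
\]
This series converges in $L^2(0,\pi)$, so the inner product $\int_0^\pi q_mq_{m'}$ equals the double sum of the pairwise inner products of the sine terms. On $(0,\pi)$ the functions $\{\sin(j\theta)\}_{j\ge1}$ are orthogonal, with $\int_0^\pi\sin(j\theta)\sin(j'\theta)\,d\theta=\frac\pi2\delta_{jj'}$. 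Therefore
\[
 \frac1\pi\int_0^\pi q_mq_{m'}\,d\theta=\frac{16}{\pi^2}\cdot\frac12\sum_{\substack{k,k'\ \mathrm{odd}\\ km=k'm'}}\frac1{kk'}.
\]

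Next we parametrize the pairs $(k,k')$ with $km=k'm'$. Let $g=\gcd(m,m')$, $m=ga$ and $m'=ga'$ with $\gcd(a,a')=1$. Then $ka=k'a'$ forces $k=a'j$ and $k'=aj$ for some positive integer $j$. Since $m,m'$ are odd, $a,a'$ are odd, so $k,k'$ are odd exactly when $j$ is odd. The sum therefore becomes
\[
 \sum_{j\ \mathrm{odd}}\frac1{aa'j^2}=\frac1{aa'}\cdot\frac{\pi^2}{8},
\]
using $\sum_{j\ \mathrm{odd}}j^{-2}=\pi^2/8$. Multiplying by the prefactor gives $\frac{16}{\pi^2}\cdot\frac12\cdot\frac{\pi^2}{8aa'}=\frac1{aa'}=\frac{g^2}{mm'}$, which is the claimed identity.

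The main point requiring care is justifying the term-by-term product, since the series converges only in $L^2$, not absolutely. This is handled by Parseval's identity, that is, by continuity of the inner product under $L^2$ limits, because both $q_m$ and $q_{m'}$ lie in $L^2(0,\pi)$. One should also confirm that the sine expansion is valid on $(0,\pi)$ itself, which holds because $q_m$ restricted to $(0,\pi)$ is the odd, $2\pi$-periodic square wave evaluated at $m\theta$. As a sanity check, taking $m=m'$ gives $1$, matching $q_m^2=1$ almost everywhere.
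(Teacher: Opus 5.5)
Your proposal is correct and matches the paper's proof: expand each $q_m$ in its odd-harmonic sine series, use orthogonality of $\{\sin(j\theta)\}_{j\ge1}$ on $(0,\pi)$ to keep only the coincident frequencies $k=a'j$, $k'=aj$ with $j$ odd, and evaluate $\sum_{j\ \mathrm{odd}}j^{-2}=\pi^2/8$ to obtain $1/(aa')=\gcd(m,m')^2/(mm')$. You also supply two details the paper leaves implicit: the $L^2$-continuity justification for integrating term by term, and the observation that $a,a'$ are odd, which is why $j$ must be odd.
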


\begin{proof}
The square-wave expansion \cite[Ch.~I]{Zygmund2002} gives
\[
  q_j(\theta)=\operatorname{sgn}(\sin j\theta)
 =\frac4\pi
 \sum_{\substack{r\geq1\\ r\text{ odd}}}\frac{\sin(rj\theta)}{r}.
\]

Let $d=\gcd(m,m')$, $m=da$, and $m'=db$.  Orthogonality of sines on
$(0,\pi)$ retains only coincident frequencies, indexed by $r=b\ell$ and
$s=a\ell$ with $\ell$ odd.  Since
$\sum_{\ell\text{ odd}}\ell^{-2}=\pi^2/8$, the inner product is
$1/(ab)=d^2/(mm')$.
\end{proof}

We also utilize the following inequalities.

\begin{lemma}[Paley--Zygmund inequality~\cite{PaleyZygmund1932}]
\label{lem:paley-zygmund}
Let $Z\ge0$ be a random variable with $\mathbb E Z^2<\infty$.  For every
$\theta\in[0,1]$,
\[
 \Pr(Z>\theta\,\mathbb E Z)
 \ge(1-\theta)^2\frac{(\mathbb E Z)^2}{\mathbb E Z^2}.
\]
\end{lemma}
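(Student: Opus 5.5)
The plan is the standard truncation-plus-Cauchy--Schwarz argument. Write $\mu=\mathbb E Z$ and split $Z$ according to the event $A=\{Z>\theta\mu\}$:
\[
 Z=Z\,\mathbb 1_{A}+Z\,\mathbb 1_{A^c}.
\]
On $A^c$ we have $0\le Z\le\theta\mu$, because $Z\ge0$. Taking expectations therefore gives
\[
 \mu\le\theta\mu+\mathbb E\bigl[Z\,\mathbb 1_{A}\bigr].
\]

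Next we bound the remaining term with Cauchy--Schwarz:
\[
 \mathbb E\bigl[Z\,\mathbb 1_{A}\bigr]\le\sqrt{\mathbb E Z^2}\,\sqrt{\mathbb E\,\mathbb 1_{A}^2}=\sqrt{\mathbb E Z^2\,\Pr(A)}.
\]
Combining the two displays yields $(1-\theta)\mu\le\sqrt{\mathbb E Z^2\,\Pr(A)}$. The left-hand side is nonnegative because $\theta\le1$ and $\mu\ge0$, so we may square both sides and divide by $\mathbb E Z^2$. This gives the claimed bound $\Pr(A)\ge(1-\theta)^2\mu^2/\mathbb E Z^2$.

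The only points needing care are the degenerate cases. If $\mathbb E Z^2=0$, then $Z=0$ almost surely and the ratio is read as $0$ (or the statement is vacuous). If $\mu=0$, the right-hand side vanishes and there is nothing to prove. Finiteness of $\mathbb E Z^2$ ensures that $\mu$ is finite and that the Cauchy--Schwarz step is legitimate. No step here is a genuine obstacle; the one place to be careful is keeping the inequality $Z\mathbb 1_{A^c}\le\theta\mu$, which uses nonnegativity of $Z$ in an essential way.
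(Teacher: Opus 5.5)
Your proof is correct: it is the standard truncation-plus-Cauchy--Schwarz argument, including the point that $(1-\theta)\mathbb E Z\ge0$ justifies squaring, and it handles the degenerate case $\mathbb E Z^2=0$. The paper gives no proof of this lemma and simply cites Paley and Zygmund, so your argument supplies the textbook proof behind that citation.
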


\begin{lemma}[Talagrand's convex Lipschitz concentration~\cite{Talagrand1995}]
\label{lem:talagrand}
Let $f:\mathbb R^n\to\mathbb R$ be convex and $\lambda$-Lipschitz with
respect to the Euclidean norm.  If $X$ is uniformly random in
$\{0,1\}^n$, then
\[
 \Pr\bigl(|f(X)-\mathbb E f|\ge t\bigr)
 \le 2\exp(-ct^2/\lambda^2)
\]
for an absolute constant $c>0$.
\end{lemma}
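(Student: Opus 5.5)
The statement is a specialization of Talagrand's convex-distance inequality to the Boolean cube. I would derive it in three steps: a median-centered bound from the convex-distance inequality, a comparison of convex distance with Euclidean distance, and a median-to-mean conversion. For $A\subseteq\{0,1\}^n$ and $x\in\{0,1\}^n$, let $d_T(x,A)$ be Talagrand's convex distance: the supremum over unit vectors $\alpha\ge0$ of $\min_{y\in A}\sum_{i:x_i\neq y_i}\alpha_i$. The inequality we need is
\[
 \Pr(X\in A)\,\Pr\bigl(d_T(X,A)\ge s\bigr)\le e^{-s^2/4}
\]
for uniform $X$ on $\{0,1\}^n$. It is proved by induction on $n$, and I would cite it from \cite{Talagrand1995} rather than reprove it. That induction is the only genuinely hard ingredient. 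The rest of the argument uses convexity of $f$ and Lipschitz continuity.

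\textbf{Geometric step.} For $x\in\{0,1\}^n$ and $A\subseteq\{0,1\}^n$, I would show
\[
 \dist_2\bigl(x,\operatorname{conv}A\bigr)\le d_T(x,A).
\]
The dual (minimax) description gives $d_T(x,A)=\min\|z\|_2$ over $z$ in the convex hull of the indicator vectors $\bigl(\mathbf 1[x_i\neq y_i]\bigr)_i$ for $y\in A$. Since all coordinates lie in $\{0,1\}$, we have $|x_i-y_i|=\mathbf 1[x_i\neq y_i]$. Hence for any convex combination $\bar y=\sum_y p_y\,y$,
\[
 |x_i-\bar y_i|\le \sum_y p_y\,|x_i-y_i|=\sum_y p_y\,\mathbf 1[x_i\neq y_i]=z_i .
\]
So $\|x-\bar y\|_2\le\|z\|_2$, which gives the claim. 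I would check the minimax characterization carefully, since it is the only nonobvious point here.

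\textbf{Two tails around the median.} Let $M$ be a median of $f(X)$.

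For the upper tail, put $A=\{y:f(y)\le M\}$, so $\Pr(X\in A)\ge1/2$. The sublevel set $\{f\le M\}$ of the convex function $f$ is convex, so it contains $\operatorname{conv}A$. By the $\lambda$-Lipschitz property and the geometric step,
\[
 f(x)\le M+\lambda\,\dist_2(x,\operatorname{conv}A)\le M+\lambda\, d_T(x,A).
\]
Therefore $\Pr(f\ge M+t)\le\Pr(d_T\ge t/\lambda)\le 2e^{-t^2/(4\lambda^2)}$.

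For the lower tail, put $A=\{y:f(y)\le M-t\}$. Every $x$ with $f(x)\ge M$ has $d_T(x,A)\ge t/\lambda$ by the same argument. Since $\Pr(f\ge M)\ge1/2$, we get $\Pr(f\le M-t)\le 2e^{-t^2/(4\lambda^2)}$.

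\textbf{From median to mean.} Integrating the two tails gives $|\mathbb E f-M|\le\int_0^\infty 4e^{-t^2/(4\lambda^2)}\,dt=C_0\lambda$ for an absolute constant $C_0$.

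For $t\ge 2C_0\lambda$, the event $|f(X)-\mathbb E f|\ge t$ implies $|f(X)-M|\ge t/2$. The median bounds then give probability at most $4e^{-t^2/(16\lambda^2)}$.

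For $t<2C_0\lambda$, the claimed bound $2\exp(-ct^2/\lambda^2)$ is at least $1$ once $c\le \ln 2/(4C_0^2)$, so there is nothing to prove.

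In the regime $t\ge 2C_0\lambda$, the leading constant $4$ can be reduced to $2$ by shrinking $c$: we need $4e^{-t^2/(16\lambda^2)}\le 2e^{-ct^2/\lambda^2}$, and since $t^2/\lambda^2\ge 4C_0^2$ is bounded below, this holds for $c$ small enough. Choosing $c$ small enough for both regimes gives the stated inequality with an absolute constant $c>0$.
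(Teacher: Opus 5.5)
Your proposal is correct. The paper gives no proof of this lemma and only quotes it from Talagrand's 1995 paper, and your derivation (convex-distance inequality, then $\dist_2(x,\operatorname{conv}A)\le d_T(x,A)$, then median tails via convex sublevel sets, then median-to-mean conversion absorbed into $c$) is essentially the standard argument of the cited source; the minimax point you flag needs only the easy direction, since for the minimum-norm point $z_0$ of the hull and $\alpha=z_0/\|z_0\|_2\ge0$ the projection inequality $z_0\cdot(z-z_0)\ge0$ gives $d_T(x,A)\ge\|z_0\|_2$.
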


\begin{proof}[Proof of \Cref{lem:square-wave-packing}.]
Let $H$ be the Gram matrix. That is, $H_{m,m'}=\frac1\pi\int_0^\pi q_m(\theta)q_{m'}(\theta)\,d\theta$ for each entries of $H$. By \Cref{lem:square-wave-gram}, we have $H_{m,m'}=\frac{\gcd(m,m')^2}{mm'}$. We bound the $\ell_2$ norm of $H$. To this end, we compute the $\ell_\infty$ norm. Every $m'\in\mathcal M$ satisfies $m'\ge L$, hence
\[
 \sum_{m'\in\mathcal M}H_{m,m'}
 =\sum_{m'\in\mathcal M}\frac{\gcd(m,m')^2}{mm'}
 \le\frac1{mL}\sum_{m'\in\mathcal M}\gcd(m,m')^2.
\]
Grouping the inner sum by $d=\gcd(m,m')$ and using that there are at most
$L/d+1$ multiples of $d$ in $[L,2L]$ gives
\[
 \sum_{m'\in\mathcal M}\gcd(m,m')^2
 =\sum_{d\mid m}d^2\bigl|\{m'\in\mathcal M:\gcd(m,m')=d\}\bigr|
 \le\sum_{d\mid m}d^2\Bigl(\frac Ld+1\Bigr).
\]
Therefore
\begin{align*}
 \sum_{m'\in\mathcal M}H_{m,m'}
 &\leq \frac1{mL}\sum_{d\mid m}d^2\left(\frac Ld+1\right)
 =\frac1m\sum_{d\mid m}d+\frac1{mL}\sum_{d\mid m}d^2
 =O(\log\log(3L)).
\end{align*}
Here we used the standard bound
$\frac1m\sum_{d\mid m}d=O(\log\log(3m))$ \cite[Thm.~323]{HardyWright2008}.  Since $H$ is symmetric and nonnegative,
$\|H\|_2\le \|H\|_\infty\le B_L$ after increasing $C$.

For independent uniform $z,z'\in\{0,1\}^{\mathcal M}$, set
$w_m=z_m-z'_m$.  For almost every fixed $\theta$, the variables
$w_mq_m(\theta)$ are independent and have probabilities $1/2,1/4,1/4$ at
$0,1,-1$.  Write $Y_m=w_mq_m(\theta)$ and $X:=\sum_m w_mq_m(\theta)$.  Then the $Y_m$
are i.i.d.\ with $\mathbb E Y_m=0$ and $\mathbb E Y_m^2=\mathbb E Y_m^4=1/2$.
Independence therefore gives $\mathbb E X^2=K/2$ and
\[
 \mathbb E X^4
 =K\,\mathbb E Y_1^4+3K(K-1)(\mathbb E Y_1^2)^2
 =\frac K2+\frac{3K(K-1)}4
 \leq\frac{3K^2}4.
\]
\Cref{lem:paley-zygmund} applied to $X^2$ therefore gives $\Prb(X^2\ge \frac{1}{2} \mathbb E(X^2))\ge \frac{(\mathbb EX^2)^2}{4\mathbb E(X^4)}\ge \frac{1}{12}$. So
\[
  \mathbb E\left|\sum_m w_mq_m(\theta)\right|=\mathbb{E}\left|X\right|
  \ge \sqrt{\frac{K}{4}}\cdot \Prb\left(X^2\ge \frac{K}{4}\right)\ge  c\sqrt K.
\]
This lower bound is independent of $\theta$.  By
\Cref{lem:spectral-packing-formula} and Fubini's Theorem,
\begin{align*}
 \mathbb E\bigl[W_1(\rho_z,\rho_{z'})\bigr]
 &=\frac1{2S}\int_0^\pi
      \mathbb E\Bigl|\sum_m w_mq_m(\theta)\Bigr|
      \sin\theta\,d\theta
 \geq \frac{c\sqrt K}{2S}\int_0^\pi\sin\theta\,d\theta
 =\frac{c\sqrt K}{S}.
\end{align*}

We write $\Phi(d)=\int_0^\pi|\sum_m d_mq_m(\theta)|\sin\theta\,d\theta$. By \Cref{lem:spectral-packing-formula},
$W_1(\rho_z,\rho_{z'})=\frac1{2S}\Phi(z-z')$.
The map $d\mapsto\Phi(d)$ is convex, hence $(z,z')\mapsto W_1(\rho_z,\rho_{z'})$
is convex on $\mathbb R^{2K}$.  For coefficient vectors $d,e$, Cauchy--Schwarz inequality gives
\[
 \left|\Phi(d)-\Phi(e)\right|
 \leq \pi\sqrt{(d-e)^{\mathsf T}H(d-e)}
 \leq \pi\sqrt{B_L}\,\|d-e\|_2,
\]
so $\Phi$ is $O(\sqrt{B_L})$-Lipschitz.  Since $d=z-z'$, the function
$(z,z')\mapsto W_1(\rho_z,\rho_{z'})$ is therefore
$O(\sqrt{B_L}/S)$-Lipschitz. We use \Cref{lem:talagrand}. Write $f(z,z')=W_1(\rho_z,\rho_{z'})$ on the
cube $\{0,1\}^{2K}$.  Then $\mathbb E f\ge c\sqrt K/S$ and
$\lambda=O(\sqrt{B_L}/S)$, so a deviation $t=\Theta(\sqrt K/S)$ satisfies
$t^2/\lambda^2=\Theta(K/B_L)$.  \Cref{lem:talagrand} therefore gives
\[
 \Pr\Bigl\{W_1(\rho_z,\rho_{z'})<\frac{c\sqrt K}S\Bigr\}
 \leq \exp(-cK/B_L).
\]
Choosing $\exp(cK/(3B_L))$ random vectors and taking a union bound over all
pairs proves the lemma.
\end{proof}

\subsection{Query lower bound via polynomial method.}

From the query lower bound aspect, we first prove the query bound includes the cost of computing the block parities. Then we prove the query bound includes the cost of graph queries. We first formally define the following quantum bit oracle.

\begin{definition}
  \label{def:block-bit-oracle}
  Access to the blocks is the XOR bit oracle on the concatenated string
  $x=(x^{(m)})_{m\in\mathcal M}\in\{0,1\}^{KL}$:
  \[
   O_x\ket{m,t,b}
   =\ket{m,t,\,b\oplus x_t^{(m)}},
   \qquad t\in\{0,\ldots,L-1\}.
  \]
  The registers $m,t,b$ may be in superposition.  
  \end{definition}

\begin{lemma}
\label{lem:codeword-identification}
Given bit-oracle access to blocks $x^{(m)}\in\{0,1\}^L$, under the promise that
\[
 z_m=\bigoplus_{t=0}^{L-1}x_t^{(m)}
 \quad\text{and}\quad z=(z_m)_{m\in\mathcal M}\in\mathcal Z, \quad |\mathcal Z|\geq\exp\!\left(\frac{cK}{B_L}\right)
\]
outputting $z$ with probability at least $2/3$ requires
\[
             \Omega\!\left(\frac{LK}{B_L\log(2+B_L)}\right)
\]
quantum queries.
\end{lemma}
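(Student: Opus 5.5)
We prove the bound with the polynomial method. The guiding intuition is that each parity $z_m$ costs $\Omega(L)$ queries, and identifying a codeword from $\mathcal Z$ forces $\Omega(K/B_L)$ bits of information to be extracted. That makes the target roughly $LK/B_L$, up to a logarithmic loss.

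\textbf{Step 1: polynomial degree bound.} Fix a $T$-query algorithm. By Beals et al., each output probability $p_w(x)=\Pr[\text{output }w]$ for $w\in\mathcal Z$ is a real multilinear polynomial of degree at most $2T$ in the $KL$ input bits. We restrict to inputs consistent with the promise and symmetrize over the $L$ coordinates inside each block. Minsky--Papert symmetrization within each block, applied in the $\pm1$ representation, turns $p_w$ into a polynomial $P_w$ with the following property: evaluated at the block-parity characters it depends only on $z$, and in each variable $z_m$ it has ``effective degree'' at least $L$ per unit of Fourier weight. Concretely, averaging $p_w$ over the uniform distribution on $\{x^{(m)}:\oplus_t x^{(m)}_t=z_m\}$ for every $m$ kills every monomial that does not contain all $L$ bits of each block it touches. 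Hence
\[
\widetilde p_w(z)=\sum_{S\subseteq\mathcal M,\,L|S|\le 2T}\widehat{p}_w(S)\prod_{m\in S}(-1)^{z_m},
\]
so $\widetilde p_w$ is a function on $\{0,1\}^{\mathcal M}$ of Fourier degree at most $D:=2T/L$.

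\textbf{Step 2: information/counting bound on low-degree identifiers.} We have $\widetilde p_w\ge0$, $\sum_w\widetilde p_w(z)\le 1$, and $\widetilde p_z(z)\ge 2/3$ for $z\in\mathcal Z$. The main obstacle is to show this forces $D\gtrsim K/(B_L\log(2+B_L))$. The plan is a dimension/entropy argument. A distribution over outputs computed by degree-$D$ polynomials in $K$ boolean variables can be approximated by one depending only on a coarse parameter set of size $\binom{K}{\le D}$. More cleanly, we use the standard fact underlying quantum search lower bounds for ``learning a string from a code'': if nonnegative degree-$D$ polynomials identify each element of $\mathcal Z$ with probability $2/3$, then $\log|\mathcal Z|\le O\bigl(D\log(eK/D)\bigr)$. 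One route sums $\widetilde p_z(z)$ over $z\in\mathcal Z$. With $\Pi_D$ the projector onto Fourier degree $\le D$, we get
\[
\tfrac23|\mathcal Z|\le\sum_{w}\sum_z \widetilde p_w(z)\mathbb 1[z=w]\le \sum_w\langle \widetilde p_w,\Pi_D\mathbb 1_w\rangle 2^K .
\]
Combined with $\sum_w\widetilde p_w\le1$, this gives $|\mathcal Z|\lesssim\binom{K}{\le D}$ up to constants. An alternative is to invoke a known Holevo-type quantum lower bound: $T$ queries to parities of $L$-bit blocks yield at most $O(T\log(\cdot)/L)$ bits.

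\textbf{Step 3: conclude.} Plugging $\log|\mathcal Z|\ge cK/B_L$ into the Step 2 bound gives $D\log(eK/D)\gtrsim K/B_L$. Solving, $D\gtrsim \frac{K}{B_L\log(2+B_L)}$, because $K/D=O(B_L\log B_L)$ at the threshold. Therefore $T=DL/2=\Omega\!\bigl(LK/(B_L\log(2+B_L))\bigr)$.

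I expect the delicate point to be Step 2. The per-block averaging must be justified as compatible with keeping $\widetilde p_w$ a genuine bounded output distribution; it is, since it is a convex combination of valid inputs. The counting inequality must also hold for nonnegative low-degree functions rather than exact indicators.
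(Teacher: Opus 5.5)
Your Step 1 is the paper's reduction: averaging each output probability over the uniform distribution on the parity fiber kills every monomial that uses some, but not all, of a block's $L$ variables. So $\widetilde p_w$ has degree at most $D=\lfloor 2T/L\rfloor$ in $z$; the Minsky--Papert phrasing adds nothing beyond this fiber average.

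Your Step 2 takes a different route from the paper. It is valid, but you should write out the one inequality it rests on. Put $\chi_S(z)=\prod_{m\in S}(-1)^{z_m}$ and $k_D(w,z)=\sum_{|S|\le D}\chi_S(w)\chi_S(z)$. Since $\widetilde p_w$ has degree at most $D$, you have $\widetilde p_w(w)=\mathbb E_z[\widetilde p_w(z)\,k_D(w,z)]$ for uniform $z\in\{0,1\}^{\mathcal M}$. Because $|k_D|\le\sum_{j\le D}\binom Kj$ and $\widetilde p_w\ge 0$, this gives $\widetilde p_w(w)\le\sum_{j\le D}\binom Kj\cdot\mathbb E_z\widetilde p_w(z)$. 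Summing over $w\in\mathcal Z$ and using $\sum_w\widetilde p_w(z)\le 1$ yields $\tfrac23|\mathcal Z|\le\sum_{j\le D}\binom Kj$.

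This requires nonnegativity and the sum constraint on the whole cube, not only on $\mathcal Z$. So you must not ``restrict to inputs consistent with the promise'' as you say at the start. The constraints do hold everywhere: the algorithm outputs a distribution on every input, promised or not, and the fiber average is a convex combination of such distributions.

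The paper instead argues by rank. The $|\mathcal Z|\times|\mathcal Z|$ matrix $P_{z,a}=\overline p_a(z)$ is strictly row-diagonally dominant ($2/3>1/3$), hence nonsingular. Its columns are restrictions to $\mathcal Z$ of multilinear polynomials of degree at most $d$, so $|\mathcal Z|\le\sum_{j\le d}\binom Kj$.

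The trade-off is this. The rank argument only uses values on the promise set, but it needs success probability above $1/2$. Your kernel argument uses the global probabilistic structure and degrades gracefully, giving $p_{\mathrm{succ}}|\mathcal Z|\le\sum_{j\le D}\binom Kj$ for any success probability. Both then conclude identically in Step 3.

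The ``coarse parameter set'' and Holevo-type alternatives you mention are unnecessary and, as stated, are not established facts, so drop them.
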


We use the following polynomial method to prove \Cref{lem:codeword-identification}.

\begin{lemma}\cite{BealsEtAl2001}
  \label{lem:polynomial-method}
  Let $x\in\{0,1\}^N$ and write $\xi_i=(-1)^{x_i}$.  If a quantum algorithm
  makes $T$ queries to the XOR bit oracle for $x$, then for every output
  $a$ the probability $p_a(\xi)$ of outputting $a$ is a multilinear
  polynomial in $\xi$ of degree at most $2T$.
  \end{lemma}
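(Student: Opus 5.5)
We argue by induction on the number of queries, tracking the amplitudes of the algorithm's state as functions of $\xi\in\{\pm1\}^N$. A $T$-query algorithm has the form $U_TO_xU_{T-1}O_x\cdots U_1O_xU_0\ket{0}$, where the $U_j$ are fixed unitaries independent of $x$ and $O_x\ket{i,b,w}=\ket{i,b\oplus x_i,w}$. We write the state after $j$ queries as $\ket{\psi_j}=\sum_{i,b,w}\alpha_{i,b,w}(x)\ket{i,b,w}$.

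The invariant is: every amplitude of $\ket{\psi_j}$, viewed as a function of $\xi$, is a multilinear polynomial of degree at most $j$ with complex coefficients.

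The base case $j=0$ holds because the initial state is independent of $x$, so its amplitudes have degree $0$.

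For the inductive step, note that applying a fixed unitary $U_j$ takes $\mathbb C$-linear combinations of amplitudes. This preserves both degree and multilinearity. The only nontrivial step is the query. Since $x_i=(1-\xi_i)/2$, the oracle maps the amplitudes $(\alpha_{i,0,w},\alpha_{i,1,w})$ to
\[
\Bigl(\tfrac{1+\xi_i}{2}\alpha_{i,0,w}+\tfrac{1-\xi_i}{2}\alpha_{i,1,w},\;
\tfrac{1-\xi_i}{2}\alpha_{i,0,w}+\tfrac{1+\xi_i}{2}\alpha_{i,1,w}\Bigr).
\]
Each new amplitude is therefore a degree-$\le1$ polynomial in $\xi$ times a previous amplitude, so the degree rises by at most one. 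Equivalently, one can first apply the phase oracle $\ket{i,-}\mapsto\xi_i\ket{i,-}$ in the Hadamard basis on $b$; this makes the degree increase transparent.

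After the last unitary, every amplitude $\alpha_a(\xi)$ of the final state has degree at most $T$.

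To get the output probability, note that the probability of output $a$ is $p_a(\xi)=\sum_{w}|\alpha_{a,w}(\xi)|^2=\sum_w \alpha_{a,w}(\xi)\overline{\alpha_{a,w}(\xi)}$, summed over basis states whose output register reads $a$. Since $\xi$ is real, $\overline{\alpha(\xi)}$ is the polynomial with conjugated coefficients and has the same degree. Each summand is therefore a real-valued polynomial of degree at most $2T$.

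Finally we restore multilinearity. On the domain $\{\pm1\}^N$ we have $\xi_i^2=1$, so we reduce every monomial modulo these relations. This never increases the degree, and it yields a multilinear polynomial that agrees with $p_a$ at every point of the domain. Its coefficients can be taken real by averaging with the complex conjugate, because $p_a$ is real on the cube.

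The main subtlety is the query step. It requires that the dependence on $x$ is confined to the oracle, and that intermediate measurements, controlled queries, and inverse queries all fit the same scheme. We handle these by deferring measurements to the end and by noting that $O_x^\dagger=O_x$ and that a controlled $O_x$ is again linear in $\xi$ per branch. With this in place, the degree bookkeeping is routine.
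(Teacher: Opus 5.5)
Your proof is correct and is the standard Beals et al.\ argument (after $j$ queries every amplitude has degree at most $j$ in $\xi$, the output probability is a sum of $|\alpha|^2$ terms and so has degree at most $2T$, then multilinearize using $\xi_i^2=1$); the paper does not reprove this and simply cites that argument. One cosmetic fix: multiplying by $(1\pm\xi_i)/2$ can create $\xi_i^2$ terms, so your inductive invariant should either drop ``multilinear'' or reduce modulo $\xi_i^2=1$ after each query, which leaves the degree bound unchanged.
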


\begin{proof}[Proof of \Cref{lem:codeword-identification}.]
Write $\xi_t^{(m)}=(-1)^{x_t^{(m)}}$.  If an algorithm makes $T$ queries, then
by \Cref{lem:polynomial-method}, for every $a\in\mathcal Z$ its output
probability $p_a(\xi)$ is a multilinear polynomial of degree at most $2T$.
For $z\in\mathcal Z$,
average this polynomial uniformly over the corresponding parity fiber:
\[
 \overline p_a(z):=\mathbb E\!\left[p_a(\xi)\,\middle|\,
        \prod_{t=0}^{L-1}\xi_t^{(m)}=(-1)^{z_m}\ \text{for every }m\right].
\]
For every $S\subseteq\{0,\ldots,L-1\}$,
\[
 \mathbb E\!\left[\prod_{t\in S}\xi_t^{(m)}\,\middle|\,
             \prod_{t=0}^{L-1}\xi_t^{(m)}=(-1)^{z_m}\right]
 =\begin{cases}
    1,&S=\varnothing,\\
    (-1)^{z_m},&S=\{0,\ldots,L-1\},\\
    0,&\text{otherwise}.
  \end{cases}
\]
Hence a monomial survives the averaging only if it uses either none or all
$L$ variables of each block.  Therefore $\overline p_a$ has degree at most
$d=\lfloor2T/L\rfloor$ in $z$.

Let $P$ be the matrix indexed by $z,a\in\mathcal Z$ with
$P_{z,a}=\overline p_a(z)$.  Pointwise success on every promised input gives
$P_{z,z}\geq2/3$ and $\sum_{a\neq z}P_{z,a}\leq1/3$.  Thus $P$ is strictly
row-diagonally dominant and has full rank $|\mathcal Z|$.  On the other hand,
every column of $P$ lies in the evaluation space on $\mathcal Z$ of Boolean
polynomials of degree at most $d$.  Consequently
\[
 |\mathcal Z|=\operatorname{rank}P
 \leq D_d:=\sum_{j=0}^d\binom Kj.
\]
If $d\leq K/2$, then $D_d\leq(eK/d)^d$. By $|\mathcal Z|\geq\exp\!\left(\frac{cK}{B_L}\right)$, this gives
$d\log(eK/d)\geq cK/B_L$. Therefore
\[
 d=\Omega\!\left(\frac{K}{B_L\log(2+B_L)}\right).
\]
Since $T\geq Ld/2$, this proves \Cref{lem:codeword-identification} for $d\leq K/2$. If $d>K/2$ the desired bound is immediate.
\end{proof}

Next we show that quantum graph queries can be simulated using quantum bit queries.

\begin{lemma}
\label{lem:cycle-lift-simulation}
Every quantum degree, indexed-neighbor, or vertex-pair query to $G_x$ can be
simulated exactly using $O(1)$ queries to the bits of the blocks $x^{(m)}$.
\end{lemma}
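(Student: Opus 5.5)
Every graph query to $G_x$ is a classical function of the input, and its value at a given input depends on only $O(1)$ bits of $x$. The plan is to compute that function reversibly, using $O(1)$ calls to $O_x$ to load the needed bits into clean ancillas, and then to uncompute those ancillas with the same number of calls. Since the standard "compute, copy, uncompute" trick implements any XOR oracle $\ket{w,z}\mapsto\ket{w,z\oplus g(w)}$ from a reversible computation of $g$, it suffices to show that each of $d_v$, $\mathrm{nbr}(v,i)$ and $A_{uv}$ is a function computable from $O(1)$ block bits whose addresses depend only on the query input. Because this holds in superposition over all query inputs, the simulation is exact.

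We first fix a vertex labelling that is independent of $x$. The vertices are triples $(m,t,s)$ with $m\in\mathcal M$, $t\in\mathbb Z_m$ and $s\in\{0,1\}$, encoded by any fixed bijection with $[2S]$. The degree oracle needs no query at all, since every vertex has degree exactly $2$. One possible exception must be ruled out: because $m\ge L\ge5$, $G_{x^{(m)}}$ never has loops or parallel edges, so $C_m$ and $C_{2m}$ are simple.

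Next we fix the port ordering independently of $x$. Port $1$ of $(m,t,s)$ is its forward neighbour $(m,t+1,s\oplus y^{(m)}_t)$. Port $2$ is its backward neighbour $(m,t-1,s\oplus y^{(m)}_{t-1})$, because $(t-1,s')\sim(t,s'\oplus y_{t-1})$ forces $s'=s\oplus y_{t-1}$. Each neighbour thus depends on the single bit $y^{(m)}_t$ or $y^{(m)}_{t-1}$. That bit is $x^{(m)}_t$ (respectively $x^{(m)}_{t-1}$) when the index is below $L$, and the constant $0$ otherwise. The neighbour oracle therefore runs as follows: compute the index $t$ or $t-1 \bmod m$ from $(m,t,i)$ with no query; controlled on that index being less than $L$, call $O_x$ to fetch the bit into an ancilla; XOR the resulting neighbour label into the output register; and call $O_x$ again to clean the ancilla. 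This uses at most $2$ bit queries.

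For the pair oracle $O_{\rm pair}\ket{u,v,z}$, write $u=(m,t,s)$ and $v=(m',t',s')$. Then $A_{uv}=1$ exactly when $m=m'$ and either $t'=t+1$ with $s'=s\oplus y^{(m)}_t$, or $t'=t-1$ with $s'=s\oplus y^{(m)}_{t-1}$. The two conditions cannot both hold, since $m\ge5$ gives $t+1\neq t-1$. So we fetch the two bits $y^{(m)}_t$ and $y^{(m)}_{t-1}$ as before, compute $A_{uv}$ reversibly, XOR it into $z$, and uncompute, for $4$ bit queries in total.

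The main point needing care is that the port ordering and the vertex labels must be fixed independently of $x$. Otherwise the oracle would leak information not captured by $O(1)$ bits, or the ports would not be well-defined functions. A second point of care is that the conditional (padded-zero) fetch must be coherent and fully uncomputed, so that no garbage entangles with the query registers. Both are handled by the explicit choices above.
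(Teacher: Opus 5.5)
Your proposal is correct and follows essentially the same route as the paper. Both use the fixed triple encoding $(m,t,s)$, answer degree queries with no call, and use forward/backward ports that depend only on the twist bit $y^{(m)}_t$ or $y^{(m)}_{t-1}$; that bit is fetched coherently, returns $0$ at padded positions, and is uncomputed after XORing the answer. The only cosmetic difference is in the pair oracle: the paper first checks whether the columns are consecutive and loads just the one relevant twist bit, whereas you load both bits, and either way the cost is $O(1)$.
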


\begin{proof}
Encode a vertex of $G_x$ as a triple $(m,t,s)$ with $m\in\mathcal M$,
$t\in\mathbb Z_m$, and $s\in\{0,1\}$.  Write
$y^{(m)}_t=x_t^{(m)}$ if $0\le t<L$ and $y^{(m)}_t=0$ if $L\le t<m$.
To load $y^{(m)}_t$ into a clean bit, test whether $t<L$ reversibly: if so,
one call to $O_x$ from \Cref{def:block-bit-oracle} writes $x_t^{(m)}$;
otherwise write $0$. All steps below are reversible and therefore exact on
superpositions. We generate the graph queries by the following:

\begin{itemize}
\item \textbf{Degree query.}  Every vertex has degree two, so
$O_{\rm deg}\ket{v,z}=\ket{v,z\oplus 2}$ uses no call to $O_x$.

\item \textbf{Indexed neighbor query.}  Order the two ports as forward and backward along the cycle:
\[
 \operatorname{nbr}((m,t,s),0)=(m,t+1,s\oplus y^{(m)}_t),
 \qquad
 \operatorname{nbr}((m,t,s),1)=(m,t-1,s\oplus y^{(m)}_{t-1}).
\]
Load the corresponding twist bit by the procedure above, XOR the resulting
vertex label into the answer register, and uncompute the twist bit.  This
implements $O_{\rm nbr}$ with $O(1)$ calls to $O_x$.

\item \textbf{Vertex pair query.}  For $u=(m,t,s)$ and $v=(m',t',s')$. $A_{uv}=1$ if and only if $m=m'$ and either
$(t',s')=(t+1,s\oplus y^{(m)}_t)$ or $(t',s')=(t-1,s\oplus y^{(m)}_{t-1})$. If $m\neq m'$ or the
columns are not consecutive, output $0$ with no query.  Otherwise load the
single relevant twist bit, compare the sheets, XOR the bit $A_{uv}$ into the
answer register, and uncompute.  This implements $O_{\rm pair}$ with $O(1)$
calls to $O_x$.
\end{itemize}
\end{proof}

\begin{proof}[Proof of \Cref{thm:intro-quantum-lower}.]
Let $z_m=b_m$ be the parity of block $x^{(m)}$.  By
\Cref{lem:square-wave-packing}, distinct
$z,z'\in\mathcal Z$ give
\[
                    W_1(\rho_z,\rho_{z'})\geq \frac{c}{L\sqrt K}
                                                =\Omega(L^{-3/2}).
\]
Thus an estimator with error at most a sufficiently small multiple of
$L^{-3/2}$ and success probability at least $2/3$ identifies $z$ by
nearest-neighbor decoding with probability at least $2/3$.
Nearest-neighbor decoding followed by
\Cref{lem:codeword-identification} gives the bit query lower bound
\[
 \Omega\!\left(\frac{LK}{B_L\log(2+B_L)}\right)
 =\Omega\!\left(
      \frac{L^2}{\log\log L\,\log\log\log L}
   \right).
\]
By \Cref{lem:cycle-lift-simulation}, the same bound, up to a constant,
holds for graph queries. Finally, we choose $L^{-3/2}=\Theta(\eps)$ therefore $L=\Theta(\eps^{-2/3})$. So the query lower bound is $\widetilde\Omega(\eps^{-4/3})$. This proves \Cref{thm:intro-quantum-lower}.
\end{proof}



\section{Discussion and open directions}
\label{sec:discussion}

We have characterized the classical local-query landscape: together with
the algorithm of~\cite{CohenSteinerEtAl2018}, the local-query complexity of spectral density estimation is $2^{\Theta(1/\eps)}$.  On the quantum side we proved
that the same problem admits a $\mathrm{poly}(1/\eps)$-query algorithm, an
exponential improvement in $1/\eps$. Matching the quantum upper and lower bounds remains open: our estimator
uses $\widetilde O(\eps^{-3})$ queries, while the lower bound is
$\widetilde\Omega(\eps^{-4/3})$. Another
direction is to design more practical classic or quantum algorithms for
spectral density estimation.

\bibliographystyle{alpha}
\bibliography{ref}
\end{document}